\NeedsTeXFormat{LaTeX2e}

\documentclass{new_tlp}

\makeatletter
\let\O@argtabularcr\@argtabularcr
\def\O@xtabularcr{\@ifnextchar[\O@argtabularcr{\ifnum 0=`{\fi}\cr}}
\let\O@tabacol\@tabacol
\let\O@tabclassiv\@tabclassiv
\let\O@tabclassz\@tabclassz
\let\O@tabarray\@tabarray
\def\author@tabular{\authorsize\def\@halignto{}\@authortable}
\let\endauthor@tabular=\endtabular
\def\author@tabcrone{{\ifnum0=`}\fi\O@xtabularcr\affilsize\itshape
 \let\\=\author@tabcrtwo\ignorespaces}
\def\author@tabcrtwo{{\ifnum0=`}\fi\O@xtabularcr[-3\p@]\affilsize\itshape
 \let\\=\author@tabcrtwo\ignorespaces}
\def\@authortable{\leavevmode \hbox \bgroup $\let\@acol\O@tabacol
 \let\@classz\O@tabclassz \let\@classiv\O@tabclassiv
 \let\\=\author@tabcrone \ignorespaces \O@tabarray}
\makeatother

\DeclareMathAlphabet\mathbfcal{OMS}{cmsy}{b}{n}

\usepackage[ruled]{algorithm2e}
\usepackage{thmtools}
\usepackage{fancybox}
\usepackage{times}
\usepackage{amssymb}
\usepackage{amsfonts}
\usepackage{algorithmic}
\usepackage{color}
\usepackage{listings} 


\newcommand{\eat}[1]{}

\usepackage{latexsym}
\usepackage{amsfonts}
\usepackage{amsmath}
\usepackage{amssymb}
\usepackage{color}
\usepackage{colortbl}
\usepackage{epsfig}
\usepackage{xspace}
\usepackage{graphicx}
\usepackage{subfigure}
\usepackage{enumerate}
\usepackage{comment}
\usepackage[all]{xy}


\usepackage{epsfig}
\usepackage{multirow}
\usepackage{hhline}
\usepackage{url}
\usepackage[table]{xcolor}

\sloppy

\newcommand{\bi}{\begin{itemize}}
\newcommand{\ei}{\end{itemize}}

\newcommand{\be}{\begin{enumerate}}
\newcommand{\ee}{\end{enumerate}}
\newcommand{\beqn}{\begin{eqnarray*}}
\newcommand{\eeqn}{\end{eqnarray*}}

\newcommand{\stitle}[1]{\vspace{0.6ex}\noindent{\bf #1}}

\newcommand{\ie}{\emph{i.e.,}\xspace}
\newcommand{\eg}{\emph{e.g.,}\xspace}

\newcommand{\wrt}{\emph{w.r.t.}\xspace}

\newcommand{\kwlog}{\emph{w.l.o.g.}\xspace}
\newcommand{\kWlog}{\emph{W.l.o.g.}\xspace}




\newlength\sindent
\setlength\sindent{1.8em}

\newcommand{\kw}[1]{{\ensuremath {\mathsf{#1}}}\xspace}

\newcommand{\cf}{\kw{cf.}}

\newcounter{ccc}



\newcommand{\eop}{\hspace*{\fill}\mbox{$\Box$}\vspace{1ex}}     

\newcommand{\nthesection}{\arabic{section}}

\newcounter{alg}[section]
\renewcommand{\thealg}{\nthesection.\arabic{alg}}

\newcounter{arule}
\renewcommand{\thearule}{\arabic{arule}}

\newcounter{claim}
\renewcommand{\theclaim}{\arabic{claim}}

\newtheorem{definition}{Definition} 
\newtheorem{example}{Example} 
\newtheorem{proposition}{Proposition} 
\newtheorem{lemma}{Lemma} 
\newtheorem{conjecture}{Conjecture} 
\newtheorem{corollary}{Corollary} 
\newtheorem{theorem}{Theorem} 

\begin{document}
\bibliographystyle{acmtrans}

\newcommand{\todo} [1]{\textcolor{blue}{{\sf TODO}: #1}}


\title{A Datalog-based Computational Model  for Coordination-free,  Data-Parallel Systems}

\author[M.Interlandi and L.Tanca]
{Matteo Interlandi\thanks{Work partially done while at University of California, Los Angeles.}\\\textit{Microsoft}
\\\textit{E-mail: mainterl@microsoft.com}
\and Letizia Tanca
\\\textit{Politecnico di Milano}
\\\textit{E-mail: letizia.tanca@polimi.it}
}

\pagerange{\pageref{firstpage}--\pageref{lastpage}}
\volume{\textbf{10} (3):}
\jdate{March 2002}
\setcounter{page}{1}
\pubyear{2002}

\maketitle

\label{firstpage}

\begin{abstract}

\emph{Cloud computing} refers to maximizing efficiency by sharing computational and storage resources, while \emph{data-parallel systems} exploit the  resources available in the cloud to perform parallel transformations over large amounts of data. In the same line, considerable emphasis has been recently given to two apparently disjoint research topics: \emph{data-parallel}, and \emph{eventually consistent, distributed} systems.

\emph{Declarative networking} has been recently proposed to ease the task of programming in the cloud, by allowing the programmer to express only the desired result and leave the implementation details to the responsibility of the run-time system. In this context,  we deem it appropriate to propose a study on a \emph{logic-programming-based computational model} for  eventually consistent, data-parallel systems, the keystone of which is provided by the recent finding that the class of programs  that can be computed in an eventually consistent, coordination-free way is that of \emph{monotonic programs}. This principle is called CALM and has been proven by Ameloot et al. [2013] for distributed, asynchronous settings.

We advocate that CALM should be employed as a basic theoretical tool also for data-parallel systems, wherein computation usually proceeds synchronously in rounds and where communication is assumed to be reliable.
We deem this problem relevant and interesting, especially for what concerns \emph{parallel dataflow optimizations}. Nowadays we are in fact witnessing an increasing concern about  understanding which properties distinguish synchronous from asynchronous parallel processing, and when the latter can replace the former. 
It is general opinion that coordination-freedom can be seen as a major discriminant factor.

In this work we make the case that the current form of CALM does not hold in general for data-parallel systems,
 and show how, using novel techniques, the satisfiability of the CALM principle can still be obtained although  just for the subclass of programs called \emph{connected monotonic queries}.
We complete the study with considerations on the relationships between our model and the one employed by Ameloot et al., showing 
 that our techniques subsume the latter when the synchronization constraints imposed on the system are loosened. 
 
 Under consideration in Theory and Practice of Logic Programming (TPLP).

\end{abstract}
\begin{keywords}
Declarative networking, Datalog, Relational transducer, CALM conjecture,  Bulk synchronous parallel systems.
\end{keywords}

\section{Introduction}

Recent research has explored ways to exploit different levels of \emph{consistency} in order to improve the performance of distributed systems \wrt specific tasks and network configurations while preserving correctness \cite{Vogels:2009:EC:1435417.1435432,DeCandia:2007:DAH:1323293.1294281,Brewer:2000:TRD:343477.343502}.
A topic strictly related to consistency is \emph{coordination}, usually informally interpreted as a mechanism to accomplish a distributed agreement on some system property \cite{FaginHMV03}.
Indeed, coordination can be used to enforce consistency when, in the natural execution of a system, the latter is not guaranteed. 
%

In this paper we set forth a logic-programming-based framework to express database queries and study some theoretical problems springing from the use of eventually consistent, coordination-free computation over \emph{synchronous systems with reliable communication} (\emph{rsync} in short). 
\emph{Rsync} is a common setting in modern data-parallel frameworks such as MapReduce \cite{DeanG20}, Pregel \cite{Malewicz:2010:PSL:1807167.1807184}, and Apache Spark \cite{ZahariaCD12}, where
 computation is commonly performed in \emph{rounds}, and each task is blocked and cannot start the new round until a \emph{synchronization barrier} is reached, \ie every other task has completed its local computation. 
\eat{In this work 
we consider synchronization (barrier) and coordination as two different, although related entities: the former is a \emph{mechanism} enforcing the \emph{rsync} model; the latter a \emph{property of executions}.}
Identifying under what circumstances eventually consistent, coordination-free computation can be performed over \emph{rsync} systems would enable the introduction of  novel execution plans, no longer restricted
by predefined (synchronous) patterns. For example, coordination-free programs can be divided into independent sub-units that can be run concurrently: a property known as \emph{decomposability}~\cite{DBLP:conf/sigmod/WolfsonS88}. 
While our recent work~\cite{Shkapsky:2016:BDA:2882903.2915229} implements the generalized pivoting technique~\cite{generalizedpivoting} by which a decomposable plan can be identified from  simple syntactic analysis of the program(s), still no semantics study exists on the matter.


Our aim is therefore to understand \emph{in what generic circumstances a synchronous ``blocking'' computation is actually required by the program semantics -- and therefore must be strictly enforced by the system -- and when, instead, an asynchronous execution can be performed as optimization}.
Recently, the class of programs that can be computed in an eventually consistent, coordination-free way has been identified: \emph{monotonic programs} \cite{Hellerstein10};
this property is called \emph{CALM} (Consistency and Logical Monotonicity) and has been proven in \cite{AmelootNB13}. 
While CALM was originally proposed to simplify the specification of distributed (asynchronous) data management systems, 
in this paper we advocate that CALM should be employed as a basic theoretical tool also for the declarative specification of data-parallel (synchronous) systems. 
As a matter of fact, CALM permits to link a property of the execution (coordination-freedom) to a class of  programs, i.e. monotonic queries.
But to which extent does CALM  hold over data-parallel systems?
Surprisingly enough, 
with the communication model and the notion of coordination as defined in \cite{AmelootNB13}, the CALM principle does not hold in general in \emph{rsync} settings, the main reason being that the proposed definition of coordination is too weak to capture the type of coordination ``baked" into the synchronization barrier of \emph{rsync} systems  (\cf Example \ref{ex:false_calm}). In this paper we first characterize such type of coordination, then we study \emph{to which extent the ``synchronization barrier" creates coordination}, and  finally we devise additional forms of coordination patterns.

To reach our goal, we develop a new \emph{generic parallel computation model}, leveraging previous works on logic-based  \emph{relational transducers} \cite{AbiteboulVFY00} and \emph{transducer networks} \cite{AmelootNB13}, and grounding \emph{rsync} computation on the well-known \emph{Bulk Synchronous Parallel (BSP)} model \cite{Valiant90}. 
With BSP,  
computation proceeds in a series of global rounds, each comprising three phases: $(i)$ a \emph{computation phase}, in which nodes concurrently perform local computations; $(ii)$ a \emph{communication phase}, in which data is exchanged among the nodes; and $(iii)$ the synchronization barrier.
Exploiting this new type of transducer network, equipped with a content-based addressing model, we then show that the CALM principle is in general satisfied for BSP-style systems \emph{under a new  definition of coordination-freedom}, although, surprisingly enough, just for a subclass of monotonic queries, \ie the \emph{connected monotonic queries} (\cf Definition \ref{df:connected}).
When defining coordination-freedom, we will take advantage of recent results 
describing how knowledge can be acquired in synchronous systems \cite{Ben-Zvi:2014:BLH:2605175.2542181}.
As a final outcome, a series of coordination patterns -- and related classes of characteristic queries -- is identified,
and we will discuss how these coordination patterns behave under BSP and weaker synchronous settings.
As a corollary, we show that the new  definition of coordination-freedom subsumes the one employed in \cite{AmelootNB13}.

\stitle{Contributions}:
Summarizing, the contributions of the paper are as follows:
\begin{enumerate}
\item 
The only-if direction of the CALM principle (namely that only monotonic queries can be computed in a coordination-free way) is proven not to hold in general for \emph{rsync} systems  (\cf Example \ref{ex:false_calm} in Section~\ref{sec:co_co});
\item A novel, logic-programming-based computational model is introduced that emulates common patterns found in modern data-parallel frameworks (Section~\ref{sec:network_independence});

\item A new definition of coordination is proposed, leveraging recent results on knowledge acquisition in \emph{rsync} systems (Section~\ref{sec:refine_coordination}); 

\item Exploiting the new techniques, the CALM principle is proven to hold for  \emph{connected monotonic queries} in \emph{rsync} systems with bounded delay and deterministic data delivery (Theorem 1 in Section~\ref{sec:sync_calm}); 

\item A complete taxonomy of queries is provided that permits the identification of different types of coordination patterns (Section~\ref{sec:sync_calm}); and

\item The definition of coordination previously introduced in \cite{AmelootNB13} is shown to collapse into the one we propose, when the synchronization constraints assumed on the system model are loosened (Section~\ref{sec:generic_snapshot_delay}).
\end{enumerate}

\stitle{Applications}:
Data-parallel programs such as the one implemented on top of MapReduce and Apache Spark relies on the assumption that computation is executed in rounds. This assumption makes the implementation of distributed programs easier because hides to the developers details on coordination and on how tasks are executed by the systems.
Nevertheless, coordination-free execution is shown to be faster when allowed.
In fact, certain algorithms are shown to converge faster when (a bounded amount of) asynchrony is permitted~\cite{Niu:2011:HLA:2986459.2986537,Cui:2014:EBS:2643634.2643639}.
In general, it is well known that coordination-free (asynchronous) computations are amenable to \emph{pipelining}, i.e., one-record-at-a-time executions of sequences of computations requiring no intermediate materialization of data;
overall pipelining is highly desirable in the Big Data context, where full materialization is often problematic because of the limits on the available main memory. 
Finally, as previously mentioned, coordination-free programs are decomposable~\cite{DBLP:conf/sigmod/WolfsonS88}. 

Currently, all high-level data-parallel languages are compiled into synchronous (blocking) plans;
for instance both Hive \cite{ThusooSJS09} and Pig \cite{OlstonRSK08} sacrifice efficiency in order to fit query plans into rounds of MapReduce jobs. Similarly, Spark SQL statically splits programs into stages separated by ad-hoc coordination logic.
Other more sophisticated systems such as Hyracks \cite{BorkarCF11} and Apache Flink~\cite{AlexandrovBEF14} do provide the ability to pipeline operators, but it is the programmer's task to manually select the proper strategy.
To our knowledge, only BigDatalog~\cite{Shkapsky:2016:BDA:2882903.2915229} and few other systems \cite{XieCGZ15,HanD15,Cui:2014:EBS:2643634.2643639,Niu:2011:HLA:2986459.2986537} have started to explore when asynchronous executions can be delivered as optimizations of parallel, synchronous programs, and these are mainly in the graph-processing and machine learning domain.

\stitle{Organization}:
The rest of the paper is organized as follows: Section \ref{preliminaries} introduces some preliminary notation. 
Section \ref{sec:synchronous_transducer_network} defines our model of synchronous and reliable parallel system, and shows that the CALM principle is not satisfied for  systems of this type. 
Section \ref{sec:network_independence} proposes a new computational model based on hashing, while Section \ref{sec:refine_coordination} introduces the new definition of coordination. Finally, Section \ref{sec:sync_calm} discusses CALM under the new setting. 
The paper ends with a comparison with other work and concluding remarks.
The extended version of this paper~\cite{DBLP:journals/corr/InterlandiT14} contains some additional material: $(i)$  result on the decidability of independent specifications; and $(ii)$ a complete study on the expressive power of the bulk synchronous transducer network model.

\vspace{-1ex}
\section{Preliminaries}
\label{preliminaries}

The ultimate goal of this paper is to understand to which extent the execution of high-level data-parallel SQL-like languages such as Hive~\cite{ThusooSJS09}, Pig~\cite{OlstonRSK08}, and Spark SQL~\cite{ArmbrustXL15} can be optimized by understanding coordination patterns.  
In this section we therefore recall  the basic notions of database theory whereby queries (expressed as logic programs) are evaluated in a bottom-up fashion. We also  set forth  our notation, 
which is close to that of \cite{AbiteboulHV95} and \cite{AmelootNB13}.

\vspace{-1ex}
\subsection{Basic Database Notation}
\label{sec:basic}

We denote by $\mathcal{D}$ an arbitrary \textit{database schema} composed by a non empty set of \emph{relation schemas} (or simply \emph{relations}). 
In the following we will use the notation $R^{(a)}$ 
to denote a relation name together with its \emph{arity} $a$.
With \textbf{dom} we indicate a countably infinite set of \textit{constants}. 
Given a relation $R^{(a)}$, a \emph{fact} $R(\bar{\texttt{u}})$  is an ordered $a$-tuple over $R$  composed by constants only. 
A \textit{relation instance} $I_R$ is a set of facts defined over $R \in \mathcal{D}$, while a \textit{database instance} \textbf{I} is the union $\bigcup_{R \in \mathcal{D}} I_R$.
In general we write $I_\mathcal{D^\prime}$ to denote an instance over the relations $\mathcal{D^\prime} \subseteq \mathcal{D}$.
The set $adom(\textbf{I}) \subseteq \textbf{dom}$ of all constants appearing in a given database instance \textbf{I} is called \textit{active domain} of \textbf{I}, while $inst(\mathcal{D})$ denotes the set of all the possible database instances defined over $\mathcal{D}$.
%
Given a database schema $\mathcal{D}$ and a relation $R \in \mathcal{D}$, 
a \emph{query} $q_R$ is a total function such that $q_R : inst(\mathcal{D}) \rightarrow inst(R)$ and $adom(q(\textbf{I})) \subseteq adom(\textbf{I})$.
In practice, we will only consider \emph{generic} queries, \ie if $p$ is a permutation of \textbf{dom}, and $\textbf{I}$ an input instance, then $q(p(\textbf{I})) = p(q(\textbf{I}))$.
Finally, we say that a query is \emph{monotonic} when given two instances $\textbf{I}$, $\textbf{J}$, if $\textbf{J} \subseteq \textbf{I}$, then $q(\textbf{J}) \subseteq q(\textbf{I})$.
Note that in the above definitions we have considered queries with a single output relation; this is not a limitation since queries with multiple output relations can be expressed as collections thereof:
given an input and output schema $\mathcal{D}_{in}$ and $\mathcal{D}_{out}$ respectivelly, we will write 
$\mathcal{Q} = \{q_R \mid R \in \mathcal{D}_{out}\}$. 
In this paper we will consider the following query languages all expressible using a rule-based formalism: unions of conjunctive queries {\sc ucq}, first order queries {\sc fo}, {\sc datalog}, and {\sc datalog} with negation {\sc datalog}$^\neg$. 
%
Next we briefly introduce the syntax of the above query languages.

\vspace{-1ex}
\subsection{Query Languages}

Let \textbf{var} be an infinite set of \textit{variables} ranging over the elements of \textbf{dom}. 
Given a relation $R$, an \emph{atom} $R(\bar{u})$ is a tuple in which both constants and variables are permitted as terms.
A \emph{literal} is an atom -- in this case we refer to it as \emph{positive} -- or the negation of an atom. 

A \emph{conjunctive query with negation},  is an expression in the form of:
\begin{equation}
H(\bar{w}) \leftarrow B_1(\bar{u}_1), \ldots, B_n(\bar{u}_n), \neg C_1(\bar{v}_1), \ldots, \neg C_m(\bar{v}_m).
\end{equation}
where $H(\bar{w})$, $B_i(\bar{u}_i)$, and $C_j(\bar{v}_j)$ are atoms. As usual $H(\bar{w})$ is referred to as the \textit{head}, and $B_1(\bar{u}_1)$, \ldots, $\neg C_m(\bar{v}_m)$ as the \textit{body}. If $m = 0$ the rule is called \textit{positive} while if $m=n=0$ the rule is expressing a fact. 
For simplicity, in this paper we assume each query to be \textit{safe}, \ie every variable, occurring either in a rule head or in a negative literal, appears in at least one positive literal of the rule body.

A {\sc ucq} query is a union of positive conjunctive queries, represented as a set of positive rules. 
In a {\sc datalog} query all predicates appearing in the body of a rule must be positive, and can also be used in the heads of rules to eventually produce \emph{recursive} computation.
 A {\sc datalog}$^\neg$ query is a set of safe rules where both recursion and negation are allowed.
For {\sc datalog}$^\neg$ we will assume the \emph{stratified semantics}~\footnote{A {\sc datalog}$^\neg$ program is said stratifiable if it can be  partitioned into sub-programs  (\ie \emph{strata}), each defining one or more negated predicates, and where no cycle of recursion contains a negated predicate~\cite{AbiteboulHV95}. According to the stratified semantics, these sub-program are  then evaluated in order, following the dependencies among the negated predicates: initially the sub-programs having no dependency on negated predicates are fired, followed then by the strata depending on those that have  just been executed, and so forth.}.
Finally, since non-recursive {\sc datalog}$^\neg$ queries are equivalent to first-order logic~\cite{AbiteboulHV95}, we will use {\sc fo} to denote such class of queries.
In this paper we will only consider languages belonging to the above introduced set, with {\sc Datalog}$^\neg$ being the most expressive. Therefore, for each language $\mathcal{L}$, unless otherwise specified, we will assume $\mathcal{L} \subseteq$ {\sc Datalog}$^\neg$.

We will sometimes employ the function $sch$ to return from a query its schema, \ie $\mathcal{D} = sch(\mathcal{Q}).$
The \textit{intensional} ($idb$) part of the database schema is the subset of the database schema containing all the relations that appear in at least one non-fact rule head, while we refer to all the other relations in $sch(\mathcal{Q}) \setminus idb$ as \textit{extensional} ($edb$). 

\vspace{-1ex}
\subsection{Distributed Systems}
We define a \textit{distributed system} as a \emph{fully connected} graph of communicating nodes $N = \{1, \ldots, n\}$. 
We assign to each node $i$ a \emph{node configuration} denoted by the pair $(N, i)$.
We will in general assume all nodes to share the same \emph{global} database schema. 
We will use the notation $I^i_{R}$ to denote a \emph{local instance} for node $i$ over a relation $R$, while
a \emph{global instance} over $R$ is defined as $I_{R} = \bigcup_{i \in N} I^i_{R}$. 
Given an initial  database instance \textbf{I} $\in inst(\mathcal{D}^\prime)$ defined over a subset of the global schema $\mathcal{D}$, we assume that a \emph{distribution function} $D$ exists mapping each node $i$ to a (potentially overlapping) portion of the initial instance, this is $D : inst(\mathcal{D}^\prime) \times N \rightarrow inst(\mathcal{D}^\prime)$.
For correctness, we assume that $D$ is such that each fact composing the input instance is mapped to at least one node, i.e., $\bigcup_{i \in N} D(\textbf{I}, i) = \textbf{I}$.
Finally, a \emph{network configuration} is identified with the pair $(N, D)$.
%
%


\vspace{-1ex}
\section{Computation in rsync}
\label{sec:synchronous_transducer_network}

Query computability is usually defined using the classical model of computation: the Turing machine.
However in this paper we are interested in 
the meaning of \emph{computing a query in parallel settings}.
To this end, in the following we introduce a novel kind of \emph{transducer network} \cite{AmelootNB13}, where computation is \emph{synchronous} and communication is \emph{reliable},
thus obtaining an abstract computational model for \emph{distributed data-parallel systems}.
As a first step, next we describe how \emph{relational transducers} \cite{AbiteboulVFY00,AmelootNB13} (hereafter simply transducer) can be used to model local computations.

\subsection{Relational Transducers}
\label{sec:relational_transducer}

We employ transducers as an abstraction modeling the behavior of each single computing node composing a computer cluster: this abstract computational model permits us to make our results as general as possible without having to rely on a specific framework, since
transducers and \emph{transducer networks} (introduced in the next Section) can be easily used to describe any modern data-parallel system. 

We consider each node to be equipped with an immutable \emph{database}, and a \emph{memory}-store used to maintain useful data between consecutive computation steps. 
In addition, a node can produce an \emph{output} for the user and can also \emph{communicate} some data with other nodes (data communication will be clarified in Section \ref{sec:transducer_network} with the concept of transducer network).
An internal \emph{time}, and \emph{system} data are kept mainly for configuration purposes.
Every node executes a \emph{program} that translates a set of input instances  (from the database, the memory and the communication channel), to a new set of instances that are either saved to memory, or directly output to the user, or addressed to other nodes. Programs are expressed in one of the languages of Section~\ref{sec:basic}.

Formally, each node is modeled as a transducer $\mathcal{T}$ defined by the pair $(\mathcal{P}, \Upsilon)$ where $\mathcal{P}$ and $\Upsilon$ respectively denote the \emph{transducer program} and the \emph{transducer schema}.
A transducer schema is  a 6-tuple ($\Upsilon_{db}$, $\Upsilon_{mem}$, $\Upsilon_{com}$, $\Upsilon_{out}$, $\Upsilon_{time}$, $\Upsilon_{sys}$) of disjoint relational schemas, respectively called $data\-base$, $memory$, $communication$, $output$, $time$ and $system$ sche\-mas. As default, we consider $\Upsilon_{sys}$ to contain two unary relations \texttt{Id}, \texttt{All}, while $\Upsilon_{time}$ includes just the unary relation \texttt{Time},
employed to store the current transducer local \emph{clock} value~\footnote{The semantics of $\Upsilon_{time}$ will become clearer in Section \ref{sec:synchronous_systems} when we will describe the synchronous model.}. 
A \emph{transducer local state} over the schema $\Upsilon $ is then an instance \emph{I} over $\Upsilon_{db} \cup \Upsilon_{mem} \cup \Upsilon_{out}\cup \Upsilon_{sys}$.
The transducer program $\mathcal{P}$ 
is composed by a collection of \emph{insertion}, \emph{deletion}, \emph{output} and \emph{send} queries  $Q_{ins} = \{q^{ins}_{R} | R \in \Upsilon_{mem}\}$, $Q_{del} = \{q^{del}_{R} | R \in \Upsilon_{mem}\}$, $Q_{out} = \{q^{out}_{R} | R \in \Upsilon_{out}\}$, and $Q_{snd} = \{q^{snd}_{R} | R \in \Upsilon_{com}\}$, all taking as input an instance over the schema $\Upsilon$.
%

Starting from a relational transducer $\mathcal{T} = (\mathcal{P}, \Upsilon)$ and 
a node configuration $(N, i)$, we can construct a \emph{configured transducer}, denoted by $\mathcal{T}^i_{N}$, by setting $I_{Id} = \{\texttt{Id}(\texttt{i})\}$ and $I_{All} = \{\texttt{All}(\texttt{j}) | \texttt{j} \in N\}$. 
Given a configured transducer and an instance $\textbf{I}$ defined over $\Upsilon_{db}$, we can create a \emph{transducer initial local state} by setting $I_{db} = \textbf{I}$.
This basically models the starting status of a computing node, before the actual program execution starts: a node has received a program and a read-only instance (e.g., stored in a distributed file system such as HDFS where data is immutable) over which the computation must be performed, and has global knowledge of the other nodes composing the network.
Indeed this is exactly how working nodes are set up, for instance in MapReduce or Spark.

Now, given a configured transducer $\mathcal{T}^i_{N}$, let $I_{rcv}$, $J_{snd}$ denote two instances over $\Upsilon_{com}$ -- and hence disjoint from $I$ -- with the former identifying a set of facts that have been previously sent to $\mathcal{T}^i_{N}$. 
If $I$ is a local state, a \emph{transducer transition}, denoted by $I, I_{rcv}{\Rightarrow}J, J_{snd}$ is such that $J$ is the updated local state, while $J_{snd}$ contains a set of facts that must be addressed to other transducers.
The semantics for updates leaves the $database$ and the $system$ instances unchanged, while 
 the facts produced by the insertion query $Q_{ins}$ are inserted into the $memory$ relations and all the facts returned by the deletion query $Q_{del}$ are removed from them. In case of conflicts -- \ie a fact is simultaneously added and removed -- we adopt the no-op semantics. 
As a result for the user, the set of tuples derived by the query $Q_{out}$ are output. 
As regards $J_{snd}$, this is the set of facts returned by query $Q_{snd}$ and sent by the transducer towards the other nodes. 
We assume that, once sent or output, \emph{facts cannot be retracted}.
%

If $I^\prime = I \cup I_{rcv}$, a transducer transition $I, I_{rcv}{\Rightarrow}J, J_{snd}$  
is formally defined by the laws:
\begin{itemize}
\item $J$ and $I$ agree on $\Upsilon_{db}$ and $\Upsilon_{sys}$;
\item $J_{mem} = (I_{mem} \cup I^+_{ins}) \setminus I^-_{del}$, where $I^+_{ins} = Q_{ins}(I^\prime) \setminus Q_{del}(I^\prime)$ and $I^-_{del} = Q_{del}(I ^\prime) \setminus Q_{ins}(I^\prime)$;
\item $J_{out} = I_{out} \cup Q_{out}(I^\prime)$; and
\item $J_{snd} = Q_{snd}(I^\prime)$.
\end{itemize}
%
%
%

\noindent Finally, note that transitions are deterministic, \ie if $I, I_{rcv}{\Rightarrow}J,$ $J_{snd}$ and $I, I_{rcv} {\Rightarrow} J^\prime, J^\prime_{snd}$, then $J = J^{\prime}$ and $J_{snd} = J^{\prime}_{snd}$. 



Many different versions of transducers can be obtained by constraining the type of queries or the transducer schema.
A transducer is \emph{oblivious} if its queries do not use any $system$ and $time$ relations.
Intuitively, this means that each query is unaware of the configuration, because independent of $(i)$ the node it is running on, $(ii)$ the other nodes in the network, and $(iii)$ the time point at which the computation is.
A transducer is called \emph{monotonic} if all its queries are monotonic. 
Finally, we say that a transducer is \emph{inflationary} if $memory$ facts are never deleted -- \ie $Q_{del}$ is empty. 

\stitle{Remark}: The relational transducer model we have just defined  is general, however it can be instantiated using a specific query language $\mathcal{L}$. We shall then write $\mathcal{L}$-transducer to denote that the program is actually implemented in $\mathcal{L}$. 

\begin{example}
\label{ex:transducer}
A first example of single-node relational transducer is the {\sc ucq}-transducer $\mathcal{T}$  below, computing an equi-join~\footnote{The readers who are not familiar with database notation should consider that  an equi-join operation between two relations  is specified  by sharing one or more variables. For example in our case the variable $v$, shared by   $R$ and $T$, indicates that the $Q$ relation is obtained by imposing that the second term of $R$  be equal to the first term of $T$.} between relations $R$ and $T$.\footnote{Note that, throughout the paper, we add a subscript to the relations in the rule heads to denote to which query -- among the queries  $Q_{ins}, Q_{del}, Q_{out}$, and $Q_{snd}$ of the transducer program -- the rule belongs.} 

\noindent\hrulefill
\vspace{-1mm}
\begin{equation*}
\begin{split}
&\text{Schema: }\Upsilon_{db} = \{R^{(2)}, T^{(2)}\}, \Upsilon_{mem} = \emptyset, \Upsilon_{com} = \emptyset, \Upsilon_{out} = \{Q^{(3)}\}\\
&\text{Program: } Q_{out}(u, v, w) \leftarrow R(u, v), T(v, w).
\end{split}
\end{equation*}

\noindent\hrulefill
\vspace{2mm}

\noindent Let $\mathcal{T}^i_N$ be a configured version of $\mathcal{T}$, and  \textbf{I} an initial instance over which we want to compute the join. Then, let $I_{db} = \textbf{I}$. 
A transition for $\mathcal{T}^i_N$ is defined by setting $I = I_{db} \cup I_{sys}$, $I_{rcv} = J_{snd} = \emptyset$ (no communication query exists), and $J = I_{db} \cup I_{out} \cup I_{sys}$, where $I_{out}$ is the result of the query on $Q$, \ie the join between $R$ and $T$.

\end{example}
\vspace{-1mm}

\stitle{Remark}: Note that, to simplify  the notation, in the example above we omitted the schemas  $\Upsilon_{sys} $ and $\Upsilon_{time}$ because they are always the same; for the same reason, henceforth we will also omit all the empty schemas, as in the case of $\Upsilon_{mem}$ and $\Upsilon_{com}$  in the example. 
\vspace{1mm}

\subsection{Transducer Networks}
\label{sec:transducer_network}

We have already defined how local computations can be expressed by introducing the notion of relational transducer.
In this section we will model the behavior of a networked set of computing nodes by means of \emph{specifications}.
A \emph{transducer network specification} (henceforth simply \emph{transducer network}, or \emph{specification}) $\mathcal{N}$ is a tuple $(\mathcal{T}, \mathcal{T}^e, \gamma)$ where 
$\mathcal{T} = (\mathcal{P}, \Upsilon)$ is a transducer, $\mathcal{T}^e$ is a transducer defining the \emph{environment}
, and $\gamma : N \rightarrow inst(\Upsilon_{com})$ is a \emph{communication} function mapping each node to a set of received facts.
Transducer networks are defined such that all the nodes employ the same transducer $\mathcal{T}$, while the only thing that can be different from node to node is their state.
Such an abstraction is thus appropriate for modelling data-parallel computation, where each node applies the same set of transformations in parallel over a portion of the initial  data. \footnote{Homogeneous transducers model what is known in the parallel computing world as Single Program, Multiple Data (SPMD) computations.}

For the moment we will consider two types of specifications: \emph{broadcasting} and \emph{communication-free}. The former are specifications in which the communication function $\gamma$ is such that every fact emitted by a transducer is sent to all the other transducers composing the network. 
In the latter, instead, every fact is delivered just locally to the sending node. 
In the remainder of this section, we will assume the network to be broadcasting. 

The environment $\mathcal{T}^e$ is a ``special'' relational transducer. 
To give an intuition, following a common practice of multi-agent systems \cite{FaginHMV03}, we use the environment for modeling all the non-functional concerns related to the system; 
in our specific case: data communication and synchronization.
More precisely, we define $\mathcal{T}^e = (\mathcal{P}^e, \Upsilon^e)$ as a transducer where $\Upsilon^e$ is composed only by the $memory$ and $communication$ relation schemas, 
where $\Upsilon ^e_{com} = \Upsilon_{com}$, and $\Upsilon ^e_{mem}$ is the primed copy of $\Upsilon^e_{com}$ (i.e., $\forall R \in \Upsilon^e_{com}, \exists R^\prime$ s.t. $R^\prime \in \Upsilon ^e_{mem}$). 
The transducer program $\mathcal{P}^e$ contains, for each $R \in \Upsilon_{com}$, a set of queries 
of the form:
\begin{equation}
\begin{split}
&R^\prime_{ins}(\bar{u}) \leftarrow R(\bar{u}).\\
&R_{snd}(\bar{u}) \leftarrow R(\bar{u}).
\end{split}
\end{equation}
the first used to store into \emph{memory} a copy of each tuple sent by each node of the network; the second  to emit every received fact.  
%
The use of $\mathcal{T}^e$ will be further clarified in Section~$3.3$ when we introduce the operational semantics of transducer networks.

Given a network configuration $(N, D)$ we denote by $\mathcal{N}_{N\!,D}$ a \emph{configured transducer network}, \ie a specification where all the transducers have been configured, and where each node $i$ holds a database assigned according to the distribution function $D$. 
When $|N| = 1$ $D$ returns the full instance;
 we call this the \emph{trivial configuration}.\footnote{Note that this follows from the previously introduced assumption on $D$ that no facts from the database instance are ignored.}
Thus, a configured transducer network can be used to model a cluster with a parallel system set up, and that is ready to run the user-submitted program;
the trivial configuration represents a program that is run in local-mode.

A \emph{transducer network global state} is a tuple $(I^e, I^1, \ldots, I^n)$ where, for each $j \in N$ $\cup$ $\{e\}$, the $j$-th element is the related relational transducer state $I^j$. 
The definitions of oblivious, monotonic, and inflationary transducers provided at the end of Section \ref{sec:relational_transducer} are naturally generalized over transducer networks.
Now, given a specification $\mathcal{N}$, 
many possible executions may exist, each of them representing one possible evolution of the global state. 
We describe how network global states may change over time through the notion of \textit{run} $\rho$, which binds \emph{logical time} values to global states.\footnote{In this paper we will consider logical time and \emph{physical time} to be two different entities: the former is used to reason about the computation progress of a distributed system; the second can be thought as the time in seconds returned by a local call to the operating system.} 
Then, if $\rho(t) = (I^e, I^1, ..., I^n)$ is the network global state at time $t$, a \emph{point} $(\rho^i, t)$ is the transducer state
of node $i \in N$. We assume that the initial global state $\rho(0)$ is such that: $(i)$  the $database$ local to each node contains the related partition of the initial instance; $(ii)$ the local $system$ relations are properly initialized with the node identifier and with the information about the other nodes; and $(iii)$ all the other relations are empty. 

Recall that a distributed system may have many possible runs, indicating all the possible ways the global state can evolve. In order to capture this, 
starting from a configured transducer network $\mathcal{N}_{N,D}$ and an initial instance $\textbf{I} \in inst(\Upsilon_{db})$, we define a \textit{system} $\mathcal{S}_{\mathcal{N}}(N, D, \textbf{I})$ as a set of runs, where $N, D$ and $\textbf{I}$ are the parameters shaping the system. 
Given a system $\mathcal{S}_{\mathcal{N}}(N, D, \textbf{I})$,  if its settings are irrelevant or clear from the domain we will often denote it simply by $\mathcal{S}$.

\eat{\stitle{Remark}: The above definitions permit us to give a declarative characterization of a distributed system. 
In other words, we don't need to specify in detail how nodes actually interact, but just which is the expected behavior of the network.
The notion of system serves  exactly this purpose: it encompasses all the possible executions of a distributed specification.  
\vspace{1mm}
}

In the following we will also be interested in investigating \emph{classes} of systems, \ie sets of systems having identical specification but different configurations. 
Thus, if a system is defined starting from a configured transducer network and an instance, 
a class of systems is defined starting from a simple specification $\mathcal{N}$, by adding an instance and a \emph{partial configuration}: \ie a configuration having some unfixed parameter.
Intui\-tively, if all the parameters are bound we obtain a specific system $\mathcal{S}_{\mathcal{N}}(N,D, \textbf{I})$.
Partial configurations and classes of systems are important because, in the next sections, we will study with particular attention which (instantiated) specifications are able to obtain a unique final outcome, \emph{independently of the provided configuration}.

\vspace{-1ex}
\subsection{Synchronous and Reliable Systems}
\label{sec:synchronous_systems}
We are mainly interested in synchronous systems with reliable communication. 
Informally, a distributed system is synchronous when all processing node's clocks run at the same rate and both the difference between two nodes clocks and the latency of data communication is bounded. One can construct a synchronous system by providing as input to each node an external reference \emph{global clock}, and assuming that difference between the global clock received as input by each node is bounded~\footnote{In fact, synchronization cannot be achieved if by the time the global clock reaches a node, the reference clock has changed significantly.}.
A similar bound on data communication also have to exists, otherwise nodes will be unable to reason about distributed property of the system. The next definition summarizes the above considerations.

\vspace{-0.5ex}
\begin{definition}\label{def:synchronous}
A \emph{synchronous system} $\mathcal{S}^{\emph{sync}}$ is a set of runs fulfilling the following conditions:

\begin{description}
\vspace{-0.5ex}
\item[\textbf{S1}] A global clock is defined and accessible by every node;
\item[\textbf{S2}] The relative difference between the time clock values of any two nodes is bounded; and
\item[\textbf{S3}] Emitted tuples arrive at destination at most after a certain bounded physical time $\Delta$.
\end{description}
\vspace{-1ex}
\end{definition}

\noindent In our framework, the first property can be expressed by linking the time value stored in the \texttt{Time} relation of  each node with the external logical time used to reason about system runs.
 This is accomplished by defining a \emph{timed local transition} $I$, $I_{rcv} \overset{t}{\Rightarrow} J$, $J_{snd}$ as a local transition where, at each time instant $t$, $I_{time} = \{\texttt{Time}(t)\}$.  
In this enriched setting, we have that each transducer accepts as input also the clock value, 
and the environment can be employed to directly provide the clock driving the computation for all the transducers~\footnote{We assume that the environment is the only transducer allowed to modify the time-related instances, which are then sent to the rest of the network to achieve synchronization.}.
\eat{To accomplish this, we have that $\Upsilon^e_{mem}$, $\Upsilon^e_{com}$ now respectively contain the relations \texttt{Time}, and $\texttt{Stime}$ (where \texttt{Stime} is for synchronous time), while $\mathcal{P}^e$ is augmented with the following clauses:
\begin{align}
\label{eq:time_f1}
&\texttt{Time}(0).\\
\label{eq:time_d1}
&\texttt{Time}_{ins}(s) \leftarrow \texttt{Time}(t), \texttt{Succ}(t, s).
\end{align}
\begin{align}
\label{eq:time_d2}
&\texttt{Time}_{del}(t) \leftarrow \texttt{Time}(t).\\
\label{eq:time_d3}
&\texttt{Stime}_{snd}(t) \leftarrow \texttt{Time}(t).
\end{align}
where eq. (\ref{eq:time_f1}) sets the initial clock value, (\ref{eq:time_d1})~\footnote{The built-in relation \texttt{Succ} has $\mathbb{N}_0$ as domain for both terms, and is used to compute the successor function: \ie the interpretation of $\texttt{Succ}(t, s)$ is $s = t + 1$.} and (\ref{eq:time_d2}) are used to move forward the clock tick, and eq. (\ref{eq:time_d3}) emits the new clock value~\footnote{We assume that the environment is the only transducer allowed to modify the time-related instances, which are then sent to the rest of the network to achieve synchronization.}.  
Finally $\mathcal{P}$ is augmented with the rule:
\begin{align}
\label{eq:trans_time_d1}
&\texttt{Time}(t) \leftarrow \texttt{Stime}(t).
\end{align}
}
Under this perspective, each timed local transition is basically labeled with the time value in which it is performed.

The second property of Definition~\ref{def:synchronous} can be added to our framework by assuming that programs proceed in \textit{rounds}, and that each round, operationally speaking, lasts enough to permit the  computation  at each node to reach the fixpoint~\footnote{Note that, because of the considered languages, we are assured that local transitions always terminate in at most PTIME.}.
In the following, \kwlog we will use the round number to refer to the time of the clock stored in the \texttt{Time} relation.
Finally, in order to express the third property, we assume that emitted facts are first buffered locally, and then, once the node has completed its local transition, delivered in batch to destination.

The above properties imply that a remote tuple is either received after a bounded amount of time or never received. Recall that initially, however, we have required each system to be reliable, \ie all emitted tuples arrive at destination:

\begin{definition}
To be reliable, a synchronous system must satisfy properties \textbf{S1} - \textbf{S3} along with the following additional conditions:
\begin{description}
\item[\textbf{R1}] In every run, for every received fact for node $i$ at round $t^\prime$, there exists a node $j$ and a round $t$ s.t. $t < t^\prime$ and a send query derived the same fact on node $j$ in round $t$; and

\item[\textbf{R2}] In every run, if a fact has been emitted by a node $i$ at round $t$, there exists a time $t^\prime$ s. t. $t < t^\prime$ and the same fact belongs to the input instance state of a node $j$ at round $t^\prime$.
\end{description}

\end{definition}

\noindent Informally, properties \textbf{R1} and \textbf{R2} specify that if an emitted fact exists in an instance at a given round, then it has been generated by a send query in a previous round, and vice-versa, if a send query derives a new fact, then that fact must appear in a successive round in the local state of a node. 
We denote by $\mathcal{S}^{\emph{rsync}}$ the systems satisfying conditions \textbf{S1} - \textbf{S3} and \textbf{R1} - \textbf{R2}. 

To further simplify the model, we can add to \textbf{S3} an extra condition forcing emitted tuples not only to be eventually received after at most $\Delta$ physical time, but to be delivered \emph{exactly} after $\Delta$:
\begin{description}

\item[\textbf{S3}$^{\prime}$] Every tuple sent at round $t$ is delivered exactly after $\Delta$ physical time.

\end{description} 
We name this condition \emph{deterministic delivery}: without \textbf{S3}$^{\prime}$, tuples may be non-deterministically delivered inside the bounded range defined by $\Delta$ and, as we will see in Section~\ref{sec:async_del}, this situation creates different coordination patterns.
We can then assume that, between the end of one round and the start of the consecutive one, precisely $\Delta$ physical time elapses, so that all emitted tuples are received at the start of the new round. 
In other words, we can safely shift the beginning of each new round $t+1$ so that every tuple emitted at round $t$ is precisely delivered at the start of the new round.
We will use the signature $\mathcal{S}^{\emph{bsp}}$ to denote a synchronous and reliable system with deterministic delivery. 
%
Figure~\ref{fig:bsp} gives a pictorial representation of \emph{bsp} systems. Note that this type of systems simulate how real-world BSP frameworks behave.

\begin{figure}[t]
\centering
\includegraphics[width=0.45\columnwidth]{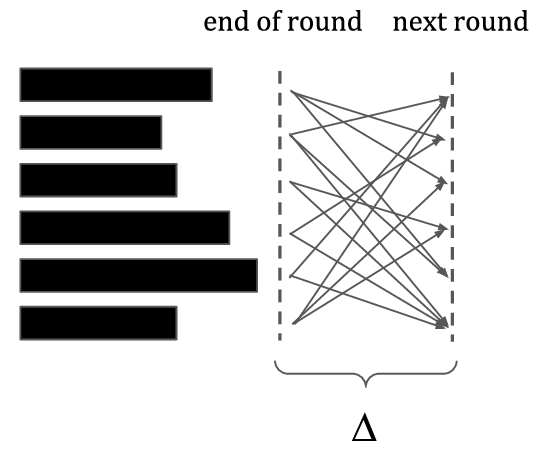}
\vspace{-5mm}
\caption{\emph{bsp} system computation model. Each node's computation is independent but bounded by the global end of the round (condition \textbf{S2}). Communication is reliable and takes exactly $\Delta$ time (conditions \textbf{S3} and \textbf{S3}$^\prime$). The next round starts when $\Delta$ time since the end of the previous round has elapsed  (conditions \textbf{S1} and \textbf{S3}$^\prime$).}
\label{fig:bsp}
\vspace{-1ex}
\end{figure}

Next we will show how the properties for \emph{bsp} systems are enforced during global transitions.

\eat{\stitle{Remark}: Transitions between global states must enforce all the above properties in order to make a proper \emph{rsync} system.
More precisely, during a global transition all the nodes composing the network simultaneously perform a local transition taking as input the associated tuples. 
A local transition for the environment is then executed, whose input is the set of tuples emitted by all the transducers.
Clearly, a global transition, in order to satisfy property \textbf{S3}, can start only when a certain amount $\Delta$ of physical time has elapsed after the end of the previous transition.
As a final remark note that, since a global transition is composed by $|N|$ deterministic local transitions, and the communication is assumed to be reliable, global transitions are also deterministic.
From this follows that any \emph{rsync} system is composed by \emph{just one run}. 
}

\vspace{-2ex}
\subsection{Global Transitions}
\label{sec:global_transitions}

Given a transducer network $(\mathcal{T}, \mathcal{T}^e, \gamma)$ and two global states $\textbf{F} = (I^e, I^1, ..., I^n)$, $\textbf{G} = (J^e, J^1, ..., J^n)$, let $t$ be the clock value and 
$\sigma$ be a \emph{state} function mapping each node $i$  and global state $\textbf{F}$ to the corresponding local state $I^i$; \ie $\forall i \in N$, $I^i = \sigma(\textbf{F})(i)$. Moreover, let $I_{rcv}^i = \gamma(i)$, and $I^i_{db} = D(\textbf{I})(i)$. 
A \emph{global transition} for a \emph{bsp} system, denoted by $\textbf{F}$ ${\Rightarrow}$ $\textbf{G}$, 
is such that the following conditions hold:

\begin{itemize}
\item  $(I^i$, $I_{rcv}^i$ $\overset{t}{\Rightarrow} J^i$, $J^i_{snd})$ is a timed local transition for transducer $\mathcal{T}^i_{N}$;
\item ($I^e$, $I^e_{rcv}$ ${\Rightarrow} J^{e}$, $J^e_{snd})$ is a local transition for the environment, where $I^e_{rcv} =  \bigcup_{i \in N} {J}^{i}_{snd}$. 
\end{itemize}
%
Informally, during a global transition all the nodes composing the network make simultaneously a local transition taking as input the associated tuples. 
A local transition for the environment is then executed, whose input is the set of tuples emitted by all the transducers. 
In addition, in order to satisfy property \textbf{S3}$^\prime$, we assume that a global transition can start only when a certain amount $\Delta$ of physical time has elapsed after the end of the previous transition.
As a final remark note that, since a global transition is composed by $|N|$ deterministic local transitions, and the communication is assumed to be reliable, also global transitions are deterministic.
From this follows that any \emph{bsp} system $\mathcal{S}^{\emph{bsp}}$ is defined by \emph{just one run}. 
We refer to the specifications defining \emph{bsp} systems as \emph{synchronous}.

%
%
%

\vspace{-1ex}
\subsection{Query Computability}
\label{sec:sync_computability}
Given a run $\rho$ describing the execution of a synchronous transducer network, we use the notation $out(t)$ for the set of facts output by all nodes at time $t$, \ie $out(t) = \bigcup _{i \in N}I^i_{out}$ such that $I^i_{out} \in (\rho^i, t)$.
This definition models how parallel data-processing frameworks work in practice: the output remains distributed on each node composing a cluster and can be eventually collected by invoking a proper function, or written to a distributed file system (\eg HDFS).  
Now, let us assume that for a synchronous transducer network a time value $t^\prime$ exists such that $\forall t^{\prime \prime} > t^\prime, out(t^\prime) = out(t^{\prime\prime})$; that is, a quiescence state is reachable so that the output is stable and not changing any more.
We define the output for a synchronous network to be the output up to network quiescence, denoted by $out(*)$, where we use $*$ to denote the time value in which the quiescence state is reached.
In practice, a transducer network initial state identifies also its output. 
This is because the system $\mathcal{S}^{\emph{bsp}}_{\mathcal{N}}(N, D, \textbf{I})$ is constituted by one and only one run, therefore, given an initial instance and a configuration, the output is uniquely determined.
As a signature, we use $\mathcal{N}_{N\!, D}(\textbf{I})$ to denote the output of the transducer network $\mathcal{N}_{N\!, D}$ on input database instance $\textbf{I}$ , and over a \emph{bsp} system. 

We are now able to state what we mean for a query to be computable by a transducer network $\mathcal{N}$: 
\begin{definition}
\label{query_computable}
Given an input and an output schema, respectively $\mathcal{D}_{in}$ and $\mathcal{D}_{out}$, a total mapping $\mathcal{Q}: inst(\mathcal{D}_{in}) \rightarrow inst(\mathcal{D}_{out})$ is \emph{computable by a synchronous transducer network} if a configured transducer network $\mathcal{N}_{N\!, D}$ exists such that $\mathcal{D}_{in} = \Upsilon_{db}$, $\mathcal{D}_{out} = \Upsilon_{out}$ and $\mathcal{N}_{N\!, D}(\textbf{I}) = \mathcal{Q}(\textbf{I})$, for every initial database instance $\textbf{I}$ over $\mathcal{D}_{in}$.
\end{definition}

\eat{
\noindent Because we assumed generic queries as building blocks of transducers, we have that the function $\mathcal{N}_{N\!, P}$ is generic for each oblivious synchronous transducer network.
}

\noindent Because we assumed generic queries as building blocks of transducers, we have that:
\begin{proposition}
The function $\mathcal{N}_{N, D}$ is generic for each oblivious synchronous transducer network.
\end{proposition}

Some specification might have the property that a unique final result is obtained independently of the chosen configuration; on the other hand, not all specifications have this property. Note that this property is
a common requirement in parallel systems: the same job must return consistent results, whichever the cluster size and partitioning scheme. 

\begin{definition}
\label{df:convergence}
Given a specification $\mathcal{N}$, an input instance $\textbf{I} \in inst(\Upsilon_{db})$, and a partial configuration $\psi$, we say that the class $\mathbfcal{S}^{\emph{bsp}}_{\mathcal{N}}(\psi, \textbf{I})$ is \emph{convergent} if for all pairs $\rho$, $\rho^\prime \in \mathbfcal{S}^{\emph{bsp}}_{\mathcal{N}}(\psi, \textbf{I})$, the respective final outputs $out(*)$, $out^\prime(*)$ coincide, \ie $out(*) = out^\prime(*)$. 
\end{definition}

\noindent Informally, a class of systems is convergent if, at the quiescence state, all its runs have the same output.
Note that -- in order for the convergence property to hold -- each pair of runs are required to agree just on the initial instance \textbf{I} and on the final output state, and not necessarily on the entire execution.
Again, this means that, whichever configuration we select, we are assured that the same final outcome will be eventually returned.
%
We will then say that a specification $\mathcal{N}$ is \emph{network-independent} if, once fixed a distribution function $D$, the class $\mathbfcal{S}^{\emph{bsp}}_{\mathcal{N}}(D, \textbf{I})$ is convergent for all possible $\textbf{I} \in inst(\Upsilon_{db})$, or, in other words, $\mathcal{N}_{N\!, D}$ computes the same query $\mathcal{Q}$ for all networks $N$ and instances \textbf{I}.~\footnote{Recall that we only consider fully connected networks, i.e., networks are only distinguished by their cardinality. This too is a common assumption in modern data-parallel distributed systems.}
A similar definition applies for \emph{distribution-independent} specifications. 
Finally, if a specification $\mathcal{N}$ is network-distribution-independent, the class $\mathbfcal{S}^{\emph{bsp}}_{\mathcal{N}}$ is convergent, \ie for any instance \textbf{I}, all the possible runs in $\mathbfcal{S}^{\emph{bsp}}_{\mathcal{N}}(\textbf{I})$ compute the same query result $\mathcal{Q}(\textbf{I})$. This is because, whichever configuration is selected, $\mathbfcal{S}^{\emph{bsp}}_{\mathcal{N}}(\textbf{I})$ has a unique output. 
In this case, $\mathcal{Q}$ is said to be \emph{distributively computable}, while $\mathcal{N}$ is said to be \emph{independent} (or \emph{convergent}).
Note that similar definitions have been used also in \cite{AmelootNB13}, where it was also shown that independence is an undecidable property of specifications~\footnote{Note that we could have added the initial round number to the network configuration parameters. Instead we chose to fix to 0 the initial round value not to make our model too complex. Because of this, we implicitly assume each query to be \emph{time-independent}: the value stored in the \texttt{Time} relation does not influence the result of the query. A (conservative) syntactical condition to achieve time-independence is to not have the \texttt{Time} relation in any query-rule.}.

\begin{example}
\label{ex:net_independent}
Assume we want to compute a distributed version of the program of Example \ref{ex:transducer}.
This can be implemented using a 
broadcasting and inflationary synchronous transducer network 
in which every node emits one of the two relations, let us say $T$, and then joins $R$ with the facts over $T$ received from the other nodes.
Note that the sent facts will be used just starting from the successive round.
This program will then employ two rounds to compute the distributed join. 
{\sc ucq} is again expressive enough.
The transducer can be written as follows:

\noindent\hrulefill
\vspace{-1mm}
\begin{equation*}
\begin{split}
\text{Schema: }&\Upsilon_{db} = \{R^{(2)}, T^{(2)}\}, \Upsilon_{com} = \{S^{(2)}\}, \Upsilon_{out} = \{Q^{(3)}\}\\
\text{Program: }&
S_{snd}(u, v) \leftarrow T(u, v).\\
&Q_{out}(u, v, w) \leftarrow R(u, v), S(u, w).
\end{split}
\end{equation*}

\vspace{-1mm}

\noindent\hrulefill

\noindent The specification is clearly independent since the same output is obtained whichever configuration is selected. 
\end{example}
%
\eat{\begin{ex}
\label{ex:net_dependent}
As example for a \emph{network} and \emph{partition dependent} specification, let us consider a schema composed by a single database relation $R^{(1)}$ and a synchronous relation \texttt{ready}$^{(0)}$. A transducer will broadcast its id if its local view over $R$ is not empty. When at least 3 \texttt{ready} facts from different nodes are received, each transducer output its fragment of $R$. 
Clearly, if the network is composed by one or two nodes, the transducer will not output anything. 
However, if the network is composed by three or more nodes, depending on the partition, the specification may have different outputs.
For instance, if \textbf{I} is nonempty $|N| = 3$, and $P$ is such that the entire instance is installed on every node, $I_{out}$ will end up having all the tuples stored in $R$.
On the other side, if $P$ is defined in a way that only the first node contains the entire instance, while all the other transducers are empty, $T$ will be empty since not enough \texttt{ready} facts will be derived.
Also in this case we assume the transducer to be inflationary and broadcasting \footnote{Note that the transducer program in this example is not generic. We chose this form to ease the reading, but indeed a generic version of the program can be easily obtained.}.

\noindent\hrulefill
\vspace{-1mm}
\begin{equation*}
\begin{split}
\text{Schema: }&\mathcal{T}_{db} = \{R^{(1)}\}, \mathcal{T}_{sync} = \{\texttt{ready}^{(0)}\}, \mathcal{T}_{aux} = \{C^{(1)}\}, \\&\mathcal{T}_{out} = \{T^{(2)}\}\\
\text{Program: }&\\
\end{split}
\end{equation*}
\vspace{-3mm}
\begin{equation*}
\begin{split}
&\texttt{ready}_{snd}(i) \leftarrow R(u), \texttt{id}(i).\\
&C(count <i>) \leftarrow \texttt{ready}(i).\\
&T_{out}(u) \leftarrow R(u), C(i), i \geq \texttt{3}.
\end{split}
\end{equation*}
\noindent\hrulefill
\vspace{2mm}
\end{ex}
}



%
Synchronous specifications have the required expressive power:
\begin{lemma}
\label{lm:computable}
Let $\mathcal{L}$ be a language containing {\sc ucq} and such that $\mathcal{L} \subseteq${\sc datalog}$^{\neg}$. 
Every query expressible in $\mathcal{L}$ can be distributively computed in 2 rounds by a broadcasting, inflationary and oblivious $\mathcal{L}$-transducer network.
\end{lemma}
\vspace{-1ex}
\begin{proof}
This is an adaptation to our context of Theorems $4.9$ - $4.10$ of \cite{AmelootNB13}. 
Let $\mathcal{Q}$ be a query expressed in $\mathcal{L}$ having input schema $\mathcal{D}_{in}$ and output schema $\mathcal{D}_{out}$.
We have two cases, $(i)$ $\mathcal{Q}$ is monotonic or $(ii)$ $\mathcal{Q}$ is non-monotonic.
In case $(i)$, we can program a transducer $\mathcal{T}$ so that initially all nodes send out their local database facts. In the next round all nodes will have received all database facts because communication is synchronous and reliable. Relations \texttt{Id} and \texttt{All} are not needed. 
$\mathcal{Q}$ is evaluated over the union of the local input with the received facts. 

Formally, 
the transducer schema will have $\Upsilon_{db} = \mathcal{D}_{in}$, $\Upsilon_{com} = \{R^{^\prime(a)} \mid R^{(a)} \in \Upsilon_{db}\}$, 
$\Upsilon_{out} = \mathcal{D}_{out}$, while the $system$ and $time$ schemas are as usual.
Denote with $\mathcal{Q}^{\prime}$ the version of the query $\mathcal{Q}$ where all the $edb$ relations are primed.
The transducer program $\mathcal{P}$ is composed by the queries $Q_{snd}$ 
and $Q_{out}$, where $Q_{snd}$ is composed by one rule in the form: $R^{\prime}_{snd}(\bar{u}) \leftarrow R(\bar{u})$
\eat{\begin{align}
\label{eq:computable_1}
&R^{\prime}_{snd}(\bar{u}) \leftarrow R(\bar{u})
\end{align}}
for each $R \in \Upsilon_{db}$, 
while $Q_{out}$ simply contains $\mathcal{Q}^{\prime}$. 
The specification is monotonic and oblivious.

In case $(ii)$ we follow the same approach as before, but this time the query $\mathcal{Q}$, being non-monotonic, cannot be applied immediately because wrong results could be derived.
To avoid this, we use the transducer $\mathcal{T}$ of case $(i)$, in which we
add to $\Upsilon_{mem}$ the nullary relation $\texttt{Ready}$, and to $Q_{ins}$ the query: $\texttt{Ready}_{ins}() \leftarrow \neg \texttt{Ready}().$
\eat{\vspace{-2ex}
\begin{align}
\label{eq:computable_4}
&\texttt{Ready}_{ins}() \leftarrow \neg \texttt{Ready}().
\end{align}
}
In addition we modify the rules in $\mathcal{Q}^\prime$ by adding the literal \texttt{Ready} to their body.
In this way, the query $\mathcal{Q}$ will be evaluated just starting from the second round since at the first one \texttt{Ready} is false. 
We therefore reach our goal.
\end{proof}

Examples \ref{ex:computable_monotone} and \ref{ex:computable_non_monotonic} below show two transducers, each computing  a query of one of  the two categories used in the proof of Lemma~\ref{lm:computable}.

\begin{example}
\label{ex:computable_monotone}
Let $\mathcal{Q}$ be the following {\sc ucq}-query:\vspace{-1ex}
\begin{equation*}
T(u, v) \leftarrow P(u, v), R(u).
\end{equation*}
$\mathcal{Q}$ can be computed by the following {\sc ucq}-transducer:

\noindent\hrulefill
\vspace{-1mm}
\begin{equation*}
\begin{split}
\text{Schema: }&\Upsilon_{db} = \{R^{(1)}, P^{(2)}\}, \Upsilon_{snd} = \{R^{\prime(1)}, (P^{\prime(2)}\}, 
\\&\Upsilon_{out} = \{T^{(2)}\}\\
\text{Program:}\text{ } &
R_{snd}^{\prime}(u) \leftarrow R(u).\\
&P_{snd}^{\prime}(u, v) \leftarrow P(u, v).\\
&T_{out}(u, v) \leftarrow P^{\prime}(u, v), R^{\prime}(u).
\end{split}
\end{equation*}

\noindent\hrulefill
\end{example}
\vspace{-1ex}
\begin{example}
\label{ex:computable_non_monotonic}
Let $\mathcal{Q}$ be the following (non-monotonic) {\sc fo}-query:
\begin{equation*}
\begin{split}
&T(u, z) \leftarrow R(u, v), P(v, z).\\
&Q(u, z) \leftarrow S(u, v), \neg T(v, z), P(w, z).
\end{split}
\end{equation*}

\noindent with $\mathcal{D}_{in} = \{R^{(2)}, P^{(2)}, S^{(2)}\}$ and $\mathcal{D}_{out} = Q^{(2)}$.
A {\sc fo}-transducer computing the same query is: 

\vspace{-1ex}
\noindent\hrulefill
\vspace{-1mm}
\begin{equation*}
\begin{split}
\text{Schema: }&\Upsilon_{db} = \{R^{(2)}, P^{(2)}, S^{(2)}\}, \Upsilon_{mem} = \{\texttt{Ready}^{(0)}\}, \\& \Upsilon_{snd} = \{R^{\prime(2)}, P^{\prime(2)}, S^{\prime(2)}\}, \Upsilon_{out} = \{Q^{(2)}\}\\
\text{Program: }&
 R_{snd}^{\prime}(u, v) \leftarrow R(u, v).\\
&P_{snd}^{\prime}(u, v) \leftarrow P(u, v).\\
&S_{snd}^{\prime}(u, v) \leftarrow S(u, v).\\
&\texttt{Ready}_{ins}() \leftarrow \neg \texttt{Ready}().\\
&T(u, z) \leftarrow R^\prime(u, v), P^\prime(v, z), \texttt{Ready}().\\
&Q(u, z)_{out} \leftarrow S^\prime(u, v), \neg T(v, z), P^\prime(w, z), \texttt{Ready}().
\end{split}
\end{equation*}

\vspace{-1.5ex}
\noindent\hrulefill
\end{example}

\eat{\begin{proof}[(sketch)]
This is an application to our context of Theorems $4.9$ - $4.10$ of \cite{AmelootNB13}. 
Let $\mathcal{Q}$ be an $\mathcal{L}$-query. 
We have two cases: $(i)$ $\mathcal{Q}$ is monotonic or $(ii)$ it is not.
In case $(i)$, we can program a transducer so that initially all nodes send out their local database facts. In the next round all the nodes will have received all database facts since we are considering \emph{rsync} systems. 
\emph{System} and \emph{time} relations are not needed.
The query $\mathcal{Q}$ is then evaluated on the union of the local initial partition with the received facts. 

In case $(ii)$ we follow the same approach as before, but this time the query $\mathcal{Q}$, being non-monotonic, cannot be applied immediately because wrong results could be derived.
To avoid this, let us assume $\mathcal{T}$ is the transducer of case $(i)$. We create a new transducer $\mathcal{T}^\prime$ starting from $\mathcal{T}$ by adding to $\Upsilon_{mem}$ the nullary relation $\texttt{Ready}$, and by modifying $Q_{ins}$ so that $\texttt{Ready}$ will be true only after the first round.
In addition, \texttt{Ready} is AND-attached to all the queries composing $\mathcal{Q}$, so that they will be evaluated once  the emitted instances are received.
\end{proof}
\vspace{-2mm}
}

Lemma~\ref{lm:computable} permits us to draw the following conclusion: under the \emph{bsp} semantics, monotonic and non-monotonic queries behave in the same way; two rounds are needed in both cases. This is due to the fact that, contrary to what happens in the asynchronous case \cite{AmelootNB13}, we are guaranteed by the reliability of the communication and the synchronous assumption that, starting from the second round on, every node will compute the query over every emitted instance. 
Conversely, in the asynchronous case, as a result of the non-determinism of the communication, we are never guaranteed, without coordination, when every sent fact will be actually received.
As a consequence, under this latter model we don't know -- without coordination -- when negation can be safely applied, because it could be applied ``too early", i.e., before all facts over the negated literal are received (or deduced). 

\vspace{-1ex}
\subsection{The CALM Conjecture}
\label{sec:co_co}
The \emph{CALM conjecture} \cite{Hellerstein10} specifies that the class of \emph{monotonic programs} can be distributively computed in an eventually consistent, \emph{coordination-free} way. 
%
%
\noindent CALM has been proven in this (revisited) form for asynchronous systems \cite{AmelootNB13}:
\begin{conjecture}
\label{cn:calm_new}
A query can be distributively computed by a coordination-free transducer network if and only if it is monotonic. 
\end{conjecture}
%
Surprisingly enough, the only-if direction on the conjecture does not hold in \emph{bsp} settings under the broadcasting communication model.
Before showing this, we have to adapt the definition of \emph{coordination-free} as expressed in \cite{AmelootNB13} to our synchronous model.

The concept of coordination suggests that all the nodes in a network need to exchange information and wait until an agreement is reached about a common property of interest. 
Following this intuition, Ameloot et al. established that a specification is coordination-free if communication is not strictly necessary to obtain a consistent final result. 
Put in a more formal context: a specification is coordination-free if $(i)$ it is independent and $(ii)$ a ``perfect'' distribution function exists such that communication is not required to achieve the final outcome. 
That is, the class generated from that specification admits a unique output, independently of the  configuration.
Hence, among all the possible distribution functions we can select one such that, if we turn communication off,  the correct result is still returned.
\eat{
\vspace{-0.5mm}
\begin{df}
Let $\mathcal{N}$, $\mathcal{N}^\prime$ two specifications such that $\Upsilon_{db} =
\Upsilon^\prime_{db}$ and $\Upsilon_{out} = \Upsilon^\prime_{out}$, an initial instance \textbf{I} and  two configurations $(N, t, P)$, $(N^\prime, t^\prime, P^\prime)$, we say that two systems $\mathcal{S}^{\emph{rsync}}_{\mathcal{N}}(N, t, P, \textbf{I})$ and $\mathcal{S}^{\emph{rsync}}_{\mathcal{N}^\prime}(N^\prime,t^\prime,P^\prime, \textbf{I})$ are \emph{eventually consistent}  if $out(*) = out^\prime(*)$.
\end{df}
\vspace{-0.5mm}
Informally, two systems are eventually consistent if they have the same \emph{database} and \emph{output} schema, and the output instances at (possibly different) quiescence are equal. 
}
%
%

\vspace{2ex}
\begin{definition}\label{def:coordinationfree}
\eat{
Let $\mathcal{N}$ be an independent specification, and let $\mathbfcal{S}^{\emph{bsp}}_{\mathcal{N}}$ be the related convergent class of systems.
In addition,  if $\mathcal{F}$ is the com\-munication-free version of $\mathcal{N}$ obtained by properly setting the communication function, denote with $\mathbfcal{S}^{\emph{bsp}}_{\mathcal{F}}$ the class constructed starting from $\mathbfcal{S}^{\emph{bsp}}_{\mathcal{N}}$ by adjoining the run composing $\mathcal{S}^{\emph{bsp}}_{\mathcal{F}}(N, D, \textbf{I})$.
An independent specification $\mathcal{N}$ is said \emph{coordination-free} if, for every initial instance \textbf{I}, a non-trivial configuration $(N, D)$ exists s.t., $\mathbfcal{S}^{\emph{bsp}}_{\mathcal{F}}$ is convergent.
}
\eat{
An independent specification $\mathcal{N}$ is \emph{coordination-free} if for every initial instance \textbf{I} a non-trivial configuration $(N, D)$ exists s.t., if $\mathcal{N}^f$ is the com\-munication-free version of $\mathcal{N}$ obtained by setting $\gamma^f$ as the communication function, 
and we denote with $out^f(*)$, and $out(*)$ respectively the outputs of systems $\mathcal{S}^{\emph{bsp}}_{\mathcal{N}^{f}}(N, D, \textbf{I})$, and $\mathcal{S}^{\emph{bsp}}_{\mathcal{N}}(N, D, \textbf{I})$, we have that $out^f(*) = out(*)$.
}
Let $\mathcal{N}$  be an independent specification, and $\mathcal{F}$ its communication-free version. 
We say that $\mathcal{N}$ is \emph{coordination-free} if $\forall \textbf{I} \in inst(\Upsilon_{db})$ a non-trivial configuration $(N, D)$ exists s.t., $\mathbfcal{S}^{\emph{bsp}}_{\mathcal{F}}$ is convergent; where, $\mathbfcal{S}^{\emph{bsp}}_{\mathcal{F}}$ is constructed by adjoining to the convergent class $\mathbfcal{S}^{\emph{bsp}}_{\mathcal{N}}(\textbf{I})$ the run defining $\mathcal{S}^{\emph{bsp}}_{\mathcal{F}}(N, D, \textbf{I})$.
\end{definition}

\noindent In the following, we will say that runs such as $\mathcal{S}^{\emph{bsp}}_{\mathcal{N}}(N, D, \textbf{I})$ and $\mathcal{S}^{\emph{bsp}}_{\mathcal{F}}(N, D, \textbf{I})$ are \emph{eventually consistent} with each other. That is, their complete execution may differ, but they eventually converge to the same unique final output.
To simplify the notation, we will sometimes directly say that $\mathcal{N}$ and $\mathcal{F}$ are eventually consistent.  

It is now easy to see that with this definition there are non-monotonic queries that can be distributively computed by coordination-free specifications, as the next example shows.

\begin{example}
\label{ex:false_calm}
Let $\mathcal{Q}_{emp}$ be the ``emptiness'' query of \cite{AmelootNB13}: given a nullary database relation schema $R^{(0)}$ and a nullary output relation $T^{(0)}$, $\mathcal{Q}_{emp}$ outputs $T$ iff $I_R$ is empty. 
The query is non-monotonic: if $I_R$ is initially empty, then $T$ is produced, but
if just one fact is added to $R$, $T$ is not derived, \ie $I_T$ is now empty.
A broadcasting {\sc fo}-transducer network $\mathcal{N}$ can be easily generated to distributively compute $\mathcal{Q}_{emp}$: first every node emits $R$ if its local partition is not empty, and then each node locally evaluates the emptiness of $R$. 
Since the whole initial instance is installed on every node when $R$ is checked for emptiness, $T$ is true only if $R$ is actually empty on the initial instance.
The complete specification follows.

\noindent\hrulefill
\vspace{-1mm}

\begin{equation*}
\begin{split}
\text{Schema: }&\Upsilon_{db} = \{R^{(0)}\}, \Upsilon_{mem} = \{\texttt{Ready}^{(0)}\}, \Upsilon_{com} = \{S^{(0)}\}, \\\vspace{-5mm}&
\Upsilon_{out} = \{T^{(0)}\}.\\
\text{Program:} 
&\text{ }S_{snd}() \leftarrow R().\\
&\text{ }\texttt{Ready}_{ins}() \leftarrow \neg \texttt{Ready}().\\
&\text{ }T_{out}() \leftarrow \neg S(), \texttt{Ready}().
\end{split}
\end{equation*}

\noindent\hrulefill
\vspace{1mm}

\noindent Assume that $(N, D)$ is a non-trivial configuration and let $\mathcal{F}$ be the communication-free version of $\mathcal{N}$ above. 
Clearly, whichever initial instance \textbf{I} we select, $\mathcal{S}^{\emph{bsp}}_{\mathcal{N}}(N, D, \textbf{I})$ and $\mathcal{S}^{\emph{bsp}}_{\mathcal{F}}(N, D, \textbf{I})$ are eventually consistent when $D$ installs \textbf{I} on every node.

Note that,  in asynchronous settings, the same query requires coordination: since emitted facts are non-deterministically received, the only way to compute the correct result is that every node coordinates with each other in order to understand if the input instance is \emph{globally} empty.
\end{example}

%
The above example shows that the only-if direction of CALM, stating that only monotonic queries can be computed in a coordination-free way, \emph{ does not hold in general in synchronous settings}~(Contribution 1).
This result is indeed interesting although expected: when we move from the general asynchronous model to the  \emph{more restrictive, \emph{bsp} setting}, we no longer have a complete understanding of which queries can be computed without coordination, and which ones, instead, do require coordination.
\eat{It turns out that 
both the communication model and the definition of coordination proposed in \cite{Ameloot:2011:RTD:1989284.1989321} are not broad enough to work in general for synchronous systems.
As the reader may have realized, this is due to the fact that in broadcasting synchronous systems, coordination -- as defined by Ameloot et al. and recalled in this section -- is already ``baked'' into the model.
In the next sections we will see that our definition of coordination-freedom (Definition \ref{df:specification_coordination_free}) 
guarantees eventually consistent computation for those queries that do not rely on broadcasting in order to progress. 
That is, the discriminating condition for eventual consistency is not monotonicity, but the  fact that it \emph{is not necessary } to send a fact to all the nodes composing a network.  }
It turns out that the communication model and the definition of coordination proposed in \cite{AmelootNB13} cannot capture a notion of coordination freedom appropriate also for synchronous systems. As the reader may have realized, this is due to the presence of $(i)$ broadcasting communication, and $(ii)$ \emph{bsp} system semantics: in broadcasting synchronous systems, the form of coordination defined by Ameloot et al. is already ``baked" into the model because synchronization barriers per se provide nodes with the possibility of ``indirectly deducing" the global status of the network. As a result, some of the queries that, in the asynchronous communication model, were not computable in a coordination-free way turn out to be so in synchronous systems.
Arguably, one could conjecture that all computable queries are actually coordination-free computable (using again the notion introduced in \cite{AmelootNB13}) in synchronous systems.

In Section~\ref{sec:coordination_sync} we provide a new, less permissive definition of \emph{coordination-freedom}, 
more appropriate for synchronous settings. 
Under this definition, \emph{the discriminating condition for coordination freedom is not the absence of communication among nodes, but the more restrictive one that nodes do not need to acquire knowledge of a global property of the network to correctly compute a query}.
We will then see that by weakening first $(i)$ (Section~\ref{sec:network_independence}) and then $(ii)$ (Section~\ref{sec:sync_calm}), the coordination baked into the synchronization barrier results inappropriate to make specific classes of queries (described later) consistent, whereby additional forms of coordination are required.
In Section~\ref{sec:sync_calm} we will additionally show that indeed our definition of coordination subsumes the one previously introduced by Ameloot et al.: when we  remove all the constraints imposed on the synchronous system, the two definitions fall together.
The intuition is that, if a node does not know when all the other nodes are ``done",  conclusions (such as deducing that negation can be applied over a relation) might be applied ``too early". That is, in order to apply certain specific types of deductions (which specific type will be made clear in the next sections), \emph{common knowledge} \cite{FaginHMV03} is required. Acquiring common knowledge of a global property in asynchronous systems requires exchange of messages among all nodes, and this is why communication-freedom can be connected to coordination freedom in this type of systems (i.e., no coordination can be reached without communication). Conversely, using synchronous broadcasting specifications, common knowledge can be acquired by each node indirectly, even without having any fact being communicated. 


\section{Hashing Transducer Networks and Parallel Computation}
\label{sec:network_independence}
%
In the previous section we have seen that synchronization barriers, together with a broadcasting communication model, allow non-monotonic queries to be computed in a coordination-free way (\cf Example~\ref{ex:false_calm}); more than that,  broadcasting specifications  do not appear to be really useful from a practical perspective.
As a consequence, following other parallel programming models such as MapReduce and Spark, in this section we are going to introduce \emph{hashing transducers} (Contribution 2) \ie relational transducers equipped with a \emph{content-based communication model}. 
Under this new model,  the node to which an emitted fact must be addressed is derived using a hash function applied to a subset of its terms called \emph{keys}.

\subsection{Hashing Transducer Networks}

Let $\Upsilon$ be a transducer schema.
For each relation $R^{(a)} \in \Upsilon_{com}$, 
we fix a subset of its terms as the key-terms for that relation. 
\kWlog we will then use the notation $R^{(k,a)}$ 
 to refer to a relation $R$ of arity $a$ having the first $k$ terms specifying its key.
As a notation, we associate to every transducer schema $\Upsilon$ a \emph{key-set} $\mathcal{K}$ mapping every relation $R$ for which a key is defined, to the related set of keys.

It is now appropriate to define how a hashing transducer $\mathcal{T}$ is represented.
$\mathcal{T}$ is hashing if defined by a tuple $(\mathcal{P}, \Upsilon, \mathcal{K})$, where $\mathcal{P}$ is the transducer program, $\Upsilon$ the schema, and $\mathcal{K}$ a key-set. 
With each transducer network we can now associate a \emph{distributed hash} \emph{mapping} $\mathcal{H}$ 
binding each $communication$ fact with the nonempty set of nodes to which the fact belongs.
 Given a family $H$ of unary hash functions $h: \textbf{dom} \rightarrow N$ and a fact $R(\texttt{u}_1, \ldots \texttt{u}_a)$ over a relation $R^{(k, a)} \in \Upsilon_{com}$, $\mathcal{H}$  distributes $R(\texttt{u}_1, \ldots \texttt{u}_a)$ to the nodes in the following way:
\begin{equation}
\label{eq:hashing}
\mathcal{H}(R(\texttt{u}_1, \ldots \texttt{u}_a)) = 
\begin{cases}
N & \text{if } k = 0\\
\bigcup_{i \in 1..k} \{ h_i(\texttt{u}_i)\}  & \text{otherwise}
\end{cases}
\end{equation}
Informally, we employ hash functions to deterministically obtain the location(s) to which a fact belongs from its key-term values specified in $\mathcal{K}$, so that $\mathcal{H}$ maps each fact to the set of nodes to which it must be delivered.
Two characteristics are noteworthy in our model.
Firstly, we allow a fact to be hashed to multiple target nodes\footnote{Note that this one-to-many distributed mapping is used here for coherence with the set oriented {\sc datalog} semantics. One-to-one communication behavior can also be employed, for example by using surrogate keys.}. This multicasting model enables us to gracefully move from broadcasting to unicast situations, while studying how different classes of queries behave. Additionally, this allows to model both regular MapReduce-style shuffling, as well as Hypercube shuffling~\cite{AfratiU10} requiring records to be replicated over multiple nodes. 
Secondly, we consider a family of hash functions instead of a single function. In this way we are able to express specific behaviors when needed; 
for instance, we can establish that facts containing a certain constant 
in certain cases are addressed to a predefined node, while, in general, they can be addressed to all nodes. 
To maintain our model simple, we assume the codomain of each hash family to coincide with $N$, i.e., $N = \bigcup_{h \in H} \bigcup_{c \in \textbf{dom}}\{h(c)\}$. 

Given a relation $R^{(k, a)}$, two key-settings are of special interest: $(i)$ no key is set, and we write $k = 0$; and $(ii)$ the set of keys is \emph{maximal}, \ie $k = a$.
In the former case, we have that
every tuple is addressed to all the nodes $N$.
We then say that a send query is \emph{broadcasting} if the head relation $R$ is such that $k = 0$.
Furthermore, in the case in which all the relations in the domain of $\mathcal{K}$ have $k = 0$, we say that $\mathcal{K}$ is \emph{unrestricted};
it is \emph{restricted} if, instead, for no relation $k = 0$. 
\eat{Given the above, we can identify two possible kinds of hash families: \emph{partitioned}, when $N_H \subsetneq N$, and \emph{non-partitioned} otherwise. 
%
%
In the former case we have that the set of nodes composing the network can be divided into two sets: the nodes to which data can be hashed, and that  therefore are actively involved in the computation, and the nodes in $N \setminus N_H$. The latter do not receive any emitted tuple.
To make each node aware that potentially only a subset of $N$ is actively receiving data, we assume that the system relation \texttt{All} contains a value for every node in $N_H$ (instead of $N$).

\stitle{Remark}: The model just discussed is as generic as possible, and therefore it allows us to represent different practical situations. 
For instance, broadcasting is a feature both of MapReduce and Spark, while partitioned hash families permit to elegantly model elastic situations where node resources can be returned back to a manager (such as Mesos \cite{HindmanKZGJKSS11} or YARN \cite{VavilapalliMDA13}) when not required anymore. 
However, as we will see soon, for non-monotonic queries the knowledge that no data will be received as input by a node does not automatically mean that node resources can be freed.
\vspace{1mm}
}
We can now base the definition of the communication function $\gamma$ on $\mathcal{H}$ and 
we then call \emph{hashing} this new type of synchronous specification. 
Unless stated otherwise, from now on the term ``specification'' will actually denote a \emph{hashing} specification (transducer network). 
The definition of communica\-tion-free can now be slightly revisited: a hashing specification $\mathcal{N}$ is communica\-tion-free also when, for all relations $R \in \Upsilon_{com}$, if $R(\bar{\texttt{u}})$ is a fact derived at node $i$ by a sending query, then $\mathcal{H}(R(\bar{\texttt{u}})) = \{i\}$.


\begin{example}
\label{ex:hashing} 
This program is the hashed version of Example \ref{ex:net_independent}:

\noindent\hrulefill
\vspace{-1mm}
\begin{equation*}
\begin{split}
\text{Schema: }&\Upsilon_{db} = \{R^{(2)}, T^{(2)}\}, \Upsilon_{com} = \{S^{(1, 2)}, U^{(1, 2)}\}, \\&\Upsilon_{out} = \{J^{(3)}\}\\
\end{split}
\end{equation*}
\begin{equation*}
\begin{split}
\text{Program:} \text{ }&
S_{snd}(u, v) \leftarrow R(u, v).\\
&U_{snd}(u, v) \leftarrow T(u, v).\\
&J_{out}(u, v, w) \leftarrow S(u, v), U(u, w).
\end{split}
\end{equation*}

\vspace{-1mm}
\noindent\hrulefill
\vspace{1mm}

\noindent In this new guise, every tuple emitted over $S$ and $U$ is hashed on the first term, so that we are assured that at least a node exists to which each pair of joining tuples is issued. 
Note that such program models exactly how (reduce-side) joins are implemented in the MapReduce framework.
\end{example}

\paragraph{Shuffling Transducers.}
Given a hashing transducer network, 
we can directly apply hashing functions to the $database$ relations and, once all the initial tuples are hashed, the actual queries can be applied.
\noindent We refer to such type of transducers as \emph{shuffling}. 
Intuitively, when $\mathcal{K}$ is unrestricted the same behavior described in the proof of Lemma \ref{lm:computable} under the broadcasting communication model is obtained.

More formally, given a query $\mathcal{Q}: inst(\mathcal{D}_{in}) \rightarrow inst(\mathcal{D}_{out})$ over input and output schema $\mathcal{D}_{in}$ and $\mathcal{D}_{out}$ respectively, a shuffle transducer network can be devised so that the transducer schema $\Upsilon$ has $\Upsilon_{db} = \mathcal{D}_{in}, \Upsilon_{mem} = \{\texttt{Ready}^{(0)}\}, \Upsilon_{snd} = \mathcal{D}^\prime_{in}$ and $\Upsilon_{out} = \mathcal{D}_{out}$, where $\mathcal{D}^\prime_{in}$ is composed by the primed version of the relations in $\mathcal{D}_{in}$; the transducer program $\mathcal{T}$ is instead formed by the queries $Q_{ins} = \{\texttt{Ready}_{ins}() \leftarrow \neg \texttt{Ready}().\}$, $Q_{snd} = \{R^\prime_{snd}(\bar{u}) \leftarrow R(\bar{u}) | R \in \Upsilon_{db}\}$, and $Q_{out} = \mathcal{Q}^\prime$,  where $\mathcal{Q}^\prime$ is generated from $\mathcal{Q}$ by priming all relations over the input schema $\mathcal{D}_{in}$ and where the predicate $\texttt{Ready}()$ is adjoined to the query.

We will often use shuffling transducers in the proofs and examples we are going to introduce in the remainder of this section and in the following ones. But we first introduce some properties of shuffling specifications.

\subsection{Properties of Shuffling Transducer Networks}
\emph{Safety} and \emph{liveness} are common properties used to describe specifications over distributed systems \cite{kindler1994safety}.
Informally, the safety property is used to state that ``\emph{nothing bad will ever happen}'', while the liveness property certifies that ``\emph{something good will eventually happen}''.
In our model the two properties can be respectively restated as ``\emph{no wrong fact will ever be derived}'', 
and ``\emph{some fact will eventually be derived}''.

%
First of all, note that, by the properties \textbf{S1} - \textbf{S3}$^\prime$ and \textbf{R1} - \textbf{R2} of \emph{bsp} systems and by definition of hashing communication, every fact $P(\bar{\texttt{u}})$ over a relation $P \in \Upsilon_{com}$ derived by a send query in round $t$, will be in the instance $I^i_P$ of node $i$ at round $t+1$, for all $i \in \mathcal{H}(P(\bar{\texttt{u}}))$.
%
Starting from the above consideration, we can define a query to be live if all the derived $communication$ facts (globally) satisfying the body of a query are all co-located on at least one node. A specification is then live if all the queries satisfy the liveness property.
More formally:
\begin{definition}
\label{df:live}
Let $\mathcal{T} = (\mathcal{P}, \Upsilon, \mathcal{K})$ be a shuffling transducer, and \textbf{I} an arbitrary instance over $\Upsilon_{db}$.  
Assume that a query $q_R$ in $\mathcal{P}$ exists such that $q_R$ is satisfied in the trivial configuration of $\mathcal{T}$ with input \textbf{I}. Let \textbf{q} be one of such instantiation of $q_R$, and denote with $\textbf{b}$ the set of facts over the $communication$ predicates in the body of $\textbf{q}$. Then $q_R$ is \emph{live} if for every instantiation $\textbf{q}$ and (non-trivial) configuration we have that:\vspace{-1ex}
\begin{equation}
\label{eq:proper_instance}
(\!\!\! \bigcap_{P(\bar{\texttt{u}}) \in \textbf{b}} \!\!\!\!\mathcal{H}(P(\bar{\texttt{u}}))\text{ }) \neq \emptyset 
\end{equation}
$\mathcal{T}$ is said to be live if the above property holds for all input instance \textbf{I} and queries in $\mathcal{P}$.
\end{definition}
By showing that a query is live we can state that a (shuffling) specification has at least the same opportunity to distributively derive a fact as the original query computed locally on a single node.

On the other hand, a query is considered safe if every node evaluates all the negated literals on an instance containing all the facts, over that literal, available in the network:
\begin{definition}
\label{df:safe}
Let $\mathcal{T} = (\mathcal{P}, \Upsilon, \mathcal{K})$ be a shuffling transducer, and $N$ a set of nodes. 
We say that evaluating a query $q_R$ in $\mathcal{P}$ is \emph{safe} if, for every fact $P(\bar{\texttt{u}})$ over a relation $P \in \Upsilon_{com}$ 
appearing negated in the body of $q_R$, we have that: 
\vspace{-1ex}
\begin{equation}
\label{eq:proper_instance}
\forall i \in N, i \in \mathcal{H}(P(\bar{\texttt{u}}))
\end{equation}
$\mathcal{T}$ is safe if every query in $\mathcal{P}$ can be safely evaluated on every input instance.
\end{definition}

\noindent If a query is not safe, the correctness of a specification can be jeopardized, as shown next.  

\begin{example}
Consider the following {\sc fo}-query:
\begin{equation*}
\begin{split}
&Q(v, w) \leftarrow R(u, v, w), \neg P(v, w).
\end{split}
\end{equation*}

\noindent with $\mathcal{D}_{in} = \{R^{(3)}, P^{(2)}\}$ and $\mathcal{D}_{out} = Q^{(2)}$.
The following shuffling {\sc fo}-transducer $\mathcal{N}$ implementing the same query is not safe:

\noindent\hrulefill
\vspace{-1mm}
\begin{equation*}
\begin{split}
\text{Schema: }&\Upsilon_{db} = \{R^{(3)}, P^{(2)}\}, \Upsilon_{mem} = \{\texttt{Ready}^{(0)}\}, \\& \Upsilon_{snd} = \{S^{(1,3)}, T^{(1,2)}\}, \Upsilon_{out} = \{Q^{(2)}\}\\
\text{Program: }&
 S_{snd}^{}(u, v, w) \leftarrow R(u, v, w).\\
&T_{snd}^{}(u, v) \leftarrow P(u, v).\\
&\texttt{Ready}_{ins}() \leftarrow \neg \texttt{Ready}().\\
&Q(v, w)_{out} \leftarrow S(u, v, w), \neg T(v, w), \texttt{Ready}().
\end{split}
\end{equation*}
\noindent\hrulefill

\noindent Assume the following input instance \textbf{I} = \{\texttt{R(1,2,3)}, \texttt{P(2,3)}\}. Clearly the original query will have empty output over this input instance. However, Let $(N, D)$ be a non-trivial configuration, and $H$ a hash family such that constants \texttt{1} and \texttt{2} are hashed to two different nodes. We then have that $\mathcal{N}$ outputs the fact \texttt{Q(2,3)} when using this configuration, i.e., $\mathcal{N}$ is not safe.\vspace{-1ex}

\eat{
\label{ex:emptiness_partition}
Consider the emptiness query of Example \ref{ex:false_calm} computed by a hashing transducer network over a nonempty instance \textbf{I}. 
%
%
\noindent Note that $\mathcal{K}$ is unrestricted. 
If we select a partitioned family $H$, every non-trivial configuration will output $T$ true, even if the initial instance is nonempty. 
In fact, when $H$ is partitioned, at least a node exists having relation $S$ empty when \texttt{Ready} is set to true.
}
\end{example}

\eat{
\begin{ex}
\label{ex:filter_monotone_count}
The previous example, although correct, in some case can be inefficient, e.g., when the filter predicate is composed by one single value and therefore all the values are addressed to one single node to compute the count.  
Indeed the same problem exists also in the MapReduce implementation. 
In such case, the \emph{combiner} function can be exploited in order to obtain partial counts at the mapper side. The \emph{reduce} function then can just sum up all the partial counts.
Using our model we can easily achieve the same behavior. 
While the transducer schema remains as in Example \ref{ex:filter_count}, the program is now:

\noindent\hrulefill
\vspace{-1mm}
\small
\begin{equation*}
\begin{split}
&S_{snd}(u, count<v>) \leftarrow R(u, v), F(u).\\
&T_{out}(u, sum<v>) \leftarrow S(u, v).
\end{split}
\end{equation*}
\normalsize
\noindent\hrulefill
\vspace{2mm}
\end{ex}

\begin{ex}
\label{ex:network_dependent}
If we still focus our attention on the transducer network of Example \ref{ex:filter_count}, we can see that just changing a little bit the transducer definition we can obtain a network dependent transducer.
In fact, if instead of the first term, we set the key of the synchronous relation $S^{(1, 2)}$ to be the second term, clearly the correct sum cannot be obtained for all the possible sets of nodes since the tuples are grouped over the wrong term ($v$ instead of $u$).
\end{ex}


\begin{ex}
\label{ex:non_recursive_tc}
In this example we show how the counting of the number of paths between nodes in a directed acyclic graph can be computed in a distributed way. 

\noindent\hrulefill
\vspace{-1mm}
\small
\begin{equation*}
\begin{split}
\text{Schema: }&\mathcal{T}_{db} = \{R^{(2)}\}, \mathcal{T}_{sync} = \{S^{(1,2)}\}, \mathcal{T}_{aux} = \{P^{(3)}\}, \\&\mathcal{T}_{out} = \{T^{(3)}\}\\
\text{Program:}&\\
\end{split}
\end{equation*}
\vspace{-3mm}
\begin{equation*}
\begin{split}
&P(u, v, count<(u, v)>) \leftarrow R(u, v).\\
&T_{out}(v, u, c) \leftarrow P(u, v, c).\\
&T_{out}(v, z, sum<(c)>) \leftarrow S(u, v, c1), T(u, z, c2), c = c1 * c2.\\
&S_{snd}(v, u, c) \leftarrow T(u, v, c).
\end{split}
\end{equation*}
\normalsize
\noindent\hrulefill
\vspace{2mm}

\end{ex}
}

\subsection{Parallel Computation of Queries}
\label{sec:parallel}
\label{sec:strategy}

In our effort towards the creation of a general model for parallel computation,  having modeled communication by means of hashing we are now able to 
create different computational strategies by simply customizing the adopted hash functions and relation keys.
While we have already seen that $\mathcal{K}$ is embedded directly into the definition of a transducer, $H$ can be added to the configuration parameters of transducers and specifications.
Henceforth, for configured hashing transducers (resp. transducer networks) we will refer to transducers (networks) determined by configurations of the forms $(N, i, H)$ (respectively  $(N, D, H)$). 
The definition of independent specification can now be extended accordingly.
Thus, once a set of keys $\mathcal{K}$ has been fixed, a specification $\mathcal{N}$ is called \emph{strategy-independent} if, whatever $H$ is chosen, 
$\mathcal{N}$ computes the same final result.
Finally, a hashing specification is called \emph{independent} if it is altogether network-distribution-strategy-independent. 
The definition of convergent class of systems (Definition \ref{df:convergence}) can naturally be generalized over (shuffling) hashing specifications.

In the next Sections we will explore in detail both the connections existing between class of queries and type of specifications, and which class is more expensive to compute with respect to others (e.g., which query can only be computed by an unrestricted specification).

In Section \ref{sec:synchronous_transducer_network} we have seen that, if the transducer network is broadcasting, a large class of queries can be distributively computed. 
We say that a query $\mathcal{Q}$ is \emph{parallelly computable} if a hashing specification exists such that all the possible runs in $\mathbfcal{S}^{\emph{bsp}}_{\mathcal{N}}(\textbf{I})$ compute the same query $\mathcal{Q}(\textbf{I})$, whichever initial instances $\textbf{I}$ is given.

\begin{proposition}
\label{pr:generic_parallel_computable}
Let $\mathcal{L}$ be a query language.
Every query that is expressible in $\mathcal{L}$, and that can be distributively computed by a broadcasting $\mathcal{L}$-transducer network, can also be parallelly computed by a hashing $\mathcal{L}$-transducer network.
\end{proposition}\vspace{-2ex}
\begin{proof}
By definition every hashing transducer with unrestricted $\mathcal{K}$ emulates a broadcasting one.
\end{proof}

\noindent It is now a straightforward exercise to show that the expressiveness result of Lemma \ref{lm:computable} applies also to hashing specifications.
\kWlog we will hereafter call  \emph{broadcasting} every hashing transducer network where $\mathcal{K}$ is unrestricted. 
\vspace{-1ex}

\paragraph{Correct Specifications.}
In Section~\ref{sec:strategy} we have seen that the set of keys $\mathcal{K}$ can be used to parametrize a transducer.
In this way,  multiple specifications can be produced by selecting different sets of keys. Let $\mathbfcal{N}$ be the class of specifications that can be generated by changing $\mathcal{K}$ in a specification $\cal{N}$. Intuitively, a wrong selection of the keys can result in a wrong specification: consider Example \ref{ex:hashing} and assume that we choose the second term as key for both $S$ and $U$; we can then incur in the situation in which a tuple is not derived because the joining facts are issued to two different nodes. 
Yet, we can define a subclass of $\mathbfcal{N}$ wherein each specification is consistent with the broadcasting version.
By Proposition~\ref{pr:generic_parallel_computable}, in fact, a ``correct" broadcasting specification always exists and it does not depend on the chosen set of keys: it simply broadcasts every emitted fact!
Specifically, let $\mathbfcal{S}^{\emph{bsp}}_{\mathcal{N}}$ be the convergent class derived from the broadcasting specification $\mathcal{N}$. 
For every input instance \textbf{I}, we can start to add to $\mathbfcal{S}^{\emph{bsp}}_{\mathcal{N}}$ all the runs $\mathcal{S}^{\emph{bsp}}_{\mathcal{N}^\prime}(N, D, H, \textbf{I})$ such that $\mathcal{N}^\prime \in \mathbfcal{N}$ is a convergent specification, 
and $\mathcal{S}^{\emph{bsp}}_{\mathcal{N}}(N, D, H, \textbf{I})$ and $\mathcal{S}^{\emph{bsp}}_{\mathcal{N}^\prime}(N, D, H, \textbf{I})$ are eventually consistent.
Denote with $\mathbfcal{S}^{\emph{bsp}}_{\mathbfcal{CN}}$ the class which is maximal with the above property, and where with $\mathbfcal{CN} \subseteq \mathbfcal{N}$ we identify the \emph{correct} class of specifications.
We have the following:

%
%

\begin{definition}
\label{df:correct}
Let $\mathcal{N} = (\mathcal{T}, \mathcal{T}^e, \gamma)$ be a hashing transducer network with $\mathcal{T} = (\mathcal{P}, \Upsilon, \mathcal{K})$, and assume that a nonempty correct class of specifications $\mathbfcal{CN}$ exists.
A set of keys $\mathcal{K}$ (specification $\mathcal{N}$) is said to be \emph{correct} if $\mathcal{N} \in \mathbfcal{CN}$.
\end{definition}

\noindent 
Note that the specifications in $\mathbfcal{CN}$ are independent by construction. 

\eat{
\subsubsection{Queries Computable by an Independent Specification}
\label{sec:strategy-independent}

Overall we are interested in computing queries without employing broadcasting specifications.
Additionally, we are interested in the queries computable by an independent specification.
We will say that a query is \emph{parallelly computable} if it is distributively computable by an independent specification. 
In other words, the parallel computable queries are the ones that tolerate partitioned hash families. 
But which class of queries can be parallelly computed?
Let us start with monotonic queries. 

\begin{lemma}
\label{lm:parallel_monotonic}
Every monotonic query expressible in $\mathcal{L} $ can be parallelly computed by an inflationary and oblivious $\mathcal{L}$-transducer network.
\end{lemma}
\begin{proof}
Let $\mathcal{Q}$ be a monotonic query, and \textbf{I} an instance over $\mathcal{D}_{in}$. 
A specification $\mathcal{N}$ can be created by shuffling the input instance and then applying the query, 
as described in the first case of Lemma \ref{lm:computable}.
$\mathcal{K}$ is unrestricted, $H$ is non-partitioned, and $\mathcal{N}_{N, D, H}$ distributively computes $\mathcal{Q}$.
We have to show that $\mathcal{N}$ is actually independent and hence $\mathcal{Q}$ is parallelly computed by $\mathcal{N}$.
Assume $N$ to be not trivial.
Consider the same specification but with a new configuration $(N, D, H^\prime)$ where $ H^\prime$ is partitioned.
The liveness property still holds also with the new configuration: 
$\mathcal{K}$ is unrestricted and every query is evaluated over the full initial instance starting from the second round by at least a node, whichever $H^\prime$ is chosen.
Every fact in $\mathcal{Q}(\textbf{I})$ is then also in $\mathcal{N}_{N, D, H^\prime}(\textbf{I})$ since the query is monotonic.
It remains to show that no wrong fact will be derived by $\mathcal{N}_{N, D, H^\prime}$.
We have that the safety property also holds for $\mathcal{N}_{N, D, H^\prime}$ -- although we are using a partitioned hashing family -- again because the query is monotonic. 
In particular, note that every passive node will always have all the $communication$ predicates empty. 
Hence, when $\mathcal{Q}$ is applied, no fact will be derived since the query is monotonic.
As a consequence we have that passive nodes do not contribute in any way to the computation of the final query, although they participate in the initial shuffling. 
\eat{
\begin{proof}[(sketch)]
Let $\mathcal{Q}_{out}$ be a monotonic query, and \textbf{I} an instance. 
A specification $\mathcal{N}$ can be created by shuffling the input $database$ instance and then applying the query. 
$\mathcal{K}$ is unrestricted, $H$ is non-partitioned, and $\mathcal{N}_{N, P, H}$ distributively computes $\mathcal{Q}_{out}$.
We have to show that $\mathcal{N}$ is actually independent. 
Assume $N$ to be non trivial.
Consider now the same specification and configuration but with a partitioned hash family  $H^\prime$.
The liveness property still holds also with the new configuration. 
It remains to show that no wrong fact will be derived by $\mathcal{N}_{N, P, H^\prime}$.
We have that the safety property doesn't hold for $\mathcal{N}_{N, P, H^\prime}$ -- although note that it does hold for $\mathcal{N}_{N, P, H}$ -- however we have that every passive node will always have all the predicates in $\Upsilon_{com}$ empty. Hence, when $\mathcal{Q}_{out}$ is applied, no fact will be derived since the query is monotonic.
}
\end{proof}

\noindent It turns out that monotonic queries are not the only parallelly computable queries.
In fact, also a class of non-monotonic queries is parallelly computable: \emph{connected queries} \cite{Guessarian90}.
Informally, a query is connected if every relation in a rule-body is connected through a join-path with every other relation composing the same rule-body.
\begin{definition}
Let $\textit{body}(q_R)$ be the conjunction of literals defining the body of a rule $q_R$. 
We say that two different literals $R_i(\bar{u}_i)$, $R_j(\bar{u}_j) \in \textit{body}(q_R)$ are \emph{connected} in $q_R$ if either: 
\begin{itemize}
\item $\bar{u}_i \cap \bar{u}_j \neq \emptyset$
; or
\item a third literal $R_k(\bar{u}_k) \in q_R$ different from $R_i(\bar{u}_i)$ and $R_j(\bar{u}_j)$ exists such that $R_i(\bar{u}_i)$ is connected with $R_k(\bar{u}_k)$, and $R_k(\bar{u}_k)$ is connected with $R_j(\bar{u}_j)$.
\end{itemize}
Two relations $R_i$ and $R_j$ are said to be connected in $q_R$  if there are two literals $R_i(\bar{u}_i)$ and $R_j(\bar{u}_j)$ that are connected in $q_R$.
\end{definition}


\begin{definition}
\label{df:connected}
We say that a $\mathcal{L}$-query $\mathcal{Q}$ is \emph{connected} if there is no rule $q_R \in \mathcal{Q}$ whose body contains either a nullary relation or a relation $R_i $  which is not connected with any other relation $R_j \in body(q_R)$. 
\end{definition}


\stitle{Remarks}: $(i)$ Literals can be either positive or negative predicates~\footnote{Note that non-nullary negative literals are connected by definition since we are only considering safe queries.}; $(ii)$ every 
positive query composed by a single rule containing just one non-nullary body atom is connected by definition; and $(iii)$ 
every non-nullary unconnected query is an \emph{existential query} \cite{RamakrishnanBK88}.

\begin{example}
\label{ex:connected}
The following {\sc datalog}$^\neg$ query returns all the nodes being source of a path of length three, if no path of length four exists.
\begin{equation*}
\begin{split}
&T(u) \leftarrow E(u, v), E(v, w), E(w,x).\\
&F() \leftarrow E(u, v), E(v, w), E(w,x), E(x,y)\\
&Q(u) \leftarrow T(u), \neg F().
\end{split}
\end{equation*}
While the first two rules are connected, the third rule is not.  The query is therefore not connected.
\eat{The following is an example of unconnected {\sc sql}-query.

\noindent\hrulefill
\vspace{-1mm}
\begin{equation*}
\setlength{\jot}{0pt} 
\begin{split}
&\texttt{SELECT attribute-1}\\
&\texttt{FROM table-1}\\ 
&\texttt{WHERE EXISTS (SELECT * FROM table-2).}
\end{split}
\end{equation*}

\noindent\hrulefill
\vspace{2mm}
 
\noindent Note that the inner and the outer sub-queries are unrelated.
}
\end{example} 

Non-monotonic connected queries can be evaluated independently of the choice of $H$. 
Conversely, differently than Lemma~\ref{lm:computable}, non-monotonic not connected queries are not parallelly computed because not tolerant to  partitioned hash families.

\begin{lemma}
\label{lm:parallel_non_monotonic}
Let $\mathcal{L}$ be a language such that $\mathcal{L} \subseteq$ {\sc datalog}$^{\neg}$. 
Every connected query expressible in $\mathcal{L}$ can be parallelly computed by an inflationary and oblivious $\mathcal{L}$-transducer network.
\end{lemma}
\begin{proof}
We proceed as in the proof of Lemma \ref{lm:parallel_monotonic}.
Let $\mathcal{Q}$ be the input query, and \textbf{I} an instance over $\mathcal{D}_{in}$. 
A transducer network $\mathcal{N}_{N, D, H}$ can be created by shuffling the input instance and then applying the query from the second round over the entire instance.
$\mathcal{K}$ is unrestricted, $H$ is non-partitioned, and $\mathcal{N}_{N, D, H}$ distributively computes $\mathcal{Q}$.
Assume $N$ to be not trivial.
We are going to show that $\mathcal{N}$ is actually independent.
Consider the same specification but with a new configuration $(N, D, H^\prime)$ where $H^\prime$ is partitioned.
Since all rules are connected, the liveness property  holds also with the new configuration: $\mathcal{K}$ is unrestricted and every query is evaluated over the full initial instance starting from the second round by at least a node, whichever $H^\prime$.
It remains to show that no wrong fact will be derived by $\mathcal{N}_{N, D, H^\prime}(\textbf{I})$.
In this case we have that the safety property doesn't hold for $\mathcal{N}_{N, D, H^\prime}$.
Note, however, that every passive node at the second round will have all the $communication$ predicates empty.
Hence, when the query is applied, no fact will be derived because a new fact, to be derived, must be at least unary (the query is connected) and a constant must exist in the instance of a positive literal (only syntactically safe queries are considered).
But this is impossible since every $communication$ relation is initially empty.
We can then conclude that $\mathcal{N}$ parallelly computes $\mathcal{Q}$.
\end{proof}

\stitle{Remark}:
From the proof of Lemma~\ref{lm:parallel_non_monotonic} we can conclude that non-monotonic connected queries are parallelly computable because, although not safe in general, they are safe on active nodes while passive nodes do not derive any fact.
\vspace{1ex}

Connected queries have an interesting semantic property: they \emph{distribute over components} \cite{AmelootKNZ14}.
An instance $\textbf{J}$ is said to be \emph{connected} if whichever pair of constants \texttt{a}, \texttt{b} $\in adom(\textbf{J})$, a chain of facts \textbf{f}$_1$, \ldots, \textbf{f}$_n$ exists such that $\texttt{a} \in adom(\textbf{f}_1)$, $\texttt{b} \in adom(\textbf{f}_n)$, and for any pair of consecutive fact $\textbf{f}_i$, $\textbf{f}_{i+1}$ in the chain with $0 \leq i < n$, $adom(\textbf{f}_i) \cap adom(\textbf{f}_{i+1}) \neq \emptyset$.
Now, if $\textbf{I}$ is an instance, $\textbf{J}$ is a \emph{component} of $\textbf{I}$ if $(i)$ $\textbf{J} \subseteq \textbf{I}$, $(ii)$ $\textbf{J}$ is nonempty, and $(iii)$ $\textbf{J}$ is connected and maximal with this property in $\textbf{I}$.
Finally, if with $co(\textbf{I})$ we denote the components of $\textbf{I}$, a query $\mathcal{Q}$ is said to \emph{distribute over components} if $\mathcal{Q}(\textbf{I}) = \bigcup_{\textbf{J} \in co(\textbf{I})} \mathcal{Q}(\textbf{J})$ for all $\textbf{I}$.
\vspace{-0.5ex}

\begin{proposition}[Ameloot at al. 2014]
\label{pr:connected1}
Let $\mathcal{L} \subseteq$ {\sc datalog}$^{\neg}$.
Every connected $\mathcal{L}$-query distributes over components.
\end{proposition}

\begin{proposition}[Ameloot at al. 2015]
\label{pr:connected2}
Every query computable by a {\sc datalog}$^\neg$-query that distributes
over components can be computed by a connected {\sc datalog}$^\neg$-query.
\end{proposition}

\begin{example}
Consider again Example~\ref{ex:connected}. 
The query is not connected and in fact it does not distribute over components.
To see why this is the case, assume that the input instance is composed by two components~\footnote{To be more formal, the Gaifman graph of the input instance has two components.}. 
The query does not distribute because the result on each component might depend on the presence of a path of length four on the other component.

Conversely, assume the query:
\begin{equation*}
\begin{split}
&T(u) \leftarrow E(u, v), E(v, w), E(w,x).\\
&F(u) \leftarrow E(u, v), E(v, w), E(w,x), E(x,y)\\
&Q() \leftarrow T(u), \neg F(u).
\end{split}
\end{equation*}
returning true if a component exist having a path of length three, but no path of length four starting from the same source node.
The query is connected and distributes over components.
\end{example} 
\vspace{-0.5ex}

\stitle{Remark}: Differently from \cite{AmelootKNZ15}, but according to \cite{AmelootKNZ14}, we are considering as not connected \emph{all} the queries having nullary relations in the body. 
In \cite{AmelootKNZ15}, a specific type of nullary relations is identified which can be safely used in rule bodies while maintaining connectedness.
Such nullary relations are however copied to each connected component, which in our parallel settings means  broadcasting all nullary relations, which in turn means that, for partitioned hash families, if negation is applied over a nullary relation the query is no longer parallelly computable (because unsafe).
We therefore restrict our attention to nonempty instances (having at least one component) and connected queries without nullary atoms in the body.
}
\vspace{-1ex}

\subsection{Queries Computable by a Restricted Specification}
Among the parallelly computable queries, the most interesting ones from a parallel processing perspective are the ones for which a restricted specification (\ie having a restricted set of keys), parallelly computing them, exists.
We denote this kind of queries as \emph{restricted}. 
Intuitively, restricted queries are the ones which can be parallelly computed without having to resort to broadcasting. 
Note that the reader should not be deluded into believing that every monotonic query is trivially a restricted one. 

\begin{example}
\label{ex:not_connected}
Assume two relations $R^{(2)}$ and $T^{(1)}$, and the following query $\mathcal{Q}$ returning the full $R$-instance if $T$ is nonempty. 
\vspace{-1ex}
\begin{equation*}
Q(u, v) \leftarrow R(u, v), T(\_).
\end{equation*}
\normalsize

\noindent The query is monotonic.
Let $\mathcal{T}$ be the following broadcasting {\sc ucq}-transducer computing $\mathcal{Q}$.

\noindent\hrulefill
\vspace{-1mm}
\begin{equation*}
\begin{split}
\text{Schema: }&\Upsilon_{db} = \{R^{(2)}, T^{(1)}\}, \Upsilon_{com} = \{S^{(0, 2)}, U^{(0, 1)}\}, \Upsilon_{out} = \{Q^{(2)}\}\\
\text{Program: }
&S_{snd}(u, v) \leftarrow R(u, v).\\
&U_{snd}(u) \leftarrow T(u).\\
&Q_{out}(u, v) \leftarrow S(u, v), U(\_).
\end{split}
\end{equation*}

\vspace{-1ex}
\noindent\hrulefill
\vspace{1mm}

\noindent Assume now a restricted set of keys $\mathcal{K}$. 
We have that, whichever (restricted) $\mathcal{K}$ we chose, the related specification might no longer be convergent.
To see why this is the case, consider an initial instance \textbf{I} and $\mathcal{K}$ maximal, \ie every term of every relation is key\footnote{
We chose $\mathcal{K}$ to be maximal because if we fail to generate a convergent specification for the maximal case, even more so specifications where $\mathcal{K}$ is less than maximal will fail.}.
Assume \textbf{I} such that $adom(I_{R}) \supset adom(I_T)$, and a configuration in which $N$ is large.
In this situation, it may well happen that a nonempty set of facts in $I_R$ is hashed to a certain node $i$, while no fact over $T$ is hashed to $i$.
Hence no tuple emitted to $i$ will ever appear in the output, although they do appear in $\mathcal{Q}(\textbf{I})$.
Thus this transducer is not convergent.
\end{example}

\eat{\noindent We then have the following Theorem:
\vspace{-1mm}
\begin{te}
\label{te:embarassing_monotonic_unconnected}
Let $\mathcal{L} \subseteq$ datalog.
A class of monotonic $\mathcal{L}$-queries exists which is parallelly computable but not hashing.
\end{te}
\vspace{-3mm}}

\subsubsection{Connected Queries}
A class of monotonic queries exists which is restricted: \emph{connected queries} \cite{Guessarian90}.
Informally, a query is connected if every relation in a rule-body is connected through a join-path with every other relation composing the same rule-body.
\begin{definition}
Let $\textit{body}(q_R)$ be the conjunction of literals defining the body of a rule $q_R$. 
We say that two different literals $R_i(\bar{u}_i)$, $R_j(\bar{u}_j) \in \textit{body}(q_R)$ are \emph{connected} in $q_R$ if either: 
\begin{itemize}
\item $\bar{u}_i \cap \bar{u}_j \neq \emptyset$
; or
\item a third literal $R_k(\bar{u}_k) \in q_R$ different from $R_i(\bar{u}_i)$ and $R_j(\bar{u}_j)$ exists such that $R_i(\bar{u}_i)$ is connected with $R_k(\bar{u}_k)$, and $R_k(\bar{u}_k)$ is connected with $R_j(\bar{u}_j)$.
\end{itemize}
Two relations $R_i$ and $R_j$ are said to be connected in $q_R$  if there are two literals $R_i(\bar{u}_i)$ and $R_j(\bar{u}_j)$ that are connected in $q_R$.
\end{definition}


\begin{definition}
\label{df:connected}
We say that a $\mathcal{L}$-query $\mathcal{Q}$ is \emph{connected} if there is no rule $q_R \in \mathcal{Q}$ whose body contains either a nullary relation or a relation $R_i $  which is not connected with any other relation $R_j \in body(q_R)$. 
\end{definition}


\stitle{Remarks}: $(i)$ Literals can be either positive or negative predicates~\footnote{Note that non-nullary negative literals are connected by definition since we are only considering safe queries.}; $(ii)$ every 
positive query composed by a single rule containing just one non-nullary body atom is connected by definition; and $(iii)$ 
every non-nullary unconnected query is an \emph{existential query} \cite{RamakrishnanBK88}.

\begin{example}
\label{ex:connected}
The following {\sc datalog} query returns all the nodes being source of a triangle, if a non-cyclic path of length four exists. In this query, the build-it predicate $\neq$ is used (in infix notation) to express inequality between variable instantiations.
\begin{equation*}
\begin{split}
&T(u) \leftarrow E(u, v), E(v, w), E(w,u).\\
&F() \leftarrow E(u, v), E(v, w), E(w,x), E(x,y), x \neq u, y \neq u.\\
&Q(u) \leftarrow T(u), F().
\end{split}
\end{equation*}
While the first two rules are connected, the third rule is not.  The query is therefore not connected.
\eat{The following is an example of unconnected {\sc sql}-query.

\noindent\hrulefill
\vspace{-1mm}
\begin{equation*}
\setlength{\jot}{0pt} 
\begin{split}
&\texttt{SELECT attribute-1}\\
&\texttt{FROM table-1}\\ 
&\texttt{WHERE EXISTS (SELECT * FROM table-2).}
\end{split}
\end{equation*}

\noindent\hrulefill
\vspace{2mm}
 
\noindent Note that the inner and the outer sub-queries are unrelated.
}
\end{example} 
\begin{proposition}
\label{lm:embarassing_monotonic}
Let $\mathcal{L} \subseteq$ {\sc datalog} be a language.
For every connected (monotonic) query expressible in $\mathcal{L}$, an equivalent one exists that is restricted. 
\end{proposition}\vspace{-1ex}
\begin{proof}
From Proposition~\ref{pr:generic_parallel_computable} we know that a specification $\mathcal{N}$ exists parallelly computing every monotonic query $\mathcal{Q}$ for all input instances \textbf{I}.
Starting from $\mathcal{N}$ we can construct a new specification $\mathcal{N}^\prime$ where every rule in $Q_{out}$ is primed and moved to $Q_{snd}$. 
We then add to $Q_{out}$ a rule to output every fact over the output schema $\mathcal{D}_{out}$.
The behavior of $\mathcal{N}^\prime$ is very simple: every time a new fact is derived by a rule, it is shuffled.
We have that the liveness property is naturally enforced also in $\mathcal{N}^\prime$ because $\mathcal{K}$ is unrestricted and, for this reason, every query is live.
Every fact in $\mathcal{Q}(\textbf{I})$, and no more, must hence also be in $\mathcal{N}^\prime(\textbf{I})$ whichever configuration we chose since the query is monotonic.

Consider $\mathbfcal{CN}^\prime$, the class of correct specifications defined by $\mathcal{N}^\prime$.
Assume now a new specification $\mathcal{N}^{\prime\prime}$ derived from $\mathcal{N}^\prime$ by considering a restricted keys-set $\mathcal{K}^{\prime\prime}$.
For simplicity we fix $\mathcal{K}^{\prime\prime}$ to be maximal.
We have to show that $\mathcal{N}^{\prime\prime}$ is eventually consistent with $\mathcal{N}^\prime$. 
This is quite straightforward: every query is connected thus the liveness property still holds because $\mathcal{K}^{\prime\prime}$ is maximal and hence a non-null intersection exists among the destinations of all the atoms composing every rule-body.
Reasoning in the same way, every rule is also evaluated on a instance.
For what concerns safety, this is trivially satisfied because the query is monotonic.

We are now going to show how an inflationary transducer $\mathcal{T} = (\mathcal{P}, \Upsilon, \mathcal{K}^{\prime\prime})$ composing the specification $\mathcal{N}^{\prime\prime}$ can be built.
Let $\Upsilon_{db} = \mathcal{D}_{in}$, $\Upsilon_{snd} = \{R^\prime \mid R \in sch(\mathcal{Q})\}$, $\Upsilon_{out} = \mathcal{D}_{out}$ ($systems$ and $time$ relations are as usual), and every term in every $communication$ relation is key.
Since $\mathcal{Q}$ is monotonic, no result can be derived that will be retracted in the future.
The idea was to apply every rule as it is, and every new derived fact is sent to the other nodes composing the network.
Concretely, we first add to $Q_{snd}$, for each $R \in \Upsilon_{db}$, the following rule  implementing the shuffling: 
\vspace{-2mm}
\begin{align}
\label{eq:partitioning}
R^{\prime}_{snd}(\bar{u}) \leftarrow R(\bar{u}).
\end{align}
Now, let $\mathcal{Q}^{\prime}$ the version of $\mathcal{Q}$ where every relation is primed. 
We add to $Q_{snd}$ all the rules in $\mathcal{Q}^{\prime}$, and to $Q_{out}$ a rule: 
\vspace{-1ex}
\begin{align} 
\label{eq:output}
&R_{out}(\bar{u}) \leftarrow R^\prime(\bar{u}).
\end{align}
for each relation $R \in \mathcal{D}_{out}$ to output the results.
Note that the transducer is oblivious and monotonic.
Is easy to see that a transducer program generated in this way computes the initial query $\mathcal{Q}$.
\eat{We now prove iteratively that a transducer program generated in this way computes the initial query $\mathcal{Q}$.
Let \textbf{I} be an instance over $\mathcal{D}_{in}$, and $(N, D, H)$ a configuration.
Assume $\textbf{f}$ to be a fact over the relation $R \in \mathcal{D}_{out}$ and appearing in $\mathcal{Q}(\textbf{I})$.
We are going to show that $\textbf{f} \in \mathcal{N}^{\prime\prime}_{N, D, H}(\textbf{I})$. We proceed by contradiction: we assume that $\textbf{f} \notin \mathcal{N}^{\prime\prime}_{N, D, H}(\textbf{I})$ and show that this is impossible.
Consider the quiescent point $(\rho, t)$ of $\mathcal{S}^{\emph{bsp}}_{\mathcal{N}^{\prime\prime}}(N, D, H, \textbf{I})$.
Being $\textbf{f} \notin \mathcal{N}^{\prime\prime}_{N, D, H}(\textbf{I})$, by rule (\ref{eq:output}) we have that the related primed fact $\textbf{f}^\prime$ is not in the instance of the relation $R^\prime$ at round $t-1$.
This means that at round $t-2$ an emission rule $r^\prime$ exists in $Q_{snd}$ which has not derived the fact $\textbf{f}^\prime$, \ie the body of the rule was not satisfied.
Iteratively repeat this reasoning until a send rule is reached whose body contains more that one relation.
If such rule doesn't exist, $\textbf{f}^\prime$ was not derived because an extensional tuple $\textbf{e}$ was not in the initial instance.
Clearly this is a contradiction since $\textbf{f}$ is in $\mathcal{Q}(\textbf{I})$ and hence also $\textbf{e}$ must be in \textbf{I}.
Assume then that an emission rule exists having more that one relation in the body.
\kWlog we can assume that this rule is not satisfied because a fact $\textbf{f}^{\prime\prime}$ over the $communication$ relation $R^{\prime\prime}$ was not in the local instance of $R^{\prime\prime}$ and, in the case it was, the rule would be satisfied. 
This may have occurred for two reasons: $(i)$ $\textbf{f}^{\prime\prime}$ was addressed to a wrong node; or $(ii)$ $\textbf{f}^{\prime\prime}$ was never derived by the respective rules in the previous round. 
Case $(i)$ is impossible since every sent tuple is hashed over all its terms and, being the query connected and composed by more than one relation, at least a literal $R^{\prime\prime\prime}(\bar{v})$ exists which is connected with $R^{\prime\prime}(\bar{u})$, and contains an instance that will join with $\textbf{f}^{\prime\prime}$.
Since such instance must exist by construction, and it has been correctly hashed because $\bar{u} \cap \bar{v} \neq \emptyset$, if instances over $R^{\prime\prime\prime}$ are correctly addressed, also every instance over $R^{\prime\prime}$ must be addressed to the correct node. 
We can then conclude that $\textbf{f}^{\prime\prime}$ was never derived.
Repeating iteratively the same procedure one can show that $\textbf{f}$ was not derived because a base fact $\textbf{e}$ was not in the initial instance.
We have already see that this is impossible, therefore \textbf{f} must be in $\mathcal{N}^{\prime\prime}_{N, D, H}(\textbf{I})$.
On the other hand it can be easily seen that every fact in $\mathcal{N}^{\prime\prime}_{N, D, H}(\textbf{I})$ must also be in $\mathcal{Q}(\textbf{I})$ since $\mathcal{T}$ is composed by the same rules forming $\mathcal{Q}$.}
\end{proof} 

\eat{
\vspace{-1.5ex}
\begin{example}
\label{ex:computable_monotonic}
Let $\mathcal{Q}$ be the usual transitive closure query:
\begin{equation*}
\begin{split}
&T(u, v) \leftarrow R(u, v).\\
&T(u, w) \leftarrow R(u, v), T(v, w).
\end{split}
\end{equation*}

\noindent For the reader not familiar with recursive queries, note how the transitive closure relation $T$ is recursively filled by joining $R$ with $T$ itself.
A datalog transducer parallelly computing this query is:

\noindent\hrulefill
\vspace{-1mm}
\begin{equation*}
\begin{split}
\text{Schema: }&\Upsilon_{db} = \{R^{(2)}\}, \Upsilon_{snd} = \{S^{(1, 2)}, U^{(1, 2)}\}, 
\Upsilon_{out} = \{T^{(2)}\}\\
\text{Program: }&S_{snd}(u, v) \leftarrow R(u, v).\\
&U_{snd}(u, v) \leftarrow S(u, v).\\
&U_{snd}(v, w) \leftarrow S(u, v), U(u, w).\\
&T_{out}(u, v) \leftarrow U_{snd}(u, v).
\end{split}
\end{equation*}
\vspace{-0.5ex}
\noindent\hrulefill
\end{example}
}

Connected queries have an interesting semantic property: they \emph{distribute over components} \cite{AmelootKNZ14}.
An instance $\textbf{J}$ is said to be \emph{connected} if whichever pair of constants \texttt{a}, \texttt{b} $\in adom(\textbf{J})$, a chain of facts \textbf{f}$_1$, \ldots, \textbf{f}$_n$ exists such that $\texttt{a} \in adom(\textbf{f}_1)$, $\texttt{b} \in adom(\textbf{f}_n)$, and for any pair of consecutive fact $\textbf{f}_i$, $\textbf{f}_{i+1}$ in the chain with $0 \leq i < n$, $adom(\textbf{f}_i) \cap adom(\textbf{f}_{i+1}) \neq \emptyset$.
Now, if $\textbf{I}$ is an instance, $\textbf{J}$ is a \emph{component} of $\textbf{I}$ if $(i)$ $\textbf{J} \subseteq \textbf{I}$, $(ii)$ $\textbf{J}$ is nonempty, and $(iii)$ $\textbf{J}$ is connected and maximal with this property in $\textbf{I}$.
Finally, if with $co(\textbf{I})$ we denote the components of $\textbf{I}$, a query $\mathcal{Q}$ is said to \emph{distribute over components} if $\mathcal{Q}(\textbf{I}) = \bigcup_{\textbf{J} \in co(\textbf{I})} \mathcal{Q}(\textbf{J})$ for all $\textbf{I}$.

\begin{proposition}[Ameloot at al. 2014]
\label{pr:connected1}
Let $\mathcal{L} \subseteq$ {\sc datalog}$^{\neg}$.
Every connected $\mathcal{L}$-query distributes over components.
\end{proposition}\vspace{-1ex}

\begin{proposition}[Ameloot at al. 2015]
\label{pr:connected2}
Every query computable by a {\sc datalog}$^\neg$-query that distributes
over components can be computed by a connected {\sc datalog}$^\neg$-query.
\end{proposition}
\vspace{-1ex}
\begin{example}
Consider again Example~\ref{ex:connected}. 
The query is not connected and in fact it does not distribute over components.
To see why this is the case, assume that the input instance is composed by two components. 
The query does not distribute because the result on each component might depend on the presence of a path of length four on the other component.

Conversely, assume the query:
\begin{equation*}
\begin{split}
&T(u) \leftarrow E(u, v), E(v, w), E(w,u).\\
&F(u) \leftarrow E(u, v), E(v, w), E(w,x), E(x,y), x \neq u, y \neq u.\\
&Q() \leftarrow T(u), F(u).
\end{split}
\end{equation*}
returning true if a component exist having a triangle, with a path of length four starting from the same source node.
The query is connected and distributes over components. 
\end{example} 
\vspace{-1ex}

\stitle{Remark}: Differently from \cite{AmelootKNZ15}, but according to \cite{AmelootKNZ14}, we are considering as not connected \emph{all} the queries having nullary relations in the body. 
In \cite{AmelootKNZ15}, a specific type of nullary relations is identified which can be safely used in rule bodies while maintaining connectedness.
Such nullary relations are however copied to each connected component, which in our parallel settings means  broadcasting all nullary relations, which in turn means that, for partitioned hash families, if negation is applied over a nullary relation the query is no longer parallelly computable (because unsafe).
We therefore restrict our attention to nonempty instances (having at least one component) and connected queries without nullary atoms in the body.

\subsubsection{Non-monotonic Queries} Clearly unconnected non-monotonic queries are not restricted. Interestingly, not every connected non-monotonic query is however restricted.

\begin{example}
\label{ex:ct_parallel}
Consider the following {\sc datalog}$^\neg$-query computing the facts in $T$ not in the transitive closure of $R$.\vspace{-2ex}
\begin{equation*}
\begin{split}
&CS(u, v) \leftarrow R(u, v).\\
&CS(u, w) \leftarrow CS(u, v), R(v, w).\\
&Q(u, v) \leftarrow T(u, v), \neg CS(u, v).
\end{split}
\end{equation*}
This query can be parallelly computed by the following oblivious and inflationary {\sc fo}-transducer:

\noindent\hrulefill
\vspace{-1mm}
\begin{equation*}
\begin{split}
\text{Schema: }&\Upsilon_{db} = \{R^{(2)}, T^{(2)}\}, \Upsilon_{com} = \{S^{(1, 2)}, U^{(1, 2)}, CS^{(1, 2)}\},\\&\Upsilon_{mem} = \{\texttt{Ready}^{(0)}\}, \Upsilon_{out} = \{Q^{(2)}\}\\
\text{Program: }\text{ }&
S_{snd}(v, u) \leftarrow R(u, v).\\
&U_{snd}(u, v) \leftarrow T(u, v).\\
&CS_{snd}(u, v) \leftarrow S(u, v).\\
&CS_{snd}(u, w) \leftarrow S(u, v), CS(v, w).\\
&\texttt{Ready}_{ins}() \leftarrow \neg \texttt{Ready}().\\
&Q_{out}(u, v) \leftarrow U(u, v), \neg CS(u, v), \texttt{Ready}().\\
\end{split}
\end{equation*}

\vspace{-1ex}
\noindent\hrulefill
\vspace{1mm}

\noindent This specification correctly computes the query only when some specific hashing functions are used. 
Indeed, it might happen that facts are distributed unevenly among the nodes, and
that a node ends up deriving a new fact over $CS$ after all the other nodes have already finished their computation.
This may result in the possibility that a fact over $Q$ be retracted: intuitively, the problem is that negation is applied too early. 
In order to avoid this situation, common knowledge of local termination for each node is required, i.e., nodes must \emph{synchronize}: every node should notify every other node that it has locally terminated the computation of $CS$ and then, 
when every node has locally terminated the computation of the closure of $R$, and all nodes know this, $Q$ can be safely evaluated.
Clearly, this pattern requires a non-restricted specification.
On the other hand the same query can be correctly computed if every node has the entire instance locally installed.
In this case we are in fact guaranteed that every node will apply negation over the complete transitive closure.
This again is obtainable only with an unrestricted specification.
\end{example}

Nevertheless, a class of non-monotonic queries  exists that does not require broadcasting rules: \emph{recursion-delimited connected} queries.

\begin{definition}
Given a {\sc datalog}$^{\neg}$ query $\mathcal{Q}$, we say that $\mathcal{Q}$ is \emph{recursion-delimited} if, whichever stratification $\mathcal{Q}_1, \ldots, \mathcal{Q}_n$ we choose, $\mathcal{Q}_1, \ldots, \mathcal{Q}_{n-1}$ are non-recursive programs, while $\mathcal{Q}_{n}$ is expressed in $\mathcal{L}$, with $\mathcal{L} \subseteq$ {\sc datalog}$^{\neg}$, and negation is applied over extensional database schemas only~\footnote{This language is commonly referred to as \emph{semi-positive} {\sc datalog}$^{\neg}$ (see also Appendix B.1).}. 
\end{definition}

\noindent Informally, recursion-delimited queries are non-monotonic queries in which recursion is only allowed in the last stratum. 
\begin{proposition}
\label{pr:embarassing_non_monotonic}
Let $\mathcal{L} \subseteq$ {\sc datalog}$^{\neg}$.
Every recursion-delimited, connected query expressible in $\mathcal{L}$ is restricted. 
\end{proposition}\vspace{-2ex}
\begin{proof}
We follow the same procedure presented in the proof of Proposition \ref{lm:embarassing_monotonic}.
Let $\mathcal{Q}_1, \ldots, \mathcal{Q}_n$ be a stratification of the rules of the input query $\mathcal{Q}$.
An inflationary transducer $\mathcal{T} = (\mathcal{P}, \Upsilon, \mathcal{K})$ can be created where $\mathcal{K}$ is maximal and every stratum is evaluated sequentially.
Every derived fact over $\mathcal{D}_{out}$ is then output.
Since all the rules are connected, and the set of keys is maximal, the liveness property is always satisfied so we have the same opportunity of deriving fact as in the case in which $\mathcal{K}$ is unrestricted.
Although the safety property is not satisfied, no wrong result can be inferred because each stratum is evaluated sequentially and every rule is connected end evaluated on a instance.
We can then conclude that the specification $(\mathcal{T}, \mathcal{T}^e, \gamma)$ parallelly computes $\mathcal{Q}$, and $\mathcal{Q}$ is a restricted query. 
More precisely, initially set $\Upsilon_{db} = \mathcal{D}_{in}$, $\Upsilon_{snd} = \{R^\prime \mid R \in sch(\mathcal{Q})$, $\Upsilon_{out} = \mathcal{D}_{out}$ ($systems$ and $time$ relations are as usual), and every term in every schema relation is key.
We start to generate $\mathcal{P}$ by adding to $Q_{snd}$ a  rule in the form of eq. (\ref{eq:partitioning}) for each $R \in \Upsilon_{db}$ to implement the shuffling (similarly to the proof of Proposition \ref{lm:embarassing_monotonic}).
Since $\mathcal{Q}$ is non-monotonic, an appropriate order of evaluation of rules must be enforced if we don't want to derive wrong results.
Thus, consider the stratification $\mathcal{Q}_1, \ldots, \mathcal{Q}_n$ of $\mathcal{Q}$.
First consider the first $n-1$ strata. 
By definition of recursion-delimited query, such strata are non-recursive, therefore, looking at the predicate dependency graph, we can assign a set of predicates, inside the same stratum, to a \emph{stage}.
This assignment follows the dependency graph, so that each predicate that depends on another predicate belongs to a higher stage.
Intuitively, the stratification is maintained since all predicates belonging to a higher stratum also belong to a higher stage.
We then have that stages, basically, are used to implement distributed stratification.
Let $m$ the highest stage thus obtained.
We create a new stage $m+1$ containing all the predicates in the last stratum $\mathcal{Q}_n$.
Consider now the query $\mathcal{Q}^\prime$ obtained from $\mathcal{Q}$  by $(i)$ priming all relations and $(ii)$ appending to the body of each rule in $\mathcal{Q}_j$ a nullary atom \texttt{Stage}$^j()$, with $j \in 1, \ldots, m+1$, in order to bind the evaluation of rules to the respective stage. 
We now add to $Q_{snd}$ all the queries in $\mathcal{Q}^\prime$ 
, and to $Q_{out}$ a rule in the form of eq. (\ref{eq:output}) for each $output$ relation.
Finally, in order to advance stage by stage we add one rule in the form:\vspace{-2ex}
\begin{equation}
\label{eq:stratifed_2}
\texttt{Stage}^j_{ins}() \leftarrow \texttt{Stage}^i().
\end{equation}
\vspace{-1ex}
for each $0 < i < j < n$ and a rule: 
\begin{equation}
\label{eq:stratifed_1}
\texttt{Stage}^1_{ins}() \leftarrow \neg \texttt{Stage}^1(), \neg \texttt{Stage}^2(), \ldots, \neg \texttt{Stage}^n.
\end{equation}
to define when the first stage can start.

We are now going to prove that the transducer network derived from $\mathcal{P}$ actually computes the initial query $\mathcal{Q}$.
The main difference with respect to the proof of Proposition \ref{lm:embarassing_monotonic} is that now the query is non-monotonic and therefore each negative literal cannot be evaluated before all the related tuples are generate by the lower strata.
Let us proceed inductively: initially all the stages are false and only the rules implementing the shuffling can be evaluated.
In the successive super-step, the first stage is active.
Now, all the rules having just extensional relations of $\mathcal{Q}$ in the body are evaluated.
Let us denote with $q_R$ one of such rules. 
Since every previously sent fact is hashed over all the terms composing the tuple, and since every rule is connected, 
we have that at least a node which is able to satisfy $body(q_R)$ exists, and a set of facts will be sent.
No wrong tuple can be derived because every rule is connected.
All the newly derived intensional facts, plus the previous extensional tuples, are now sent to a set of nodes based on the parallelization strategy.
The queries of the successive stage will then be evaluated in the successive round, and again, by construction, they are all evaluated on a instance.
Let now assume we are at stage $m$ and that ever query has been evaluated sequentially on a instance until that point.
Again this means that a new set of facts will be sent, together with the previously sent ones.
At stage $m+1$ every rules is clearly still correctly evaluated.
Note that the $m+1$-th stage can take more than one round to produce all tuples since it is allowed to be recursive.
We finally have that every rule in $Q$ is evaluated on an instance by construction.
This concludes the proof.
\end{proof} 

\begin{example}
\label{ex:computable_monotonic}
Let $\mathcal{Q}$ be the following recursion-delimited, connected query:
\begin{equation*}
\begin{split}
&T(u, v) \leftarrow E(u, v), \neg F(u).\\
&T(u, w) \leftarrow E(u, v), T(v, w).
\vspace{-2mm}
\end{split}
\end{equation*}

\noindent which computes a transitive closure applied over a filtered set of edges.
A {\sc fo}-transducer parallelly computing the query is the following~\footnote{Note that nullary relations \texttt{Stage}$^i$ used to postpone the evaluation of rules can be actually omitted for this specific example: for clarity we however follow the technique used in the proof of Proposition~\ref{pr:embarassing_non_monotonic}.}:

\noindent\hrulefill
\vspace{-1mm}
\begin{equation*}
\begin{split}
\text{Schema: }&\Upsilon_{db} = \{E^{(2)}, F^{(1)}\}, \Upsilon_{snd} = \{S^{(2, 2)}, U^{(1,1)}, T^{(2, 2)}\}, \\&\Upsilon_{mem} = \{\texttt{Stage}^{1(0)}, \texttt{Stage}^{2(0)}\}, \Upsilon_{out} = \{Q^{(2)}\}\\
\text{Program: }&
\texttt{Stage}^1_{ins}() \leftarrow \neg \texttt{Stage}^1(), \neg \texttt{Stage}^2().\\
&\texttt{Stage}^2_{ins}() \leftarrow \texttt{Stage}^1().\\
&S_{snd}(u, v) \leftarrow E(u, v).\\
&U_{snd}(u) \leftarrow F(u).\\
&T_{snd}(u, v) \leftarrow S(u, v), \neg U(u), \texttt{Stage}^1().\\
&T_{snd}(v, w) \leftarrow S(u, v), T(u, w), \texttt{Stage}^2().\\
&Q_{out}(u, v) \leftarrow T(u, v).
\end{split}
\end{equation*}
\noindent\hrulefill
\end{example}

\stitle{Remarks}: $(i)$ Monotonic queries do not have the same problem as non-recursion-delimited queries: even if tuples are hashed unevenly, no retraction can occur because of monotonicity. 
$(ii)$ If a mechanism existed for which a recursive {\sc datalog} query could be rewritten in a non-recursive form, all connected non-monotonic queries would be non-broadcasting. 
The problem of determining if a given recursive query is equivalent to some non-recursive program is called the \emph{boundedness} problem, and, unfortunately, is undecidable \cite{Gaifman:1993:UOP:174130.174142}.
\vspace{1mm}

\begin{corollary}
\label{co:bounded-is-connected}
Every restricted query is connected.
\end{corollary}
\begin{proof}
The fact that every restricted query computes a connected one directly follows from the definition of connectedness. 
Conversely a restricted query cannot be unconnected as shown for instance in Examples~\ref{ex:not_connected} and \ref{ex:ct_parallel}.
\end{proof}

\noindent Figure \ref{fig:hierarchy_queries} depicts the query taxonomy as discussed so far.

\begin{figure}[t]
\centering
\includegraphics[width=0.6\columnwidth]{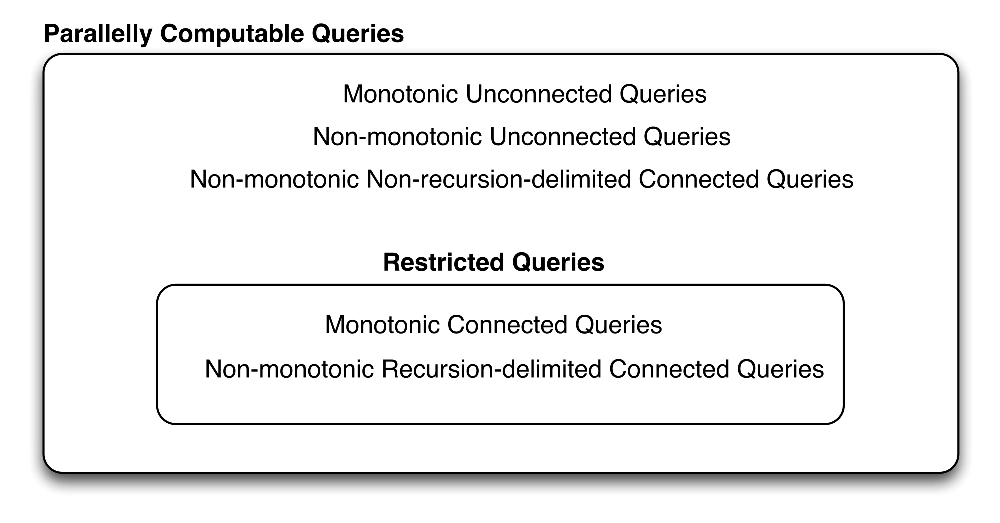}
\vspace{-5mm}
\caption{Query Taxonomy}
\label{fig:hierarchy_queries}
\end{figure}
%
%
%
%
%

\section{Coordination-freedom Refined}
\label{sec:refine_coordination}

We have seen in Section \ref{sec:co_co} that, for synchronous and reliable systems, a particular  notion of coordination-freedom is needed.
In fact we have shown that certain non-monotonic queries -- \eg Example \ref{ex:false_calm} -- that under the asynchronous model require coordination can be computed in a coordination-free way.
The key-point is that, as observed in \cite{AmelootNB13}, in asynchronous systems coordination-freedom is directly related to communica\-tion-freedom under ideal distribution. 
That is, if the distribution is right, no communication is required to correctly compute a coordination-free query because $(i)$ no data must be sent (the distribution is correct), and $(ii)$  no ``control message'' is required to obtain the correct result (the query is coordination-free).
However, due to its synchronous nature, in \emph{bsp} settings non-monotonic queries can be computed in general without resorting to coordination.
As already anticipated, this is due to the fact that the above concept of coordination is already ``baked'' into the \emph{bsp} model: each node is synchronized with every other one, hence ``control messages'' are somehow implicitly assumed. 
In this section we will introduce a novel knowledge-oriented perspective linking coordination with the way in which explicit and implicit information 
flow in the network (Contribution 3).
Under this new perspective, we will see that a specification needs coordination if, in order to maintain convergence, a node must have \emph{some form of information exchange} with all the other nodes.\vspace{-1ex}

\subsection{Syncausality}
\label{sec:syncausal}

Achieving coordination in asynchronous systems -- \ie systems where each process proceeds at an arbitrary rate and no bound exists on message delivery time -- is a very difficult and costly task.
A necessary condition for coordination in such systems is the existence of  primitives enforcing some control over the ordering of events  \cite{Hunt:2010:ZWC:1855840.1855851}.
In a seminal paper \cite{Lamport:1978:TCO:359545.359563}, Lamport proposed a synchronization algorithm exploiting the relation of \emph{potential causality} ($\rightarrow$) over asynchronous events.
According to Lamport, given two events $e, e^\prime$, we have that $e \rightarrow e^\prime$ if  $e$ happens before $e^\prime$~\footnote{The potential causality relation is often quoted as \emph{happened-before}.}, and thus $e$ might have caused $e^\prime$. 
From a high-level perspective, the potential causality relation models how information flows among processes, and therefore can be employed as a tool to reason on the patterns which cause coordination in asynchronous systems.
A question now arises: what is the counterpart of the potential causality relation for synchronous systems?
\emph{Synchronous potential causality} (\emph{syncausality} in short) has been recently proposed \cite{Ben-Zvi:2014:BLH:2605175.2542181} to generalize Lamport's potential causality  to synchronous systems.
%
Using syncausality we are able to model how information flows among nodes with the passing of time.
Given a run 
$\rho$ 
and  two points (\cf Section \ref{sec:transducer_network}) $(\rho^i, t)$, $(\rho^{j}, t^\prime)$ for not necessarily distinct nodes $i, j \in N$, we say that $(\rho^{j}, t^\prime)$ causally depends on $(\rho^{i}, t)$ if either $i = j$ and $t \leq t^\prime$ -- \ie a local state depends on the previous one -- or a tuple has been emitted by node $i$ at time $t$, and received by node $j$, 
with $t < t^\prime$~\footnote{Note that a point in a synchronous system is what Lamport defines as an event in an asynchronous system.}.
We refer to these two types of dependencies as \emph{direct}.

\begin{definition}
Given a generic system $\mathcal{S}$, and a run $\rho \in \mathcal{S}$,
we say that two points $(\rho^{i}, t)$, $(\rho^{j}, t^\prime)$ are related by a \emph{direct potential causality} relation $\rightarrow$
, if at least one of the following is true:
\begin{enumerate}
\item $t^\prime = t + 1$ and $i = j$; 
\item $t^\prime \geq t + 1$ and node $j$ receive a tuple at time $t+1$ which was sent by $i$ at time $t$; 
\item there is a point  $(\rho^{k}, t^{\prime\prime})$ s.t. $(\rho^{i}, t) \rightarrow  (\rho^{k}, t^{\prime\prime})$ and  $(\rho^{k}, t^{\prime\prime}) \rightarrow (\rho^{j}, t^\prime)$.
\end{enumerate}
\end{definition}

\noindent Note that direct dependencies define precisely Lamport's happen-before relation -- and hence here we maintain the same symbol $\rightarrow$.

Differently from asynchronous systems, we however have that a point in node $j$ can occasionally \emph{indirectly} depend on another point in node $i$ even if no fact addressed to $j$ is actually sent by $i$.
This is because $j$ can still draw some conclusion simply as a consequence of the \emph{bounded delay guarantee}  and \emph{deterministic delivery} (\textbf{S3}) of synchronous systems. 
That is, each node can use the common knowledge that every sent tuple is received at most after a certain bounded delay to reason about the state of the system.
The bounded delay guarantee can be modeled as an imaginary $\texttt{NULL}$ fact, as in \cite{Lamport:1984:UTI:2993.2994}.
Under this perspective, indirect dependencies appear the same as the direct ones, although, instead of a flow generated by ``informative'' facts, with the indirect relationship we model the flow of ``non-informative'', $\texttt{NULL}$ facts.
The interesting thing about the bounded delay guarantee is that it can be employed to specify when negation can be safely applied to a predicate.
In general, negation can be applied to a literal $R(\bar{u})$ when the content of $R$ is \emph{sealed} for what concerns the current round. 
In local settings, we have that such condition holds for a predicate at round $t^\prime$ if its content has been completely generated at round $t$, with $t^\prime > t$.
In distributed settings we have that, if $R$ is a $communication$ relation, being in a new round $t^\prime$ is not enough, in general, for establishing that its content is sealed.
This is because tuples can still be floating, and therefore, until we are assured that every tuple has been delivered, the above condition does not hold. 
The result is that negation cannot be applied safely~\footnote{
Note that we can reason in the same way also for every other negative literal depending on $R$.
For this reason here we consider only the case in which negation is only applied over \emph{communication} relations.}. 

We will model the fact that, in synchronous systems, the content of a $communication$ relation $R$ is stable because of the bounded delay guarantee $\mathbf{S3}$, by having every node $i$ emit a fact $\texttt{NULL}^i_R$ at round $t$, for every $communication$ relation $R$; then each \texttt{NULL}$^i_R$ fact will be delivered to node $j$ exactly by the next round. 
We thus have that the content of $R$ is stable once $j$ has received a $\texttt{NULL}^i_R$ fact from every $i \in N$. 
The sealing of a $communication$ relation at a certain round is then ascertained only when $|N|$ $\texttt{NULL}_R$ facts have been counted.
We call this \emph{Snapshot Closed World Assumption} (\emph{SCWA}): negation on relation $R$ can be applied by a node just when it has surely received a consistent snapshot \cite{Babaoglu:1993:CGS:302430.302434} of the global instance $I_R$. 

We can assume that the program generating \texttt{NULL} facts is adjoined to the transducer program. 
Recall however that not necessarily the $\texttt{NULL}^i_R$ facts must be physically sent: 
this in particular is true under the deterministic delivery semantics of \emph{bsp}, where the strike of a new round automatically seals all the $communication$ relations. In other words, under \emph{bsp} the program generating \texttt{NULL} facts is \emph{virtual}: no rule is fired and no fact is actually sent because the system definition already implicitly assume SCWA. Still these virtual facts will help us in reasoning about indirect flows of information, therefore we will still assume that such \texttt{NULL} facts are ``virtually" derived and sent.
Example \ref{ex-immaginary-NULL} below shows a concrete situation showing that although no \texttt{NULL} fact is sent, still SCWA holds.
The reader however should not believe that this is always the case.
In fact, in Section \ref{sec:async_del}, we will see how by simply dropping the deterministic delivery property \textbf{S3}$^{\prime}$ the situation becomes more complicated, and the virtual program generating \texttt{NULL} facts must actually be evaluated.
As a final remark, note that no $\texttt{NULL}^i_R$ fact need be issued if no SCWA must be enforced over $R$ or over a dependent relation.

\begin{definition}
\label{df:indirect}
Given a run $\rho$, we say that two points $(\rho^{i}, t)$, $(\rho^{j}, t^\prime)$ 
are
related by an \emph{indirect potential causality} relation $\dashrightarrow$, if $i \neq j$, $t^\prime \geq t + 1$ 
and a $\texttt{NULL}^i_R$ fact addressed to node $j$ has been (virtually) sent by node $i$ at round $t$.
\end{definition} 

\begin{example}
\label{ex-immaginary-NULL}
Consider the program of Example \ref{ex:false_calm}, a proper configuration and an initial instance.
At round $t+1$ we have that the SCWA does hold for relation $S$, and hence negation can be applied.
Note that if $R$ is empty in the initial instance, no fact is sent. 
Despite this, every node can still conclude at round $t+1$ that the content of $S$ is stable.
In this situation we clearly have an indirect potential causality relation. 
\end{example}

\begin{corollary}
\label{cor:necessary_indirect}
A necessary condition for an indirect potential causality relation to exist is the presence of a negated literal.
\end{corollary}

\eat{\noindent Note that both the direct and indirect causal relationships are defined as ``potential": the former because not necessarily every derived instance at round $t$ will causally effected the state of node $j$ at round $s^\prime$; the latter because is not automatic that the content of a relation is stable just because of the bounded delivery, as, for instance, can be deduced from Example \ref{ex:ct_parallel}.}


We are now able to introduce the definition of syncausality: a generalization of Lamport's happen-before relation which considers not only the direct information flow, but also the flow generated by indirect dependencies. 

\begin{definition}
Let 
$\rho$ be a run in the synchronous system $\mathcal{S}^{\emph{rsync}}$.
The \emph{syncausality} relation $\leadsto$ is the smallest relation such that:
\begin{enumerate}
\item if $(\rho^{i}, t) \rightarrow (\rho^{j}, t^\prime)$, then $(\rho^{i}, t) \leadsto (\rho^{j}, t^\prime)$; 
\item if $(\rho^{i}, t) \dashrightarrow (\rho^{j}, t^\prime)$, then $(\rho^{i}, t) \leadsto (\rho^{j}, t^\prime)$; and
\item if $(\rho^{i}, t) {\leadsto} (\rho^{j}, t^\prime)$ and $(\rho^{j}, t^\prime) {\leadsto} (\rho^{k}, t^{\prime\prime})$, then $(\rho^{i}, t) {\leadsto} (\rho^{k}, t^{\prime\prime})$.
\end{enumerate}
\end{definition}

\subsection{From Syncausality to Coordination}
\label{sec:coordination_sync}

We next propose the \emph{predicate-level syncausality} relationship, modeling causal relations at the predicate level. 
That is, instead of considering how (direct and indirect) information flows between nodes, we introduce a more fine-grained relationship modelling the flows between \emph{predicates and nodes}. 

\begin{definition}
Given a 
run $\rho \in \mathcal{S}^{\emph{rsync}}$, 
we say that two points $(\rho^{i}, t)$, $(\rho^{j}, t^\prime)$ are linked by a relation of \emph{predicate-level syncausality} $\overset{R}{\leadsto}$
, if any of the following holds:
\begin{enumerate}
\item $i = j$, $t^\prime = t +1$ and a tuple over $R \in \Upsilon_{mem} \cup \Upsilon_{out}$ has been derived by a query in $Q_{ins} \cup Q_{out}$; 
\item $R \in \Upsilon_{com}$ and node $i$ sends a tuple over $R$ at time $t$ addressed to node $j$, with $t^\prime \geq t + 1$;
\item $R \in \Upsilon_{com}$ and node $i$ (virtually) sends a $\texttt{NULL}^i_R$ fact at time $t$ addressed to node $j$, with $t^\prime \geq t + 1$;
\item there is a point $(\rho^{k}, t^{\prime\prime})$ s.t. $(\rho^{i}, t) \overset{R}\leadsto (\rho^{k}, t^{\prime\prime})$ and $(\rho^{k}, t^{\prime\prime})\overset{R}\leadsto (\rho^{j}, t^\prime)$.
\end{enumerate}
\end{definition}

\noindent 
\eat{Point 4 basically establishes that using the predicate-level syncausality relation we are able to infer the dependency graph of the $memory$,  $synchronous$ and $output$ predicates \footnote{Here we are indeed not considering the environment which stay in between the tuple emission and the tuple reception}.
The definition of \emph{predicate-level direct} and \emph{indirect potential causality} can be extracted from the predicate-level syncausality definition intuitively. }
We are now able to specify a 
condition for achieving coordination.
Informally, we have that \emph{coordination exists when all the nodes of a network reach a common agreement that some event has happened}.  
But the only way to reach such agreement is that a (direct or indirect) information flow exists between the node in which the event actually occurred, and every other node.
This is a sufficient and necessary condition because of the reliability and bounded delay guarantee of \emph{rsync} system.
We formalize this intuition using the (predicate level) syncausality relationship: 

\begin{definition}
We say that a correct specification class $\mathbfcal{CN}$ \emph{manifests the coordination pattern} if, for all possible initial instances $\textbf{I} \in inst(\Upsilon_{db})$, whichever run $\rho \in \mathbfcal{S}_{\mathbfcal{CN}}(\textbf{I})$ we select where $N$ is not trivial, a point $(\rho^{i}, t)$ and a $communication$ relation $R$ exist so that $\forall j \in N$ there is a predicate-level syncausality relation with $(\rho^{i}, t) \overset{R}{\leadsto} (\rho^{j}, t^\prime)$ and $t^\prime \leq *$.
\end{definition}

\noindent We call node $i$ the \emph{coordination master}.
A  pattern with a similar role has been named \emph{broom} in \cite{DBLP:journals/jancl/Ben-ZviM11}.
Note that the condition $t^\prime \leq *$ is needed since only those points which actually contribute to the definition of the final state are of interest, while all the others can be ignored.

\stitle{Remark}: The reader can now appreciate to which extent coordination was already ``baked''  inside the broadcasting synchronous specifications of Section \ref{sec:synchronous_transducer_network}.
Note that broadcasting, in \emph{bsp}, brings coordination.
This is not necessarily true in asynchronous systems.
\vspace{1mm}

\noindent Intuitively, the coordination master is where the event occurs.
If a broadcasting of (informative or non-informative) facts occurs, then such event will become common knowledge among the nodes.
On the contrary, if broadcasting is not occurring, common knowledge cannot be obtained, therefore if the correct final outcome is still reached, this is obtained without coordination. 
\eat{
\noindent In practice, 
a coordination protocol is required if either the class is not correct, 
or the class is correct but all possible strategies will rise the coordination pattern. 
In the former case a coordination protocol is needed in order to make the specification independent.
In the latter a coordination mechanism was already injected into the program.
}
\noindent That is, if at least a non-trivial configuration exists s.t. the coordination pattern doesn't manifest itself, we have coordination-freedom:

\begin{definition}
\label{df:specification_coordination_free}
Given a correct class $\mathbfcal{CN}$ and an initial instance \textbf{I}, 
we say that $\mathbfcal{CN}$ is \emph{coordination-free} if 
a non-trivial configuration $(N, D, H)$ can be selected for which $\mathcal{S}_{\mathcal{N}}(N, D, H, \textbf{I})$ does not manifest the coordination pattern, where $\mathcal{N} \in \mathbfcal{CN}$. 
\end{definition}
\noindent \kWlog, we will also dub the specifications belonging to $\mathbfcal{CN}$ as \emph{coordination-free}. 
From Definition \ref{df:specification_coordination_free} we can deduce the following proposition:
\begin{proposition}
\label{pr:coordination-bounded}
Every coordination-free specification parallelly computes a restricted query.
\end{proposition}
\begin{proof}
If a specification is coordination-free, the only flow of information is the direct flow.
In addition, the direct flow must be such that a master node does not exist, \ie no $communication$ relation is allowed to have a key set to $0$ because no broadcasting must occur.
\end{proof}

\noindent The reverse result clearly does not hold: a restricted query might require coordination, \eg non-monotonic, connected, recursion-delimited queries are restricted and not coordination-free.

Note that our definition of coordination-freedom based on syncausality relation makes it rather intuitive, in contrast with the original, declarative definition of~\cite{AmelootNB13}.

\subsection{From Coordination-freedom to Communication-freedom}

\cite{AmelootNB13} relates coordination-freedom with absence of communication under ideal distribution. It would then also be interesting to explore which relationship exists between our definition of coordination-freedom and communication.
Since in a coordination-free specification broadcasting queries are not strictly needed, we can deduce that every coordination-free specification can be made communication-free. That is, at least a configuration exists for which the correct result is computed without emitting any fact: the trivial case is the configuration in which the partition function installs the full initial instance on one node only, and $H$ addresses every constant to the same node. 
\eat{\begin{proposition}
\label{pr:communication_free}
Let $\mathcal{N}$ be a coordination-free specification. 
For every initial instance $\textbf{I} \in \Upsilon_{db}$ a not trivial configuration $(N, t,$ $P, H)$ exists so that $\mathcal{N}_{N\!, t\!, P\!, H}(\textbf{I})$ is communication-free. 
\end{proposition}
}

\begin{example}
As an example of a coordination- and communication-free specification consider the following {\sc ucq}-network $\mathcal{N}$ computing the transitive closure of the relation $R^{(2)}$: each node computes the closure of $R$ on its local data and then emits the derived atoms. 

\noindent\hrulefill
\vspace{-1mm}
\begin{equation*}
\begin{split}
\text{Schema: }&\Upsilon_{db} = \{R^{(2)}\}, \Upsilon_{com} = \{S^{(1, 2)}\}, \Upsilon_{out} = \{T^{(2)}\}\\
\text{Program:}\text{ }&T_{out}(u, v) \leftarrow R(u, v).\\
&T_{out}(v, w) \leftarrow S(u, v), T(u, w).\\
&S_{snd}(u, v) \leftarrow T(u, v).
\end{split}
\end{equation*}

\noindent\hrulefill

\noindent $\mathcal{N}$ is oblivious and $\mathbfcal{S}^{\emph{bsp}}_{\mathcal{N}}(\textbf{I})$ is convergent for every initial instance \textbf{I}.
Since there is no negation, we have only to show that for every initial instance, a configuration exists such that the sending queries are not broadcasting.
Consider $D$ and $H$ such that the full instance is installed on one node, and every constant is hashed to that the same node.
$\mathcal{N}_{N\!, D\!, H}$ is communication-free.
\end{example}

\noindent We can therefore deduce that coordination-freedom might be a sufficient condition for a specification to be communication-free; however it is not a necessary condition, as shown by the next example.

\begin{example}
\label{ex:false_calm_unary}
Let $\mathcal{Q}$ be the following non-monotonic query: 
\begin{equation*}
Q(v) \leftarrow R(u, v), \neg T(u).
\end{equation*}

\noindent The following {\sc fo}-transducer network parallelly computes $\mathcal{Q}$: 
\vspace{10ex}

\noindent\hrulefill
\vspace{-1mm}
\small
\begin{equation*}
\begin{split}
\text{Schema: }&\Upsilon_{db} = \{R^{(2)}, T^{(1)}\}, \Upsilon_{mem} = \{\texttt{Ready}^{(0)}\}, \phantom{aa} \\&\Upsilon_{com} = \{S^{(1, 2)}, U^{(1, 1)}\}, \Upsilon_{out} = \{Q^{(1)}\}.\\
\text{Program: }&S_{snd}(u, v) \leftarrow R(u, v).\\
&U_{snd}(u) \leftarrow T(u).\\
&\texttt{Ready}_{ins}() \leftarrow \neg \texttt{Ready}().\\
&Q_{out}(v) \leftarrow S(u, v), \neg U(u), \texttt{Ready}().
\end{split}
\end{equation*}
\normalsize
\noindent\hrulefill

\noindent The specification is non-monotonic (and thus requires coordination), restricted, and can be made com\-munica\-tion-free.
Consider in fact a distribution function $D$ which installs the entire instance on a node $i$.
Assume then $H$ such that a hash function exists by which every emitted tuple is addressed to $i$, and non-trivial $N$.
Whichever initial instance \textbf{I} is given, we clearly have that $\mathcal{N}_{N\!, D\!, H}$ is communica\-tion-free.
\end{example}

\noindent Unfortunately, coordination-freedom is undecidable in general~\footnote{Recall that for a specification to be coordination-free, it must first of all be independent. Independence is undecidable (\cf Appendix A).}.
However, from the above intuitions we can draw the following: 
\begin{proposition}
\label{te:epic}
Every restricted query is parallelly computable
by a hashing specification which can be made communication-free.
\end{proposition}
\begin{proof}
By definition every restricted query is parallelly computable by a specification $\mathcal{N}$.
We have to show that $\mathcal{N}$ can be made communication-free.
Consider a configuration where $N$ is non-trivial and $D$, $H$ are such that the full initial instance is installed on node $i$, and a hash function exists so that every constant is hashed to the same node $i$.
We have that $\mathcal{N}_{N\!, D\!, H}$ is independent by definition, and communication-free.
\end{proof}

\begin{corollary}
Every coordination-free specification is communication-free.
\end{corollary}

\noindent \eat{This prove why the previous definition of coordination-freedom was not appropriate for our context: coordination-freedom was related with communication-freedom, and certain non-monotonic queries exists which are communication-free.}
With Proposition \ref{te:epic} we have described one of the characteristics of restricted queries: they are communica\-tion-free.
In the next section we will see that a larger class of queries can be computed in a communication-free way. These will be called \emph{embarrassingly parallel queries}~\footnote{The term embarrassingly parallel comes from the parallel computing field, where it refers to the class of problems parallelly solvable by a set of tasks, without resorting to communication \cite{Foster:1995:DBP:527029}.}.

\section{CALM in Rsync Systems}
\label{sec:sync_calm}

As we have seen, the original version of the CALM principle as postulated in Conjecture \ref{cn:calm_new} is trivially not satisfiable in \emph{bsp} systems, because some monotonic queries exist -- \ie unconnected queries, \cf Section \ref{sec:parallel} and Example \ref{ex:not_connected} -- that are not coordination-free in the sense of Definition \ref{df:specification_coordination_free}.
We will then prove the CALM conjecture just for the remaining class of monotonic queries.
In the remainder of the section we will then close the circle by discussing how the notion of coordination introduced in \cite{AmelootNB13} collapses into the one we proposed, once the synchronization constraints are weakened.
 

\subsection{The CALM Conjecture for \emph{bsp} Systems}

Let us  first introduce the following lemma:
\begin{lemma}
\label{lm:calm_only_if}
Let $\mathcal{L} \subseteq$ {\sc datalog}$^\neg$ be a query language. Every query that is parallelly computed by a coordination-free $\mathcal{L}$-transducer network is monotone and connected.
\end{lemma}
\begin{proof}
Let $\mathcal{N}$ be a coordination-free $\mathcal{L}$-transducer network parallelly computing a query $\mathcal{Q}$. 
By applying Proposition~\ref{pr:coordination-bounded} and Corollary~\ref{co:bounded-is-connected} we know that $\mathcal{Q}$ is connected, but might still be non-monotonic (\cf Figure~\ref{fig:hierarchy_queries}).
It remains then to show that $\mathcal{Q}$ is indeed monotonic.
Let \textbf{I} and \textbf{J} be two initial instances over $\mathcal{D}_{in}$ such that $\textbf{I} \subseteq \textbf{J}$.  
We have to show that $\mathcal{Q}(\textbf{I}) \subseteq \mathcal{Q}(\textbf{J})$. 
Assume first that $adom(\textbf{J} \setminus \textbf{I}) \cap adom(\textbf{I}) = \emptyset$. Our intuition is that, under this assumption, a configuration exists such that communication is not required to prove monotonicity. We will then use such configuration to show that indeed this holds also for any arbitrary pair of instances. 
Since $\mathcal{N}$ parallelly computes $\mathcal{Q}$, $\mathcal{N}$ is independent, therefore a non-trivial configuration exists such that $D$ installs $\textbf{I}$ on one single node, while $\textbf{J} \setminus \textbf{I}$ is completely installed in another node. 
In addition, assume a hash function in $H$ such that all the constant in \textbf{I} are hashed to the same node $i$ in which \textbf{I} is installed, while all the constant in $adom(\textbf{J} \setminus \textbf{I})$ are hashed to the node $j \neq i$ in which $\textbf{J} \setminus \textbf{I}$ resides.
Consider a fact $\textbf{f} \in \mathcal{Q}(\textbf{I})$. 
By contruction $\textbf{f}$ will appear in the output of $\mathcal{N}_{N, D, H}(\textbf{I})$ at node $i$ at a certain round $t$.
Consider now the case in which the input instance is $\textbf{J}$.
We have a node $j \ne i$ such that $I^j_{db} = \textbf{J} \setminus \textbf{I}$, while again $I^i_{db} = \textbf{I}$.
 Let us consider the point $(\rho, t)$ of the run $\rho \in \mathcal{S}^{\emph{bsp}}_{\mathcal{N}}(N, D, H, \textbf{I})$.
Because local transitions are deterministic, and no fact in $\textbf{J} \setminus \textbf{I}$ can be addressed from $j$ to $i$, \textbf{f} is output also in run $\rho$.
Again, by construction, being $\mathcal{N}$ independent, $\mathcal{N}_{N, D, H}(\textbf{J})$ parallelly computes the query $\mathcal{Q}(\textbf{J})$, therefore $\textbf{f}$ must also belong to $\mathcal{Q}(\textbf{J})$.

Consider now the communication-free specification $\mathcal{F}$ built from $\mathcal{N}$:
we can freely use this procedure since $\mathcal{N}$ is communication-free by Proposition \ref{te:epic}.
We then have that $\mathcal{N}_{N, D, H}(\textbf{I})$ = $\mathcal{F}_{N, D, H}(\textbf{I})$ and, similarly, $\mathcal{N}_{N, D, H}(\textbf{J})$ = $\mathcal{F}_{N, D, H}$ $(\textbf{J})$.
Consider now two generic input instances $\textbf{J}^\prime, \textbf{I}^\prime$ with $\textbf{J}^\prime \supseteq \textbf{I}^\prime$, and the same distribution function $D$ installing \textbf{I}$^\prime$ on node $i$ and $\textbf{J}^\prime \setminus \textbf{I}^\prime$ on node $j$.
Also in this case we have that $\mathcal{F}_{N, D, H}$ parallelly computes $\mathcal{Q}$ and,  reasoning as above, $\mathcal{F}_{N, D, H}(\textbf{I}^\prime) \subseteq \mathcal{F}_{N, D, H}(\textbf{J}^\prime)$, thus $\mathcal{F}$ is monotonic.
As a consequence, $\mathcal{N}$ is also monotonic.
\end{proof}

\noindent We are now able to prove the restricted version of the CALM conjecture for \textit{bsp} systems (Contribution 4):

\begin{theorem}
A query can be parallelly computed by a coor\-dination-free transducer network iff it is monotone and connected.
\end{theorem}

\begin{proof}
Starting from the if direction, by Proposition \ref{lm:embarassing_monotonic} we know that a connected {\sc datalog} (\ie monotonic) query can be parallelly computed by an oblivious hashing transducer network $\mathcal{N} \in \mathbfcal{CN}$.
It remains to show that $\mathcal{N}$ is coordination-free.
We can notice that, because the transducer network is monotonic, no coordination pattern can occur because of indirect information flow (from Corollary~\ref{cor:necessary_indirect}).
On the other hand, a coordination pattern  might occur because of direct information flow caused by broadcasting rules. 
Let $\mathcal{N}^\prime \in \mathbfcal{CN}$ be a specification (not necessarily different from $\mathcal{N}$) such that $\mathcal{K}$ is restricted.
Note that a restricted specification exists because every connected {\sc datalog} query is restricted by Proposition \ref{lm:embarassing_monotonic}. 
It remains to show that a non-trivial configuration $(N, D, H)$ exists, such that for every initial instance \textbf{I}, a run $\rho \in \mathcal{S}^{bsp}_{\mathcal{N}^\prime}(N, D, H, \textbf{I})$ exists where the coordination pattern does not appear.
Let us assume $H$ to contain a hash function such that every fact emitted is always addressed to the same node. 
Indeed the coordination pattern cannot exists in $\rho$ since all the tuples are not broadcasted but addressed to the same node.
Finally, $\mathcal{N} \in \mathbfcal{CN}$, therefore $\mathcal{N}_{N, D, H}$ correctly computes the query.

For what concerns the only-if direction, it is covered by Lemma \ref{lm:calm_only_if}.~\footnote{Note that the Lemma -- and thus the Theorem -- still holds  although it is well known that a set of monotonic queries exists which are not expressible in {\sc datalog} \cite{AfratiCY95}.}
\end{proof}

\begin{corollary}
A {\sc datalog} query can be parallelly computed by a coor\-dination-free transducer network iff it is monotone and distributes over components.
\end{corollary}
\begin{proof}
The corollary directly follows from Theorem 1 and Propositions \ref{pr:connected1} - \ref{pr:connected2}.
\end{proof}
%
%

\eat{As a preliminary answer to this question we can try to give a different reading of the CALM principle, by relating com\-munication-freedom (instead of coordination-freedom) with monotonicity.
\begin{pr}
\label{pr:communication_free_CALM}
Every oblivious specification parallely computing a datalog query can be made communication-free.
\end{pr}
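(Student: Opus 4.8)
The plan is to reuse the ``collapse onto a single node'' argument that already underlies Proposition~\ref{pr:communication_free} and Proposition~\ref{te:epic}, but to observe that nothing in that argument actually needs the query to be chained: it exploits only convergence together with obliviousness. So let $\mathcal{N}$ be an oblivious specification that parallelly computes a Datalog query $\mathcal{Q}_{out}$. By the definition of parallel computability, $\mathcal{N}$ is independent, hence in particular partition- and strategy-independent; consequently the class $\mathbfcal{S}^{\emph{rsfd}}_{\mathcal{N}\!, \textbf{I}}$ is convergent for every $\textbf{I}$ and delivers $\mathcal{Q}_{out}(\textbf{I})$ no matter how $P$ and $H$ are selected. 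I would use this freedom to pick a configuration that routes all traffic back to its source.

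First I would fix an arbitrary initial instance $\textbf{I}$ and exhibit a non-trivial configuration $(N, t, P, H)$ that funnels all communication onto one node. Concretely, take $N$ with $|N|>1$, let $P$ install the entire instance $\textbf{I}$ on a single node $i$ (leaving every other node empty), pick $t$ arbitrarily, and let $H$ consist of hash functions sending every constant of $\textbf{dom}$ to $i$, so that $\bigcup_{h\in H} h(\textbf{dom}) = \{i\}$. Under this $H$ every $emit$ relation is addressed to $i$ alone: a bounded fact has $\mathcal{H}(R(\bar{\texttt{u}})) = \bigcup_{l\in 1..k} h(\texttt{u}_l) = \{i\}$, and a broadcasting relation ($k=\infty$) has $\mathcal{H}(R(\bar{\texttt{u}})) = \bigcup_{h\in H} h(\textbf{dom}) = \{i\}$ as well.

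Next I would verify that this configuration actually makes $\mathcal{N}$ communication-free, i.e.\ that every fact emitted by a node hashes back to that same node. For node $i$ this is immediate: by genericity $adom(\mathcal{Q}_{out}(\textbf{I}))\subseteq adom(\textbf{I})$, so every fact it emits carries only constants of $adom(\textbf{I})$, all of which hash to $i$, and by the previous paragraph even broadcasting emissions land on $i$. For a node $j\neq i$ I would argue by induction on the round that $j$ stays silent: it starts with an empty $database$, it never receives any fact (everything is routed to $i$), and since $\mathcal{N}$ is space oblivious its queries cannot fabricate facts from the $system$ relations; because Datalog is positive its $memory$ never gains a fact either, so no emission query of $j$ ever fires. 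Hence the communication-freeness condition holds vacuously at every $j\neq i$. Finally, independence yields $\mathcal{N}_{N\!, t\!, P\!, H}(\textbf{I}) = \mathcal{Q}_{out}(\textbf{I})$ even though this $H$ is partitioned, so the collapsed network is simultaneously correct and communication-free, which is exactly the claim.

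The step I expect to be delicate is the silence of the empty nodes, and this is precisely where space obliviousness is indispensable. Were a query allowed to read \texttt{Id} or \texttt{All}, an empty node $j$ could emit, say, a fact built from its own identifier; that fact would then hash to $i$ rather than to $j$, violating the ``$\mathcal{H}(R(\bar{\texttt{u}}))=i$'' requirement for the emitting node. Ruling this out, together with the absence of negation in Datalog -- which guarantees an empty local state can only stay empty across rounds -- is the crux; the remainder is the same routing bookkeeping already used for Propositions~\ref{pr:communication_free} and~\ref{te:epic}, now freed from the chained hypothesis and thus covering every Datalog query, including the unchained ones.
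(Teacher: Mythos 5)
Your proof is correct and takes essentially the same route as the paper's: collapse the computation onto a single node $i$ by installing the full instance there and choosing $H$ so that every constant (and hence every broadcast emission) is hashed to $i$, so that all other nodes stay silent and $i$ evaluates the query on a proper instance, with correctness following from independence. Your explicit verification that the empty nodes remain silent --- via space obliviousness and the positivity of Datalog --- is a more careful rendering of the step the paper dispatches with ``their instance is empty and the query is monotone,'' but it is the same argument, not a different one.
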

}
 
So far we have considered coordination-freedom. 
But what about communication-freedom?
As previously mentioned, we name the class of communication-free queries as \emph{embarrassingly parallel}.

\begin{definition}
\label{df:embarrassingly}
Let $\mathcal{L}$ be a language and $\mathcal{Q}$ a $\mathcal{L}$-query.
$\mathcal{Q}$ is \emph{embarrassingly parallel} if it is parallelly computable by a specification that can be made communication-free. 
\end{definition}

\noindent As a preliminary answer to the above question, we can try to give a different reading of the CALM principle, by relating communication-freeness (instead of coordination-freeness) with monotonicity.

\begin{lemma}
\label{pr:communication_free_CALM}
Every oblivious specification parallelly computing a {\sc datalog} query can be made communication-free.
\end{lemma}
\begin{proof}
Assume that a proper initial instance \textbf{I} is given.
Consider first a coordination-free specification $\mathcal{N}$ computing a restricted monotonic query.
We have already seen in Preposition \ref{te:epic} that a configuration exists which makes $\mathcal{N}$ communication-free.
Consider now the case in which the monotonic specification $\mathcal{N}$ is computing a query $\mathcal{Q}$ which is not restricted.
Consider a configuration $(N, D, H)$ as described in Preposition \ref{te:epic}: $D$ installs the full instance on a unique node $i$, $H$ addresses all the emitted facts to $i$, and $N$, is arbitrary but not trivial.
$\mathcal{N}(N, D, H, \textbf{I})$ is communication-free.
We have to show that $\mathcal{N}(N, D, H, \textbf{I})$ computes the query $\mathcal{Q}(\textbf{I})$.
Consider first all the nodes $j \neq i$ in $N$.
Such nodes will output nothing since their instance is empty and the query is monotone.
For what concern $i$, it exactly computes $\mathcal{Q}(\textbf{I})$ since it contains the full instance.
\end{proof}

\noindent We can now 
state the following 
Theorem:
\begin{theorem}
\label{te:nice}
Let $\mathcal{L}$ be a query language containing {\sc ucq}.
For every query $\mathcal{Q}$ expressible in $\mathcal{L}$, the following are equivalent:
\begin{enumerate}
\item $\mathcal{Q}$ can be parallelly computed by an oblivious, inflationary transducer network; and 
\item $\mathcal{Q}$ is embarrassingly parallel.
\end{enumerate}
\end{theorem}
\begin{proof}
$2 \Rightarrow 1$ follows from Proposition \ref{lm:embarassing_monotonic}.
It remains to prove that every oblivious and inflationary transducer can be made communica\-tion-free.
We will show that the only kind of queries which can be parallelly computed and are not communication-free are the non-recursion-delimited queries.
From Lemma~\ref{pr:communication_free_CALM} we already know that monotonic queries are communication-free.
From Proposition~\ref{te:epic} we instead know that non-monotonic restricted recursion delimited queries also communication-free. 
We now proceed by contradiction: assume $\mathcal{N}$ is a non-monotonic transducer network computing the query $\mathcal{Q}$, and $\mathcal{Q}$ is not recursion-delimited, \ie a recursive stratum $m$ exists, which is followed my another stratum $m+1$.
By definition of stratification the stratum $m+1$ cannot be evaluated before stratum $m$ has terminated, otherwise wrong facts could be derived.
The transducer, in order to correctly compute the query, must therefore be able to detect when the recursion is terminated and hence the evaluation of the $m+1$-th stratum can start.
Since each node composing the network could end up having different (overlapping) partitions of the initial instance, different nodes might terminate the recursive computation in different rounds.
Note that, although a partition might exist for which recursion terminates at the same round for all nodes, $\mathcal{N}$ is independent by definition, therefore it must be able to compute $\mathcal{Q}$ even in the case in which recursion terminates unevenly.
Every node can detect that every other node has terminated its local recursive computation only by a direct information flow.
In particular, a broadcasting communication must be executed since every node must communicate to every other node that it has finished its local computation.
To express that a node has finished its computation, \texttt{id} must be clearly read, otherwise some receiving node might not be able to identify which node has actually terminated the computation.
Every node, in addition, in order to deduce that every other node has terminated its local computation, must read the \texttt{All} relation to know which nodes in the network have communicated that their local computation is completed.
Clearly this is not an oblivious specification since the $system$ relations are employed.
\end{proof}

\stitle{Discussion}:
Summarizing, we have seen that three different classes of coordination patterns can be identified under the \emph{bsp} semantics (Contribution 5), all of them requiring acquisition of common knowledge of a property: \eat{one which is required to correctly evaluate negated atoms in non-monotonic queries that are recursion bounded; it is com\-munication-free and it does not need to be implemented because enforced just by the semantics of the system.
One which is implemented through broadcasting and can be communication-free; and one which instead necessarily requires communication and the access to $system$ relations.}
\emph{snapshot coordination}, which implements the SCWA, and require the common knowledge of a relation instance to be globally sealed; \emph{broadcasting coordination} is required for unconnected queries and necessitate each node to know that a relation instance is not empty; and \emph{synchronized coordination} requiring common knowledge of local termination of all the nodes.
Broadcasting coordination is simple to implement because it only requires a broadcasting query.
Snapshot coordination exploits the indirect information flow and hence is communication-free, and is used by any non-monotonic, recursion-delimited query.
Finally, synchronized coordination necessarily requires access to $system$ relations, since non-monotonic, non-recursion-delimited queries must be synchronized by a direct information flow in order to  maintain consistency (\cf Example \ref{ex:ct_parallel}).
%
\eat{An interesting difference between the three coordination patterns is that the first two are inherent of the system's semantics, while synchronized coordination is tolerant to any changes over the distributed system model. 
In the next section we will hence show 
how the semantics of snapshot and broadcasting coordination patterns changes if we weaken the constraint of the system definition.
%
So far, in fact, we have only considered the deterministic delivery model, \ie tuples arrive exactly at time $\theta(s + 1) + var$ if emitted at round $s$.
But what would happen if we assumed less constrained systems?
Below, we first take a look at systems with non-deterministic delivery but bounded variance (condition $\textbf{S3}^{\prime\prime}$ of Section \ref{sec:synchronous_systems} does not hold); and 
then we will conclude with \emph{rsync} systems, \ie systems with non-deterministic delivery and arbitrary finite variance. 
}
Figure \ref{fig:hierarchy_queries_new} updates Figure \ref{fig:hierarchy_queries} with the new results we have  just discussed.

\begin{figure}[h]
\centering
\includegraphics[width=0.6\columnwidth]{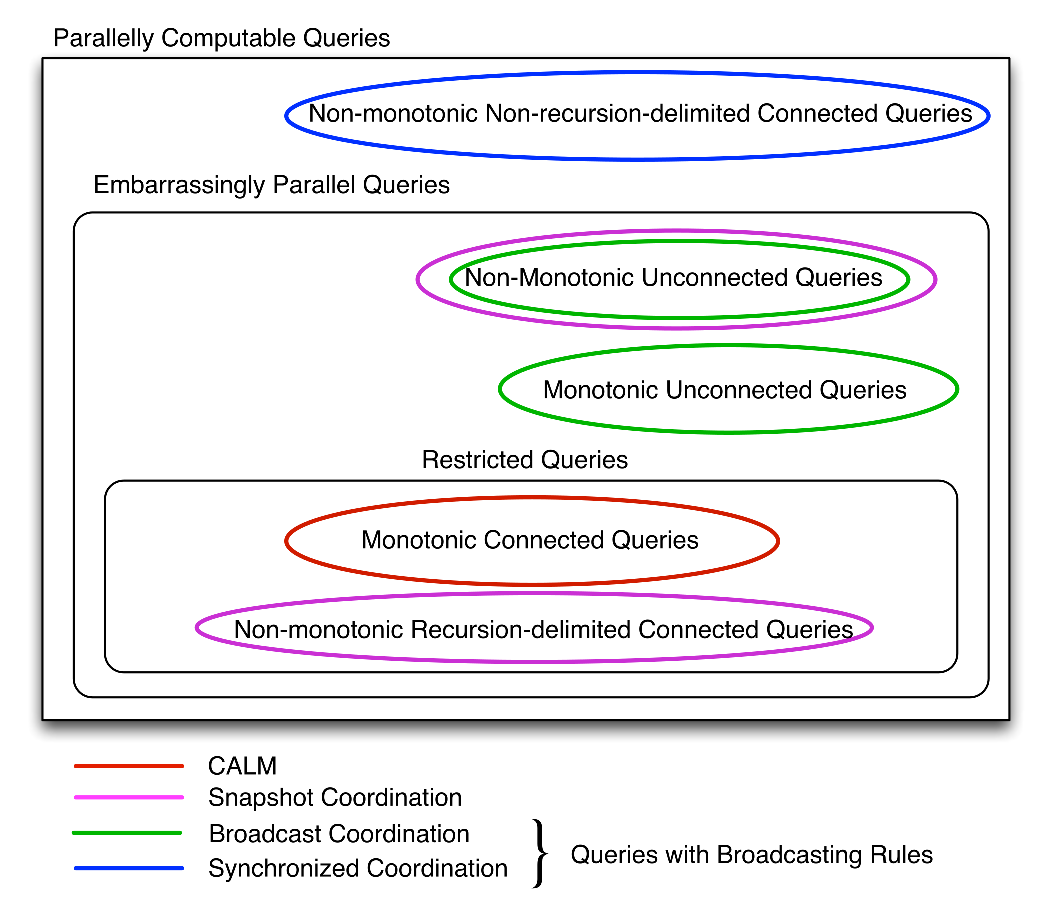}
\caption{Complete Analysis of Parallelly Computable Queries}
\label{fig:hierarchy_queries_new}
\end{figure}

An interesting difference among the three coordination patterns is that the first two depend on the system's semantics, while synchronized coordination is tolerant to any changes over the distributed system model. In the next section we will hence show how the semantics of snapshot and broadcasting coordination patterns changes if we weaken the constraint of the system definition. So far, in fact, we have only considered the deterministic delivery model (\cf Section~\ref{sec:synchronous_systems}), \ie tuples arrive exactly after $\Delta$ physical time once emitted. But what happens if we assume less constrained systems, \eg in MapReduce we start to pipeline Reducers with Mappers? Below we first take a look at systems with non-deterministic delivery but bounded delay (Section~\ref{sec:async_del}, Mappers and Reducers tasks can be pipelined but in a single MapReduce step), and then conclude with \emph{rsync} systems, \ie systems with non-deterministic delivery and arbitrary, finite delay (Section~\ref{sec:generic_snapshot_delay}, Mappers and Reducers are full pipelined).

\subsection{Coordination and Non-deterministic Delivery}
\label{sec:async_del}

In Section~\ref{sec:synchronous_systems} we have seen that \emph{bsp} systems assume that all emitted messages arrive exactly after $\Delta$ physical time (i.e., condition \textbf{S3}$^\prime$). In this Section we will instead assume that messages arrive non-deterministically within the $\Delta$ bound. 
Under this weaker condition we are not any longer certain about when successive rounds can start: if we let rounds start after $\Delta$ physical time (as under \emph{bsp} systems) we may spend unnecessary time waiting; conversely, if we start the next round right after all nodes have finished the current round, (i.e., before $\Delta$ time has elapsed, hence before all messages are received with certainty), we may receive late facts and eventually have to retract wrong deductions.  

Consider now  an invertible function $\theta$  mapping  each round number to the \emph{physical time} in which it occurs, and two values $\Delta_{max}$, $\Delta_{min}$ representing respectively the maximum and the minimum network latency of the given physical system.
Let us simplify our model by substituting property \textbf{S3}$^\prime$ with the following constraint, named \emph{bounded delay}:
\begin{description}

\item[\textbf{S3}$^{\prime\prime}$] Let $del = \Delta_{max} - \Delta_{min}$, with  $del \ll \theta(t+1) - \theta(t)$ for every pair of rounds $t, t +1$. 

\end{description} 				

\noindent \textbf{S3}$^{\prime\prime}$ specifies that, between two consecutive rounds, the variance of the communication delay is amply lower than the time spent for computation. 
From the above assumptions it follows that each tuple, derived by a send query at round $t$, will be available at the receiving site no later than the physical time $\theta(t+1) + del$, and that $\theta(t +1) \leq \theta(t+1) + del < \theta(t + 2)$, \ie facts are received during the successive round.
Note that although the delay is bounded, and hence we are assured that every fact is delivered during the successive round, the actual instant in which a fact is received falls non-deterministically in the range $[\theta(t+1)$, $\theta(t+1) + del]$. 
Henceforth, we will then use  $\mathcal{S}^{\emph{bsp-d}}$ to denote a \emph{rsync} system with \emph{bounded delay and non-deterministic delivery} (\emph{bsp-d}). Figure~\ref{fig:bsp-d} depicts how \emph{bsp-d} systems behave, which is in line with frameworks such as MapReduce online~\cite{CondieCAH10}, where key-value records are pipelined between Map and Reduce operations.

\begin{figure}[bt]
\centering
\includegraphics[width=0.45\columnwidth]{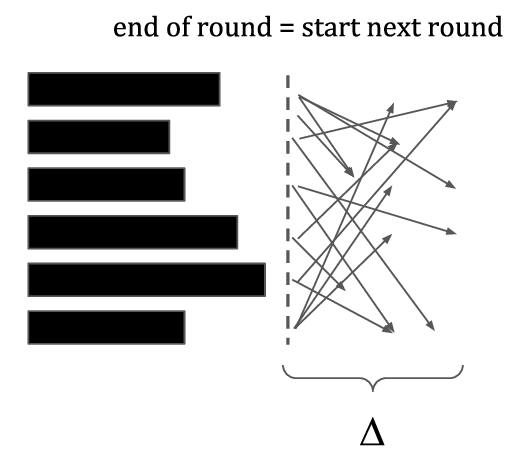}
\vspace{-3mm}
\caption{\emph{bsp-d} system computation model. Differently than \emph{bsp} systems, the next round starts right after the end of the current one. Additionally, data communication may take arbitrary (although bounded) time.}
\label{fig:bsp-d}
\end{figure}

Given a synchronous transducer network $\mathcal{N}$, if we assume a synchronous system with non-deterministic delivery and bounded delay$\mathcal{S}^{\emph{bsp-d}}$, we have that different behaviors arise based on the kind of transducer program. 
In fact, let us consider first the case in which the program is monotonic: in this circumstance no wrong result can be derived by definition, even if an emitted fact is received after the round has already started. 
Therefore, monotonic transducer networks behave equivalently under \textit{bsp-d} and \textit{bsp} systems.
The same thing cannot be stated for non-monotonic programs, as the next example shows.

\begin{example}
Consider the transducer of Example \ref{ex:false_calm} computing the emptiness query. 
Under non-deterministic delivery, it is not clear when negation can be safely applied over $S$.
For instance, if negation is immediately applied, it may happen that a previously sent fact 
appears later, therefore invalidating the derived results.
\end{example}

In synchronous systems with non-deterministic delivery we have that, in general, snapshot coordination is no more achievable ``indirectly'', without exchanging any message; under this model we can therefore appreciate more in detail the nature of snapshot coordination.

In order to explain how snapshot coordination can be implemented in \textit{bsp-d} systems, we first solve the problem under the constraint that communication\footnote{By communication here we mean both the nodes' local buffers and the actual communication medium.} implement \emph{First In First Out} (\emph{FIFO}) delivery\footnote{Although non-deterministic delivery may look impractical with FIFO communication, this is a simplifying assumption by which facts are communicated in sequential order but where a bounded delay may occur between consecutive facts.}, and  then consider the general case. 
As a first step, we introduce how \emph{monotonic and non-monotonic (stratified) aggregates} can be used in queries.

\subsubsection{Queries with Aggregates}
\emph{Aggregate relations} are usually employed in query languages to express \emph{aggregate queries}.
In the next subsections we will use aggregate relations in positive rule-heads and in the form $R(\Lambda <\bar{w}>)$, with $\Lambda$ one of the usual aggregate functions, and $\bar{w}$ a set of variables from the body 
\cite{RamakrishnanU95}.
Aggregate relations appear in  the heads of \textit{aggregation rules}: 
\begin{align}
R(\Lambda <\bar{w}>) \leftarrow B_1(\bar{u}_1), \ldots, B_n(\bar{u}_n).
\end{align}
%
If we denote with 
$\bar{\texttt{W}}$ the finite multi-set containing all the existing ground assignments of $\bar{w}$ which 
satisfy the body of the rule, we have that $R(\texttt{a})$ is true, where $\texttt{a} = \Lambda <\bar{\texttt{W}}>$. That is, \texttt{a} is the result of the application of $\Lambda$ to the multi-set $\bar{\texttt{W}}$ \cite{BeeriNRS87}.

We consider two different types of aggregate predicates: usual \emph{stratified aggregates}, and \emph{monotonic aggregates} \cite{MazuranSZ13}.
For the former, they are stratified and hence the entire body must the stable (no holes are allowed in the local knowledge base) before the aggregate function can be applied. 
We then always assume the aggregate predicates to depend negatively on every predicate composing the body.
This assumption is quite natural since head-aggregation rules can be easily rewritten as body-aggregation rules, which, in turn, can be specified using the stratified semantics of {\sc datalog}$^\neg$ with built-in relations \cite{MumickS95}.

For what concerns the latter type of aggregates (the monotonic ones), since aggregation is monotonic, it is, for instance, allowed to appear in recursive rules.
In order to differentiate between monotonic and stratified aggregation functions, we label the former with the $m$ prefix. 
While we will not explain in detail the semantics behind monotonic aggregation -- we suggest the interested reader to refer to Mazuran's paper or~\cite{DBLP:conf/amw/ZanioloYDI16,DBLP:journals/tplp/ZanioloYDSCI17,DBLP:conf/amw/ZanioloYIDSC18} for more recent evolvements-- we nonetheless remark here the main operational differences between the two types of aggregation: 
stratified aggregation always returns a single value, which is the application of the function $\Lambda$ over the stable multi-set $\bar{\texttt{W}}$ once its computation is terminated; for monotonic aggregation, whenever the system is fed with new tuples, new values are returned forming a monotonically evolving distribution.
For instance, if $m\_max<w>$ returns the max value of the term $w$, every time a new tuple is generated defining a new max value for $w$, $m\_max<w>$ will return it. 
Conversely, $max<w>$ will return the single maximum value for the stable multi-set $\texttt{W}$.

\subsubsection{Snapshot Coordination under FIFO}
\label{sec:FIFO}

Under the FIFO assumption we have that tuples are received in the same order in which they are locally derived~\footnote{Note that this does not mean that tuples that are derived by different nodes in a certain global order are also received in the same order. FIFO can therefore be seen as enforcing a partial order.}.
Recall that snapshot coordination -- implementing the SCWA -- is used to ascertain that a relation is sealed. 
We can reduce the problem of detecting the sealed state of a relation to the problem of detecting a global stable property in a distributed system \cite{Babaoglu:1993:CGS:302430.302434}, and therefore apply one of the well-known snapshot protocols working under FIFO \cite{Chandy:1985:DSD:214451.214456}.

Let $\mathcal{T}$ be a transducer used to parallelly compute a non-monotonic, recursion-delimited query $\mathcal{Q}$. 
In Section \ref{sec:coordination_sync} we have seen that $\texttt{NULL}$ messages are implicitly derived by send queries under the deterministic delivery assumption; by contrast, under non-deterministic delivery it might be necessary to explicitly send $\texttt{NULL}$ messages. 
Consider a relation $R$ occurring negated in the body of a rule in $\mathcal{Q}$. 
Let $\mathbf{B}$ be the body of the query $q_R \in Q_{snd}$ emitting $R$.
We can add to $Q_{snd}$ the following rules defining respectively a unary \emph{stratified aggregate} relation $\texttt{CntR}^{(1)}$, 
and a new unary $communication$ relation $\texttt{SealR}^{(0, 1)}$ emulating the $\texttt{NULL}$ message for $R$:
\begin{align}
\label{eq:NULL_FIFO_1}
&\texttt{CntR}(count <\bar{u}>) \leftarrow \mathbf{B}.\\
\label{eq:NULL_FIFO_2}
&\texttt{SealR}_{snd}(i) \leftarrow \texttt{CntR}(u), \texttt{Id}(i).
\end{align}

\noindent In this way, 
by exploiting the stratified semantics, each node $i$ can send the $\texttt{NULL}$ message for relation $R$ once the computation of the count of the number of tuples in $R$ is completed.
Since count is a stratified aggregate, \texttt{CntR} -- and therefore also \texttt{SealR} -- belongs to a higher stratum.  
In this way we are assured that the $\texttt{NULL}$ message is emitted after the instance over $R$ has been  completed.
Under the FIFO semantics we are then guaranteed that once a node receives a \texttt{SealR} tuple, the content of $R$ is sealed for what concerns that emitting node.
This clearly does not mean that $R$ is globally sealed, since a tuple produced by a different node can still be floating.
To have the SCWA hold on $R$, a node must have received a number of $\texttt{NULL}$ messages equal to the number of nodes composing the network: \ie negation is applied on a stable snapshot of $R$.
To obtain this, we can add to $\mathcal{T}$ the rules:
\begin{align}
\label{eq:snapshot_1}
&\texttt{CntSlR}(m\_count<u>) \leftarrow \texttt{SealR}(u).\\
\label{eq:snapshot_2}
&\texttt{CntAll}(count < u >) \leftarrow \texttt{All}(u).\\
\label{eq:snapshot_3}
&\texttt{FSR}() \leftarrow \texttt{CntSlR}(u), \texttt{CntAll}(u).
\end{align}

\noindent and we attach the final seal $\texttt{FSR}()$ to the queries in which $R$ is negated.
Queries (\ref{eq:snapshot_1}) - (\ref{eq:snapshot_3}) are used to define when \texttt{FSR} is true, \ie when $n$ $\texttt{NULL}$ messages have been received for relation $R$, with $n$ the number of nodes in the network.
Once \texttt{FSR} is true, negation can be safely applied over $R$, so the related query can be evaluated (if no other negative literal appears in the same query).
Note how we have employed monotonic and stratified aggregates: since we do not know when \texttt{SealR} is stable, we cannot apply to it stratified aggregates nor negation, while we can definitely use a stratified aggregate over \texttt{All}.
Interestingly, just by moving from a \emph{bsp} system to a \emph{bsp-d} system, both $system$ relations must be employed to implement snapshot coordination, and non-monotonic, connected, recursion-delimited queries are no longer embarrassingly parallel.
This is consistent with \cite{AmelootNB13}: non-monotonic queries are neither coordination- nor communication-free, and both \texttt{Id} and \texttt{All} relations are required.
In \emph{bsp-d} systems, we then have that syncausality degenerates into the Lamport's happens-before relation \cite{Lamport:1978:TCO:359545.359563}.
Figure \ref{fig:hierarchy_queries_ndet} depicts this new situation in which snapshot coordination code is injected into non-monotonic specifications.

\begin{figure}[h]
\centering
\includegraphics[width=0.6\columnwidth]{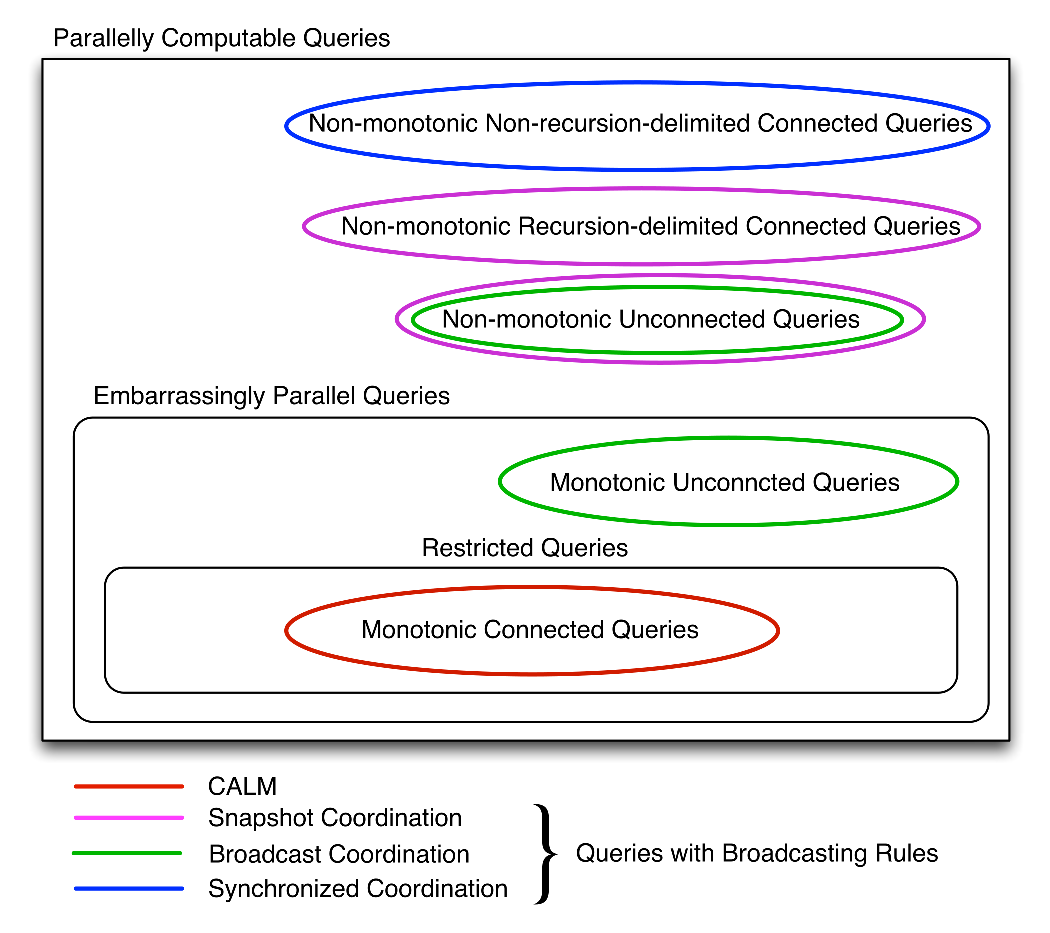}
\caption{Taxonomy for Parallelly Computable Queries under \emph{bsp-d}}
\label{fig:hierarchy_queries_ndet}
\end{figure}\vspace{-3ex}


\subsubsection{Generic Snapshot Coordination}
\label{sec:generic_snapshot}

If we drop the FIFO assumption, we can end up in a situation in which a $\texttt{NULL}$ message is received before a regular (informative) fact, therefore negation can end up being applied to a non-stable relation.
Indeed this problem is related to how a partial order of events can be enforced in distributed settings \cite{Lamport:1978:TCO:359545.359563}.
However, in our specific case, we are not interested in a complete ordering of the emitted tuples over $R$, but, instead, we just want to be able to state that \texttt{FSR} is true when $n$ $\texttt{NULL}$ messages have been received \emph{and} no other tuple over $R$ is still floating.
More concretely, we are interested in implementing just the \emph{gap-detection property} \cite{Babaoglu:1993:CGS:302430.302434} of ordered events, that is,  we want to be able to determine if, once an event (the $\texttt{NULL}$ message)  is received, there is some other event (sent tuples) happened before it, which has not been received yet.
Negation cannot, in fact, be applied until we are not guaranteed that our knowledge base has no gap. 
To implement this, query (\ref{eq:NULL_FIFO_2}) can be modified as follows:
\begin{align}
\label{eq:NULL_1}
&\texttt{SealR}_{snd}(i, u) \leftarrow \texttt{CntR}(u), \texttt{Id}(i).
\end{align}

\noindent where $\texttt{SealR}$ is now a binary relation containing also the number of tuples originally emitted for $R$.
Before applying (\ref{eq:snapshot_1}) - (\ref{eq:snapshot_3}) we have to ensure that the number of tuples over $R$ is equal to the number of tuples originally sent.
We then add to $\mathcal{T}$ the clauses:
\begin{align}
\label{eq:stable_1}
&\texttt{CntR}(m\_count <\bar{u}>) \leftarrow R(\bar{u}).\\
\label{eq:stable_2}
&\texttt{SmNR}(m\_sum<u>) \leftarrow \texttt{SealR}(i, u).
\end{align}

\noindent counting the number of tuples in $R$ and the total number of tuples over $R$ derived globally, and finally we modify eq. (\ref{eq:snapshot_3}) as follows:
\begin{align}
\label{eq:snapshot_4}
&\texttt{FSR}() \leftarrow \texttt{CntSlR}(u), \texttt{CntAll}(u), \texttt{SmNR}(v), \texttt{CntR}(v).
\end{align}

\noindent In this way we are ensured that negation can be applied over $R$ only if the proper number of $\texttt{NULL}$ messages is received and, at the same time, all the emitted $R$-facts have also been received.


\subsection{Coordination under Arbitrary Delay}
\label{sec:generic_snapshot_delay}

If now we assume that condition $\textbf{S3}^{\prime\prime}$ does not hold, we are into the initial \emph{rsync} semantics whereby delays can be arbitrary long although finite.
Surprisingly, in this situation we have that 
monotonic unconnected queries become coordination-free.
To see why this is the case, first remark that a fact emitted at round $t$ is still delivered at most at time $\theta(t+1) + del$, but, since $del$ is now arbitrary, we are not  assured that $\theta(t+1) \leq \theta(t+1) + del < \theta(t+2)$ any more.
Despite this, the notion of coordination 
still maintains its semantics, even if, in this case, the coordination pattern may span multiple rounds. \eat{\footnote{To clarify this point in respect to the definitions of Section \ref{sec:refine_coordination}, note that every time we write $s+1$ we actually meant $\theta^{-1}(\theta(s+1) + var)$ when communication is involved ($s+1$ is merely a simplification in the notation, as mentioned in Section \ref{sec:synchronous_systems}).}.}
Let $\mathcal{N}$ be a specification parallelly computing a monotonic unconnected query, \textbf{I} an instance, and $(N, D, H)$ a non-trivial configuration.
Under the \emph{rsync} semantics 
we have that the system $\mathcal{S}^{\emph{rsync}}_{\mathcal{N}}(N, D, H, \textbf{I})$ is composed by multiple convergent runs, modeling the fact that sent tuples can be non-deterministically received in different rounds.
A configuration $(N, D, H)$ can then be chosen -- \eg the one where $D$ installs the entire initial instance \textbf{I} on every node -- so that no coordination pattern arises because the final state is already reached without having any broadcasted fact been received. 

\begin{example}
Consider the monotonic unconnected query of Example \ref{ex:not_connected}.
Assume a \emph{rsync} specification $\mathcal{N}$ defined as in Example \ref{ex:not_connected} but where now $S$ is hashed on both attributes, $U$ is broadcasted, and $\mathcal{Q}$ contains also the query:
\small
$Q_{out}(u, v) \leftarrow S(u, v), T(\_)$.
\normalsize
Consider the non-trivial configuration $(N, D, H)$ in which $D$ installs the full instance \textbf{I} 
on every node, while $\mathcal{H}(I_S) \subset N$ -- \ie the instance over $S$ is not hashed to every node.
We then have that a run $\rho \in \mathcal{S}^{\emph{rsync}}_{\mathcal{N}}(N, D, H, \textbf{I})$ exists such that every fact emitted over $S$ at round $t$ is received by round $t^\prime \leq *$, while every fact over $U$ that should be sent to a node in $N \setminus \mathcal{H}(I_S)$ is received in a round $t^{\prime\prime} > *$.
Clearly, we still have that the correct output is returned since $I_T$ consists of at least one fact, and every sent $S$-tuple has been correctly received.  The class defined by $\mathcal{N}$ is 
coordination-free.
\end{example}

We can therefore conclude that the \emph{if} direction of the original CALM principle is fully satisfied under the \emph{rsync} semantics since every monotonic ({\sc datalog}) query can now be computed in a coordination-free way.
One can also show that indeed also the \emph{only-if} direction is satisfied.
The reader can now completely appreciate how the notion of coordination we introduced perfectly aligns with the one of \cite{AmelootNB13} when arbitrary delay comes into play (Contribution 6): embarrassingly parallel queries are all coordi\-nation-free. 
Nevertheless, our definition is more general since it can be seamlessly used in both synchronous and asynchronous systems.


\eat{\subsection{Coordination under Unbounded Delay}

To complete the picture, if now we consider \emph{rsud} systems, \ie system when just conditions \textbf{S1} - \textbf{S2} and \textbf{R1} - \textbf{R2} are satisfied, we are forced to inject code to enforce snapshot coordination.
In fact, a node can ends up in waiting forever before being sure that applying negation to a predicate is safe.
This because a fact emitted at round $s$ is still delivered at time $\theta(s+1) + \delta$, but, $\delta$ this is now unbounded. 
In systems with unbounded delay, we have that syncausality degenerate into the happen-before relation defined by Lamport \cite{Lamport:1978:TCO:359545.359563}.
We therefore have that the coordination pattern arises only when an instance exists which is delivered to every other node composing the network.
Conversely, the coordination pattern doesn't arise when a run exists such that the related specification is still convergent even if all broadcasted fact are not yet received.
}

Figure \ref{fig:hierarchy_queries_rsync} shows the new taxonomy when \emph{rsync} systems are considered. 

\begin{figure}[h]
\centering
\includegraphics[width=0.6\columnwidth]{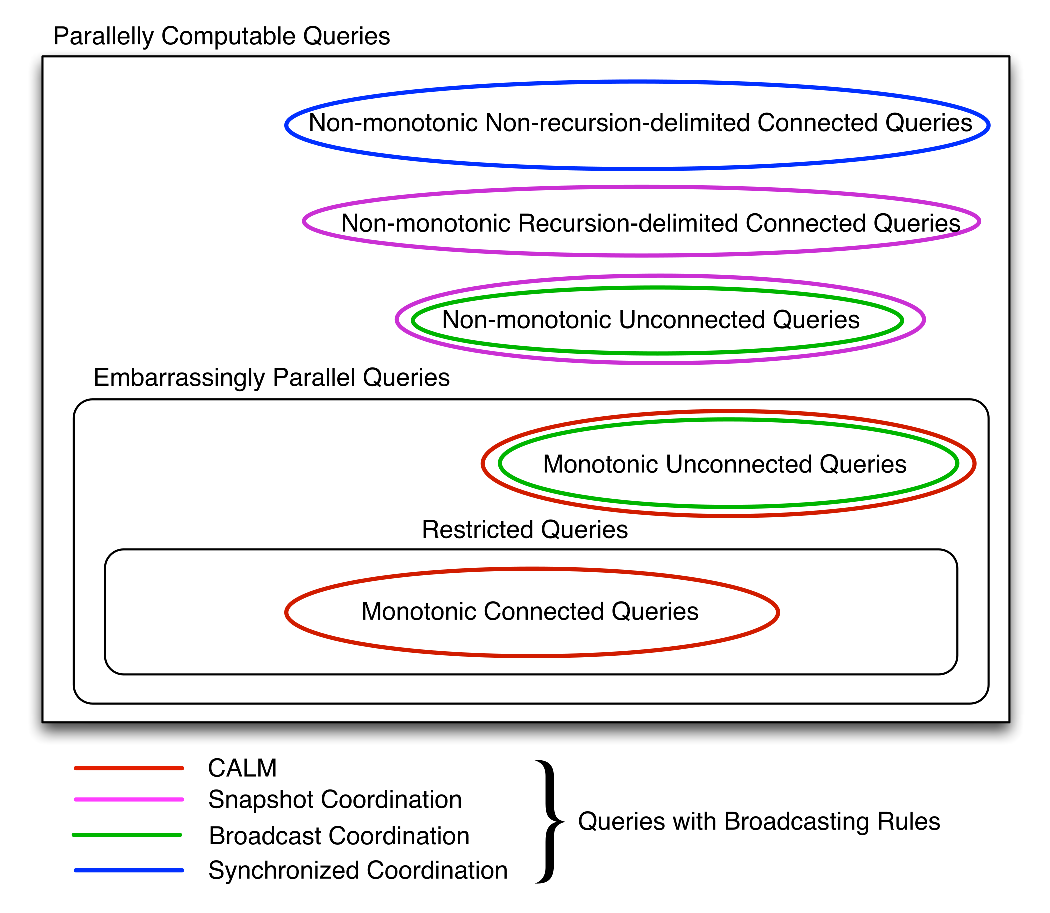}\vspace{-1ex}
\caption{Taxonomy for Parallelly Computable Queries under \emph{rsync}}
\vspace{-2ex}
\label{fig:hierarchy_queries_rsync}
\vspace{-1ex}
\end{figure}

\stitle{Remark}: From a states of knowledge perspective, in \emph{bsp} systems (respectively \emph{bsp-d} - \emph{rsync}), \emph{common knowledge} (\emph{$\delta$-com\-mon knowledge}) can be obtained by simply using broadcasting \cite{FaginHMV03}.
However, if the final outcome is returned before $\delta$-common knowledge is reached, the former was computed without coordination. 
For what concerns non-monotonic queries, the result of a query can be correctly computed only if the stability of the negated predicates is common knowledge among the nodes in the network. 
This highlights the main difference between broadcasting and snapshot/synchronized coordination: the former exists in \emph{bsp} - \emph{bsp-d} just because of the tight requirements imposed on the system; the latter are required by the actual semantics of the query.
\vspace{-2mm}



\section{Comparison with Other Work}
\label{sec:related}

In this period we are witnessing new trends such as \emph{cloud computing} and \emph{multicore processing} becoming popular. 
It is well-known that programming such architectures is  very difficult, thus 
\emph{declarative networking} has been proposed to simplify such task. 
The idea  of declarative networking is to  use  high-level, declarative languages
,  leaving to the system the burden of organizing an efficient execution plan \cite{Ameloot14}.
In this paper we propose to apply the same techniques to synchronous systems, in order to set forth the theoretical basis also for parallel datalflow optimizations.
This application was first identified by Hellerstein \cite{Hellerstein10}, who
also pointed out that a tradeoff exists between efficiency of pipelining and fault-tolerance provided by full materialization and that,
however,  the run-time should decide which of the two strategies must be  selected.
Similarly, decomposable plans were identified in the 80's to speed up the evaluation of {\sc datalog} programs through parallel execution~\cite{DBLP:conf/sigmod/WolfsonS88}.
The general pivoting technique~\cite{generalizedpivoting} implemented in BigDatalog~\cite{Shkapsky:2016:BDA:2882903.2915229} provides only a sufficient condition to determine if a program is decomposable; decomposability is in fact undecidable in general~\cite{DBLP:conf/sigmod/WolfsonO90}.
Clearly, a relation exists between decomposable programs and programs that distribute over components: every decomposable program distributes over components, while the opposite is not true. For instance, the following program distributes over components, but is not decomposable~\cite{DBLP:conf/sigmod/WolfsonO90}:\vspace{-1ex}
\begin{equation*}
\begin{split}
&Q(u, v) \leftarrow R(u, w), E(w, x), F(x, v).\\
&R(u, v) \leftarrow G(u, v).
\end{split}
\end{equation*}
%
\cite{AmelootGKN15} consider a superset of decomposable plans called \emph{parallel-correct}, where queries are parametrized by a \emph{distribution policy} and are allowed to generate not-unique facts. 

In \cite{AfratiBCP11}, the authors study how the MapReduce model can be extended with recursion. Additionally, a model is proposed suggesting that the optimal computation time can be obtained by minimizing the volume of data passed as input to each task. This work is orthogonal to our contribution: in fact, while we focus on how a property of queries (coordination freeness) could be used to optimize queries, \cite{AfratiBCP11} mainly focus on adding support for recursion to a MapReduce framework. In our {\sc datalog} implementation of~\cite{Shkapsky:2016:BDA:2882903.2915229} we also proposed a better way to support recursion in Apache Spark. From our practical experience we found that the best way to implement transitive closure is through a decomposable plan which not necessary is optimal from a data-volume perspective since the full input dataset is passed to each task in each iteration.

The fact that CALM does not hold in general in \emph{rsync} systems was first suggested in~\cite{DBLP:conf/sigmod/WolfsonO90}\footnote{More precisely, they identified that a class of non-monotonic program exists that is communication-free.} and only recently, with the advent of parallel processing systems such as MapReduce and Spark, revamped by \cite{InterlandiT15} for distributed parallel settings.
From the latter work we borrow the basic techniques we used to build the hashing transducer network model and our notion of coordination-freedom.
Our computational model merges the original transducer network model of \cite{AmelootNB13} -- representing how distributed computation is carried out by an asynchronous system -- with the BSP model of \cite{Valiant90}.
We differ from the original transducer network model both semantically -- our definition of global transition implements a synchronous and reliable communication model -- and structurally -- we have $(i)$ an input clock driving the computation
, and $(ii)$ a special environment transducer modeling everything not functionally related with the system. 
Additionally, our bounded-delay and asynchronous BSP models are somehow related to the \emph{Stale Synchronous Parallel} (SSP) and A-BSP models introduced in \cite{Cui:2014:EBS:2643634.2643639}.
We borrow the concept of environment from the multi-agent systems domain \cite{FaginHMV03}.
Although in contexts different than ours, synchronous transducer networks were also employed in \cite{FurcheGGG14,InterlandiTB13}.
Another interesting model related to ours is the \emph{Massive Parallel Model} (MP) of \cite{Koutris:2011:PEC:1989284.1989310}. 
In MP, each round is divided into three phases: the usual computation and communication phases, and a broadcasting phase. 
As we have demonstrated, in parallel systems broadcasting implements coordination, therefore MP expresses exactly those queries that require coordination in order to proceed.
Koutris et al. showed that by employing their model, a specific class of chained conjunctive queries, denoted  \emph{tall-flat}, can be computed in one round by a load-balanced algorithm. 
Conversely, if a query is not tall-flat, then every algorithm consisting of one round is not load-balanced. 
This work as well as~\cite{AmelootGKN15} focus on how to efficiently execute queries in parallel. Conversely, our main focus is on how to extend CALM over \emph{rsync} systems in order to unlock asynchronous plans. A similar investigation on efficiency is among our future plans.
The reader could be induced to believe that CALM indeed would not hold in synchronous settings, 
since systems of this kind already embed some notion of coordination. 
Indeed, \cite{DBLP:conf/wdag/Ben-ZviM10,DBLP:journals/jancl/Ben-ZviM11} prove that this is not true if a formal definition of coordination is taken into consideration, \ie coordination viewed as a particular state of knowledge required to obtain a shared agreement in a group of nodes. 

Our work is addressing a complementary domain with respect to 
\emph{Bloom} \cite{DBLP:conf/cidr/AlvaroCHM11,Alvaro:EECS-2013-133}. 
In Bloom programs, \emph{points of order} are identified: \ie code positions defining a  non-monotonic behavior that could bring inconsistent outcomes \cite{DBLP:conf/cidr/AlvaroCHM11}.
From our perspective, points of order identify where an indirect information flow exists.  
In \cite{Alvaro:EECS-2013-133}, two different coordination strategies have been identified at the basis of the cause of inconsistency: \emph{sealing} and \emph{ordering}. They are both comparable to our snapshot coordination. 
In addition, we have identified broadcasting and synchronized coordination.
Finally, note that the CALM principle in its original form is satisfied only if no node is granted access to any information on how data was originally distributed;
in this case, in fact, certain weaker forms of monotonic programs can be evaluated in a coordination-free way \cite{ZinnGL12,AmelootKNZ14}.
From our viewpoint, this is possible because  the way data is distributed is already common knowledge before the computation starts, \ie nodes already embed a notion of coordination.
In practice, using synchronous specifications, nodes are able to compute non-monotonic queries in a coordination-free way ``by construction'', without any awareness of how data was initially partitioned.
We are planning to investigate how the weaker forms of monotonicity identified in \cite{AmelootKNZ14} are related to our work, and whether a tradeoff exists between ``distribution awareness'' and ``synchronization''.

\section{Conclusions}
In this paper the CALM principle is analyzed under synchronous and reliable settings.
By exploiting CALM, in fact, we would be able to break the synchronous cage of modern parallel computation models, and provide optimizations such as pipelining and decomposability when allowed by the program logic.
This topic has recently acquired much attention because, in spite  of the increasing number of applications showing better performance (and accuracy) for asynchronous execution over synchronous one~\cite{XieCGZ15,Cui:2014:EBS:2643634.2643639,Niu:2011:HLA:2986459.2986537}, only few practical systems provide this feature as optimization~\cite{Shkapsky:2016:BDA:2882903.2915229,HanD15}.

To reach our goal we have introduced a new abstract model emulating BSP computation, and a novel interpretation of coordination with sound logical foundations in distributed knowledge reasoning.
By exploiting such techniques, we have shown that the CALM principle indeed holds also in \emph{rsync} settings, but in general only for the subclass of monotonic queries defined as connected.
Finally, we have drawn attention to a hierarchy of queries  with related coordination-patterns and we showed how our definition of coordination-freedom is related to the assumptions imposed on the behavior of the system:  our formalization generalizes  the one employed by Ameloot et al. because applicable in synchronous as well as asynchronous settings.

Our next step will be to investigate to which extent the CALM principle is satisfied when queries with \emph{aggregates} are considered; \emph{monotonic aggregation} \cite{RossS97,MazuranSZ13} has been a hot topic in databases for many years:
does a relationship between monotone computation and coordination-freedom exist also for aggregate queries? 

Finally, consider that all the queries in this paper  are formulated in some sub-language of {\sc datalog}$^\neg$.
In the last few years   {\sc datalog}$^{+-}$~\cite{CGLP10} was defined: a family of rule-based languages that extends {\sc datalog} to capture the most common ontology languages for which query answering is tractable, and provides efficiently-checkable, syntactic conditions for decidability and tractability.  We plan to study extensions of our work to (sub-languages of) {\sc datalog}$^{+-}$, in order to apply our results to semantic web settings.

%

\eat{
\todo{Theorem 6.6: in the proof you use the point that "Since each node composing
the network could end up having different (overlapping) partitions of the
initial instance, different nodes might terminate the recursive computation in
different rounds." Again, this is at the core of the CALM conjecture -- if
you don't know when other nodes will be "done", negation can be applied "too
early". Is there a way to take this nugget of reasoning and put it at the
center of the discussion, both in terms of prose and in terms of all the proof
techniques? I.e. the condition of interest is the common knowledge of a
property like local termination. We can use obliviousness. Also for null facts you need to access system relations to know that you are received all the NULL facts. With broadcast instead we can rewrite using body notation and use All for broadcasting instead of $N_H$.}
}


\bibliography{mi}

\end{document}